\documentclass[letterpaper,11pt]{article}

\usepackage{amsmath}
\usepackage{amssymb}
\usepackage{indentfirst}
\usepackage{hyperref}
\usepackage{amscd}
\usepackage{pifont}
\usepackage{amsthm}
\usepackage{charter}

\addtolength{\oddsidemargin}{-1cm}
\addtolength{\textwidth}{2cm}
\addtolength{\topmargin}{-1.8cm}
\addtolength{\textheight}{2.5cm}

\numberwithin{equation}{section}
\setlength{\parskip}{0.6ex}

\allowdisplaybreaks

\newtheorem{theorem}{Theorem}[section]
\newtheorem{lemma}[theorem]{Lemma}
\newtheorem{proposition}[theorem]{Proposition}
\newtheorem{corollary}[theorem]{Corollary}
\newtheorem{remark}[theorem]{Remark}
\newtheorem{definition}[theorem]{Definition}
\newtheorem{example}[theorem]{Example}

\title{Deformations of Courant Algebroids and Dirac Structures via Blended Structures}
\author{Xiang Ji\\The Pennsylvania State University, New Kensington}
\date{}

\begin{document}
\maketitle


\begin{abstract}
Deformations of a Courant Algebroid $(E,\langle\cdot,\cdot\rangle,\circ,\rho)$ and its Dirac subbundle $A$ have been widely considered under the assumption that the pseudo-Euclidean metric $\langle\cdot,\cdot\rangle$ is fixed. In this paper, we attack the same problem in a setting that allows $\langle\cdot,\cdot\rangle$ to  deform. Thanks to Roytenberg, a Courant  Algebroid is equivalent to a symplectic graded $Q$-manifold of degree $2$. From this viewpoint, we extend the notions of graded $Q$-manifold, DGLA and $L_\infty$-algebra all to ``blended'' version so that Poisson manifold, Lie algebroid, Courant algebroid are unified as blended $Q$-manifolds; and define a submaniold $\mathcal{A}$ of ``coisotropic type" which naturally generalizes the concepts of coisotropic submanifolds, Lie subalgebroids and Dirac subbundles.  It turns out the deformations a blended homological vector field $Q$ is controlled by a blended DGLA, and the deformations of $\mathcal{A}$ is controlled by a blended $L_\infty$-algebra. The results apply to the deformations of a Courant algebroid and
its Dirac structures, the deformations of a Poisson manifold and its coisotropic
submanifold, the deformations of a Lie algebroid and its Lie subalgebroid.
\end{abstract}

\section{Introduction}
\par Deformation theories are traditionally described using differential graded Lie algebras (DGLA, for short), with the goal of establishing the 1-1 correspondence between the deformations of a given object and the Maurer-Cartan elements of an associated DGLA. However, some deformation problems require an extension of the concept so that the failure in describing the deformations by Maurer-Cartan elements of a DGLA is measured by an infinite sequence of brackets that are intrinsically compatible. Such a notion is called an $L_\infty$-algebra, and nowadays it is widely adopted for deformations. An $L_\infty$-algebra $L$ is said to control the deformations of an object if the deformations are exactly those solutions of the Maurer-Cartan equation of $L$. Therefore, given a deformation problem in consideration, a routine task is to construct an $L_\infty$-algebra and show it controls the deformations.

\par The motivating problem of this paper is the deformation problems of a Courant algebroid and its Dirac structures.
Courant bracket was first studied by T. Courant (\cite{c}), and was later formalized by Z. Liu, A. Weinstein and P. Xu to study the structure of Lie bialgebroids. Roughly speaking, a Courant algebroid is a vector bundle $E$ equipped with a pseudo-inner product $\langle\cdot,\cdot\rangle$, an anchor $\rho$ from $E$ to the tangent bundle $TM$ of the base manifold $M$, and a bracket $\circ$ on the space $\Gamma(E)$ that fails to be a Lie bracket. The failure is leveraged between the loss of antisymmetry and the loss of Jacobi identity. A Dirac structure is an involutive Lagrangian subbunle $A$ of $E$. It lends great power in unifying geometric structures such as symplectic, Poisson, complex and generalized complex structures. 

\par Due to the wide applications of Courant algebroid and Dirac structure, their deformation problems have been broadly considered and kept a topic attracting interests. In \cite{lwx}, Z. Liu, A. Weinstein and P. Xu first construct a DGLA $(\Omega_A,d_A,[\cdot,\cdot]_{L})$ associated to the Dirac structure $A$ assuming the existence of a transversal Dirac structure $L$, and use it to control the small deformations of $A$. Later in \cite{kw1}, formal deformations of a Dirac structure are considered. Concerning Courant algebroid, from pure algebraic viewpoint, the deformations are described via Poisson algebra in \cite{kw2} under the assumption that the pseudo-inner product is fixed. With the same assumption,  in \cite{fz}, via D. Roytenburg's derived brackets viewpoint, a DGLA is constructed to control the deformations of the bracket $\circ$ and the anchor $\rho$; upon the selection of a complement $B$, an $L_\infty$-algebra $V_{A,B}$ is constructed to control the deformations of $A$; the two are combined to control their simultaneous deformations. Very recently, the uniqueness of the DGLA $(\Omega_A,d_A,[\cdot,\cdot]_{L})$ in \cite{lwx} and the $L_\infty$-algebra $V_{A,B}$ in \cite{fz} is proved in \cite{gms}. To be specific, a different selection of the complement results in an $L_\infty$-isomorphic DGLA or $L_\infty$-algebra.

\par In this paper, we attempt to attack the deformation problem of Courant algebroid and Dirac structure in the most general setup: 
\par 1) the pseudo-inner product $\langle\cdot,\cdot\rangle$ is deformed with the bracket $\circ$ and the anchor $\rho$;
\par 2) a deformed subbundle of $A$ is not assumed to be Lagrangian in advance.\\
Thanks to D. Roytenberg (\cite{r2}), Courant algebroids are in 1-1 correspondence with symplectic graded $Q$-manifolds of degree $2$, i.e. a graded vector bundle $\mathcal{E}$ equipped with a symplectic structure $\Omega$ of degree $2$ and an involutive vector field $X_Q$ of degree $1$ that preserves $\Omega$. Considering the Poisson structure $\pi$ inverse to $\Omega$, by carefully selecting the degree on multi-vector fields, $\pi$ is of degree $-1$. The pair $(\pi,X_Q)$ is equivalent to the Courant algebroid structure $(\langle\cdot,\cdot\rangle,\circ,\rho)$ on $E$ and satisfies
\[
\begin{cases}
	[\pi,\pi]_\text{SN} = 0,\\
    [\pi,X_Q]_\text{SN} = 0,\\
    [X_Q,X_Q]_\text{SN} = 0.
\end{cases}
\]
Here $[\cdot,\cdot]_\text{SN}$ is the Schouten-Nijenhuis bracket of multi-vector fields on graded manifolds. Let $\mathfrak{X}(\mathcal{E})$ be the collection of multi-vector fields on $\mathcal{E}$. It is clear that the operator $[X_Q,\cdot]$ is a differential of the DGLA $(\mathfrak{X}(\mathcal{E}),[\cdot,\cdot]_\text{SN})$. However, although the composition of the operator $[\pi,\cdot]$ is zero, it is not a differential since its degree is $-1$. In order to combine them together, we attempt to add $\pi$ and $X_Q$ into a blended structure $\pi+X_Q$. Following this idea, the notion of DGLA is extended to a 'blended' version, which is then used to control the deformations of Courant algebroid.

\par On the other hand, the vector field $\pi+X_Q$ satisfying $[\pi+X_Q,\pi+X_Q]_\text{SN}=0$, which is analogous to a homological vector field. With this idea in mind, we immediately notice a list of geometric structures that fall into the same scheme.
\begin{center}
\begin{tabular}{c|c|c}
Geometric structure & Graded structure & Characteristic submanifold\\
\hline\hline
Poisson manifold & ? & coisotropic submanifold\\
\hline
Lie algebroid & $Q$-manifold of degree $1$ & Lie subalgebroid\\
\hline
Courant algebroid & symplectic $Q$-manifold of degree $2$ & Dirac structure\\
\hline
VB-Courant algebroid(\cite{l}) & $Q$-manifold of degree $2$ & ?\\
\hline
\end{tabular}
\end{center}
It seems one can unify all these structures by properly generalizing the $Q$-structure. Considering that Poisson structure is a bi-vector field, we first allow homological vector field to be a multi-vector field. Secondly, in Courant algebroid the Poisson structure is of degree $-1$, we continue to extend the notion so that a homological vector field contains components of both degree $1$ and $-1$. Such a homological vector field is called blended, and consequently, the geometric spaces in the above table are unified to blended $Q$-manifolds.

\par It follows that the characteristic submanifolds such as coisotropic submanifold, Lie subalgebroid and Dirac structures may have a uniform definition. Motived by the definition of a coisotropic submanifold, both Lie subalgebroid and Dirac structure satisfy that the blended homological vector field is annihilated by the conormal bundle of the submanifold. Using this condition, we define submanifolds ``of coisotropic type" in a blended $Q$-manifold, and unify all the characteristic submanifolds in the table. This also enlightens us that the deformation problems of those submanifolds could be solved in the general setup once for all.

\par Following the construction of an $L_\infty$-algebra associated with a Lie-subalgebroid, a `blended'-$L_\infty$-algebra $L_\mathcal{A}$ associated with a submanifold $\mathcal{A}$ of ``coisotropic type'' is constructed similarly. The structure maps of $L_\mathcal{A}$ are given by the higher derived brackets of the homological vector field $Q$, and under necessary regularity conditions, if a deformation of $\mathcal{A}$ is of coisotropic type then it is a Mauer-Cartan element of $L$. The deformations of coisotropic submanifolds, Lie subalgebroids and Dirac structures can be solved as applications.

\par The structure of the paper is as follows. We extend the notion of DGLA and $L_\infty$-algebra to blended version in Section 1, and extend the notion of graded $Q$-manifold to blended $Q$-manifold in Section 2 with submanifolds of ``coisotropic type'' defined. In Section 3, the deformations of a general blended $Q$-structure and the deformations of a submanifold of coisotropic type are attacked. It turns out that the former deformations are controlled by a blended DGLA, while the latter ones are controlled by a blended $L_\infty$-algebra. When the two structures are not ``blended'', they can be combined to control simultaneous deformations. Finally, in Section 4, the deformation problems of a Courant algebroid and its Dirac structure are solved as applications of the results in Section 3.

\section{Blended $L_\infty$-algebras}
\par For the purpose of describing our results on deformations in latter sections, we extend the notions of differential graded Lie algebra (DGLA) and $L_\infty$-algebra both to some blended version. We also show that Voronov's higher derived brackets construction of $L_\infty$-algebra applies directly to this blended version.

\par A $\mathbb{Z}$-graded vector space $V$ over a field $\mathsf{k}$ is a direct sum $\oplus_{k\in\mathbb{Z}}V_k$ of $\mathsf{k}$-vector spaces. An element $v$ is homogeneous if $v\in V_k$ for some $k$, and its degree is $|v|=k$. A linear subspace of $V$ is a subset $V'\subseteq V$, such that $V'_k=V'\cap V_k$ is subspace of $V_k$ for all $k$. The direct sum and tensor product of two graded vector spaces $V$ and $W$ are also graded vector spaces:
\[
	(V\oplus W)_k=V_k\oplus W_k,\qquad\qquad (V\otimes W)_k=\oplus_{i+j=k} V_i\otimes W_j.
\]
Particular, $T(V)=\oplus_{n=0}^{\infty}(\otimes^n V)$ is a graded algebra with respect to the tensor product, called the tensor algebra of $V$. The symmetric algebra $S(V)$ of $V$ is the quotient of $T(V)$ by the ideal generated by $\{ v\otimes w - (-1)^{|v||w|}w\otimes v\, |\, v,w\in V \text{ homogeneous}\}$. The image of $v_1\otimes \cdots \otimes v_n$ under the natural projection $T(V)\to S(V)$ is denoted by $v_1\odot\cdots\odot v_n$.

\par For any $n\in\mathbb{Z}$, the $n$-th suspension of $V$, denoted by $V[n]$, is a graded vector space with grading  $(V[n])_k=V_{k+n}$. A linear map $f:V\to W$ between graded vector spaces is a collection of linear maps $\{f_k:V_k\to W_k\}_{k\in\mathbb{Z}}$. A linear map of degree $n$ from $V$ to $W$ is a linear map $g:V\to W[n]$. Such a $g$ is homogeneous, and its degree is denoted by $|g|$. For any $v\in V$, we use $v[n]$ to denote the corresponding element in $V[n]$.

\par A linear map $m_n:\otimes^n V\to V$ is graded symmetric if 
\[
	m_n(v_1\otimes \cdots \otimes v_{i+1}\otimes v_i \otimes \cdots \otimes v_n)=(-1)^{|v_i||v_{i+1}|}m_n(v_1\otimes \cdots \otimes v_i\otimes v_{i+1} \otimes \cdots \otimes v_n)
\]
for any $v_1,\cdots,v_n\in V$ homogeneous and $i=1,\cdots, n-1$. For a general $\tau$ in the symmetric group $S_n$ of $n$ letters, the Koszul sign $e(\tau)$ is introduced so that 
\[
	m_n(v_{\tau(1)}\otimes \cdots \otimes v_{\tau(n)})=e(\tau)m_n(v_1\otimes \cdots \otimes v_n).
\]
Be alerted that $e(\tau)$ also depends on $|v_1|,\cdots,|v_n|$. A linear map $l_n:\otimes^n V\to V$ is graded antisymmetric if 
\[
	l_n(v_{\tau(1)}\otimes \cdots \otimes v_{\tau(n)})=(-1)^\tau e(\tau)l_n(v_1\otimes \cdots \otimes v_n).
\]
Here, $(-1)^\tau$ is $1$ if $\tau$ is even, and $-1$ if $\tau$ is odd. 

\begin{definition}
\label{def:DGLA}
A \emph{graded Lie algebra} is a $\mathbb{Z}$-graded vector space $V$ equipped with a graded antisymmetric bilinear bracket $[\cdot,\cdot]:V\otimes V\to V$ of degree $0$ that satisfies the Jacobi identity, i.e.
\[
	[a,[b,c]]=[[a,b],c]+(-1)^{|a||b|}[b,[a,c]],\qquad \forall\, a,b,c\in V\text{ homogeneous}.
\]
A \emph{derivation} of degree $n$ is a linear map $d:V\to V[n]$ satisfying
\[
	d[a,b]=[da,b]+(-1)^{n|a|}[a,db].
\]
A derivation $d^+$ of degree $+1$ is called a \emph{positive differential} if $(d^+)^2=0$. Analogously, a derivation $d^-$ of degree $-1$ is called a \emph{negative differential} if $(d^-)^2=0$. Naturally, the triples $(V,[\cdot,\cdot],d^+)$ and $(V,[\cdot,\cdot],d^-)$ are called \emph{positive DGLA} and \emph{negative DGLA}, respectively. The quadruple $(V,[\cdot,\cdot],d^{+},d^-)$ is called a \emph{blended DGLA} if $d^+$ and $d^-$ satisfy the compatible condition
\[
	d^+d^-+d^-d^+=0.
\]
In this case, $d=d^++d^-$ is called a \emph{blended differential}.
\end{definition}
\begin{remark}
A positive differential (or a positive DGLA respectively) is just a differential (a DGLA) in the usual sense, and the adjective ``positive'' can be omitted without ambiguity.
\end{remark}
\par\noindent From Definition \ref{def:DGLA}, two derivations $d^+$ and $d^-$ make $(V,[\cdot,\cdot])$ into a blended DGLA if and only if they satisfy $|d^+|=1$, $|d^-|=-1$ and
\begin{eqnarray*}
	(d^++d^-)^2=0.
\end{eqnarray*}
\begin{definition}
An element $v$ of $(V,[\cdot,\cdot],d^+,d^-)$ is a \emph{Maurer-Cartan element} if $v$ is of the form $v_++v_-$ with $v_+\in V_1$ and $v_-\in V_{-1}$, and satisfies the \emph{Maurer-Cartan equation}
\begin{eqnarray*}
	dv+\dfrac{1}{2}[v,v]=0\qquad (d=d^++d^-).\label{eq:mc_bdgla}
\end{eqnarray*}
\end{definition}
\par\noindent By counting degrees, the Maurer-Cartan equation above decomposes into a system of equations:
\begin{eqnarray*}
\begin{cases}
d^+v_++\frac{1}{2}[v_+,v_+] = 0,\\
d^+v_-+d^-v_++[v_+,v_-] = 0,\\
d^-v_-+\frac{1}{2}[v_-,v_-] = 0.
\end{cases}
\end{eqnarray*}

\par Next, let us recall the definition of an $L_\infty$-algebra and extend it to a blended version. Let $S_{j,n-j}$ be the collection of $(j,n-j)$ shuffles in $S_n$, i.e. $\tau\in S_{j,n-j}$ if and only if $\tau(1)<\cdots<\tau(j)$ and $\tau(j+1)<\cdots<\tau(n)$.

\begin{definition}
\label{def:l_infty}
An $L_\infty$-algebra is a $\mathbb{Z}$-graded vector space $V$ together with a family of graded symmetric linear maps $\{m_k:\otimes^k V\to V[1]\}_{k\ge 0}$, such that the family of \emph{Jacobiators} $\{J_n:\otimes^n V\to V[2]\}_{n\ge 1}$, defined by
\begin{align*}
	J_n(v_1,\cdots,v_n)=\sum_{i+j=n+1} \sum_{\tau \in S_{j,n-j}}e(\tau)m_i(m_j(v_{\tau(1)},\cdots,v_{\tau(j)}),v_{\tau(j+1)},\cdots,v_{\tau(n)}),\label{eq:jacobiator}
\end{align*}
vanishes identically. The $L_\infty$-algebra $V$ is \emph{flat} if $m_0=0$. A linear subspace $V'\subset V$ is an $L_\infty$-subalgebra if all $m_k$'s are closed on $V'$, i.e. $m_k(\otimes^k V')\subseteq V'[1]$ for all $k$. In this case, $(V',\{m_k|_{V'}\}_{k\ge 0})$ is itself an $L_\infty$-algebra.
\end{definition}

\begin{remark}
\label{re:l_infty1}
The notion of $L_\infty$-algebra in Definition \ref{def:l_infty} is traditionally referred to as an $L_\infty[1]$-algebra. Instead, an $L_\infty$-structure on $V$ is a collection of graded antisymmetric linear maps $\{l_k:\otimes^k V\to V[2-k]\}_{k\ge 0}$, satisfying
\begin{align*}
	\sum_{i+j=n+1} (-1)^{i(j-1)}\sum_{\tau \in S_{j,n-j}}e(\tau)l_i(l_j(v_{\tau(1)},\cdots,v_{\tau(j)}),v_{\tau(j+1)},\cdots,v_{\tau(n)})=0.
\end{align*}
A DGLA $(V,[\cdot,\cdot],d)$ is a flat $L_\infty$-algebra in this sense, by setting $l_1=d$, $l_2(\cdot,\cdot)=[\cdot,\cdot]$ and $l_k=0$ for all $k\ge 3$. An $L_\infty$-structure on $V$ is equivalent to an $L_\infty[1]$-structure on $V[1]$ by the ``\emph{shift-isomorphism}'' sh$:\otimes^k V[1] \to (\otimes^k V)[k]$
\[
    v_1[1]\otimes\cdots\otimes v_k[1] \mapsto (-1)^{\sum_{i=1}^k (k-i)|v_i|}v_1\otimes \cdots \otimes v_k.
\]
Readers may find more details in \cite{ls}. In the rest of the paper, we shall work with $L_\infty[1]$-algebras exclusively, and abuse the language to call them $L_\infty$-algebras.
\end{remark}

\par Since in Definition \ref{def:l_infty} the map $m_1$ functions as a differential that makes $V$ into a cochain complex, the $L_\infty$-algebra defined in this way is viewed as cohomological analogue. The homological analogue of $L_\infty$-algebra is defined in a similarly way, with the distinction that all structure maps $m_k$ are of degree $-1$ instead of $1$. To distinguish the two types of $L_\infty$-algebras, we shall call the $L_\infty$-algebra in Definition \ref{def:l_infty} a \emph{positive $L_\infty$-algebra}, and the homological analogue a \emph{negative $L_\infty$-algebra}. As before, the adjective ``positive'' can be omitted. A blended $L_\infty$-algebra is a compatible combination of the two analogues of $L_\infty$-algebras.

\begin{definition}
\label{def:blended_l_infty}
A blended $L_\infty$-algebra is a $\mathbb{Z}$-graded vector space $V$ equipped with two families of graded symmetric linear maps $\{m_k:\otimes^k V\to V[1]\}_{k\ge 0}$ and $\{n_k:\otimes^k V\to V[-1]\}_{k\ge 0}$, such that the Jacobiators of the family $\{m_k+n_k\}$ vanish.
\end{definition}
\par\noindent Equivalently, $(V,\{m_k\},\{n_k\})$ is a blended $L_\infty$-algebra if and only if $(V,\{m_k\})$ is a positive $L_\infty$-algebra, $(V,\{n_k\})$ is a negative $L_\infty$-algebra, and $\{m_k\}$ and $\{n_k\}$ are compatible in the sense that
\begin{eqnarray}
	\sum_{i+j=n+1} \sum_{\tau \in S_{j,n-j}}e(\tau)\; [m_i(n_j(v_{\tau(1)},\cdots,v_{\tau(j)}),v_{\tau(j+1)},\cdots,v_{\tau(n)}) & & \nonumber \\
    +n_i(m_j(v_{\tau(1)},\cdots,v_{\tau(j)}),v_{\tau(j+1)},\cdots,v_{\tau(n)}) ] & = & 0,
\end{eqnarray}
for any $n\in\mathbb{N}$ and $v_1,\cdots, v_n\in V$. It is known that a positive $L_\infty$-structure on $V$ is equivalent to a codifferential $Q$ of the graded coalgebra $(S(V),\Delta_S)$ (\cite{ls}), in which $\Delta_S$ is the coproduct defined by
\[
	\Delta_S (v_1\odot \cdots \odot v_n) = \sum_{i=0}^n (v_1\odot \cdots\odot v_i)\otimes (v_{i+1}\odot \cdots \odot v_n).
\]
One may easily extend this viewpoint to negative and blended $L_\infty$-algebras by changing $Q$ to a negative or blended codifferential.

\begin{definition}
A \emph{Maurer-Cartan element} of $(V,\{m_k\},\{n_k\})$ is an element $v$ of degree $0$ that satisfies the \emph{Maurer-Cartan equation}
\begin{align}
	\sum_{k=0}^{\infty}\frac{1}{k!} (m_k+n_k)(v,\cdots,v)=0.\label{eq:mc_blinfty}
\end{align}
\end{definition}
\par\noindent A blended $L_\infty$-algebra degenerates to a positive $L_\infty$-algebras if the family $\{n_k\}$ vanishes. Consequently, Maurer-Cartan elements of a positive $L_\infty$-algebra should satisfy
\[
	\sum_{k=0}^{\infty}\frac{1}{k!} m_k(v,\cdots,v)=0.
\]
To work with Maurer-Cartan elements, convergence of the Maurer-Cartan equation needs to be ensured. We shall discuss the conditions required for convergence when we meet the problem later in this paper.

\par Higher derived brackets are invented to construct $L_\infty$-algebras in \cite{v1}. The idea also works for blended $L_\infty$-algebra, and we shall state the construction in this setup.
\begin{definition}
A V-algebra is a graded Lie algebra $(V,[\cdot,\cdot])$ that is the direct sum of an abelian Lie subalgebra $\mathfrak{a}$ and a Lie subalgebra $\mathfrak{b}$.
\end{definition}
\par\noindent Denote the projection $V\to \mathfrak{a}$ by $P$, then $\mathfrak{b}=\text{ker}(P)$. The condition that $\mathfrak{b}$ is a Lie subalgebra is equivalent to
\[
	P[x,y]=P[Px,y]+P[x,Py],\quad \forall x,y\in V.
\]
Therefore, a V-algebra is characterized by the quadruple $(V,[\cdot,\cdot],\mathfrak{a},P)$. An element $\Delta\in V$ is called a \emph{positive (negative, or blended, respectively) Maurer-Cartan element} if $[\Delta,\cdot]=0$ is a positive (negative, or blended) differential of $(V,[\cdot,\cdot])$.

\begin{theorem}[\cite{v1}]
\label{thm:vo}
Let $(V,[\cdot,\cdot],\mathfrak{a},P)$ be a V-algebra. For any blended Maurer-Cartan element $\Delta\in V$, the associated family of \emph{higher derived brackets} $\{m^k_\Delta: \otimes^k\mathfrak{a}\to \mathfrak{a}\}_{k\ge 1}$, defined by
\begin{eqnarray*}
	&&m_\Delta^0=P(\Delta),\\
	&&m_\Delta^k (a_1,\cdots,a_k)=P[\cdots[[\Delta,a_1],a_2],\cdots,a_k],
\end{eqnarray*}
makes $\mathfrak{a}$ into a blended $L_\infty$-algebra, denoted by $\mathfrak{a}_\Delta^P$. Moreover, $\mathfrak{a}_\Delta^P$ is flat if $P(\Delta)=0$.
\end{theorem}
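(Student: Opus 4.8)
The plan is to reduce the blended statement to Voronov's original theorem applied twice and then verify the cross-compatibility by hand. Recall that a blended Maurer-Cartan element $\Delta$ decomposes by degree as $\Delta = \Delta_+ + \Delta_-$ with $|\Delta_+| = 1$, $|\Delta_-| = -1$, and the condition that $[\Delta,\cdot]$ squares to zero unwinds — exactly as in the remark following Definition \ref{def:DGLA} — into the three equations $[\Delta_+,\Delta_+] = 0$, $[\Delta_+,\Delta_-] + [\Delta_-,\Delta_+] = 0$, i.e. $2[\Delta_+,\Delta_-] = 0$ by antisymmetry of the degree-$0$ bracket on odd elements, and $[\Delta_-,\Delta_-] = 0$. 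In particular $\Delta_+$ and $\Delta_-$ are each (positive, resp. negative) Maurer-Cartan elements of the $V$-algebra, and $[\Delta_+,\Delta_-] = 0$. Since the higher derived bracket construction is linear in $\Delta$, we get $m_\Delta^k = m_{\Delta_+}^k + m_{\Delta_-}^k$ for every $k$, and by degree bookkeeping $m_{\Delta_+}^k$ lands in $\mathfrak{a}[1]$ while $m_{\Delta_-}^k$ lands in $\mathfrak{a}[-1]$. So setting $m_k := m_{\Delta_+}^k$ and $n_k := m_{\Delta_-}^k$ gives exactly the two families required by Definition \ref{def:blended_l_infty}.

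First I would invoke Voronov's theorem (Theorem \ref{thm:vo}, as stated in \cite{v1}) directly for $\Delta_+$: since $\Delta_+$ is a positive Maurer-Cartan element of $(V,[\cdot,\cdot],\mathfrak{a},P)$, the brackets $\{m_k\} = \{m_{\Delta_+}^k\}$ form a positive $L_\infty$-algebra structure on $\mathfrak{a}$. The same theorem applied to $\Delta_-$ — the negative case being the mirror image under reversing the sign convention on degrees — yields that $\{n_k\} = \{m_{\Delta_-}^k\}$ is a negative $L_\infty$-structure on $\mathfrak{a}$. (If one prefers not to cite the negative case, it follows from the positive case by regrading $V$ via $V_k \mapsto V_{-k}$, which turns $\Delta_-$ into a positive Maurer-Cartan element and swaps the two notions; one only needs to check this regrading is compatible with the decomposition $\mathfrak{a} \oplus \mathfrak{b}$ and with $P$, which is immediate since those are defined without reference to the grading.)

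The remaining point is the cross-compatibility identity (1) of the excerpt, relating the two families. Here is where I expect the real work to sit, though I anticipate it is routine rather than deep. The cleanest route is the coalgebra/cochain reformulation: a positive $L_\infty$-structure is a degree-$1$ codifferential $D_+$ on $S(\mathfrak{a})$, and Voronov's construction identifies $D_+$ as (the corestriction along $P$ of) the operator $\exp(\mathrm{ad}_{-})\,[\Delta_+,-]$, or more precisely the coderivation whose Taylor coefficients are the $m_{\Delta_+}^k$. Doing the same for $\Delta_-$ gives a degree-$(-1)$ coderivation $D_-$, and the assertion that $(\{m_k\},\{n_k\})$ is a blended $L_\infty$-algebra is precisely that $(D_+ + D_-)^2 = 0$ on $S(\mathfrak{a})$, which splits as $D_+^2 = 0$, $D_-^2 = 0$, and $D_+ D_- + D_- D_+ = 0$. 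The first two are the two applications of Voronov's theorem already done; the last is the cross term. To get it, I would track through Voronov's proof that the assignment $\Delta \mapsto D_\Delta$ (the "big bracket to codifferential" map built from $P$ and $[\cdot,\cdot]$) sends $[\Delta,\Delta']$ to an expression controlling $D_\Delta D_{\Delta'} + D_{\Delta'} D_\Delta$ — essentially the statement that this assignment is a morphism of the relevant graded Lie structures, so that $D_{[\Delta_+,\Delta_-]}$ measures the anticommutator $D_+ D_- + D_- D_+$ up to the usual sign. Since $[\Delta_+,\Delta_-] = 0$ from the blended Maurer-Cartan equation, the anticommutator vanishes, which is exactly identity (1). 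The main obstacle is thus purely bookkeeping: verifying that Voronov's construction is natural enough in $\Delta$ that the polarization $[\Delta_+,\Delta_-]$ of the quadratic relation $(D_\Delta)^2 \leftrightarrow [\Delta,\Delta]$ behaves as expected — i.e. that no extra terms appear beyond those governed by $[\Delta_+,\Delta_-]$. I would handle this either by re-examining the induction in \cite{v1} (which already proves $D_\Delta^2 = 0$ from $[\Delta,\Delta] = 0$ for a general MC element, and whose bilinear polarization gives the mixed identity for free), or, failing a clean citation, by the regrading trick above: apply Voronov's theorem once to the single Maurer-Cartan element $\Delta = \Delta_+ + \Delta_-$ of $V$ — but now in the $V$-algebra whose grading has been coarsened to $\mathbb{Z}/2$ (or left as $\mathbb{Z}$ with $\Delta$ genuinely of mixed degree, reading off the components afterward) — and then separate the output by the $\mathbb{Z}$-degree of the brackets, which automatically produces the pair $(\{m_k\},\{n_k\})$ together with all compatibility relations at once.

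Finally, the flatness claim is immediate: $m_0 + n_0 = m_\Delta^0 = P(\Delta)$ by definition, so if $P(\Delta) = 0$ then $m_0 = P(\Delta_+) = 0$ and $n_0 = P(\Delta_-) = 0$ (these being the degree components of $P(\Delta) = 0$), hence the blended $L_\infty$-algebra $\mathfrak{a}_\Delta^P$ is flat in the sense of Definition \ref{def:blended_l_infty}.
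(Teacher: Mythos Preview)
Your proposal is correct, but it takes a longer route than the paper's own one-line argument. The paper simply observes (without further detail) that the Jacobiators $J_n$ of the family $\{m_\Delta^k\}$ coincide with the higher derived brackets associated with $\tfrac{1}{2}[\Delta,\Delta]$; since a blended Maurer-Cartan element has $[\Delta,\Delta]=0$, all Jacobiators vanish at once, and the blended $L_\infty$-structure (with its positive, negative, and cross-compatibility identities) follows immediately by reading off degree components. Your decomposition $\Delta=\Delta_++\Delta_-$, two separate applications of Voronov, and subsequent verification of the cross term via polarization of the quadratic identity $D_\Delta^2 \leftrightarrow [\Delta,\Delta]$ ultimately rests on the very same fact---indeed, the polarization you invoke is precisely the statement that the mixed Jacobiators equal the derived brackets of $[\Delta_+,\Delta_-]$. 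So you arrive at the same place, just after a detour. Your final alternative (apply Voronov once to the full $\Delta$ and then separate by $\mathbb{Z}$-degree) \emph{is} the paper's approach, and is the cleanest path: no regrading or $\mathbb{Z}/2$ coarsening is needed, since Voronov's computation of the Jacobiators is purely algebraic in $\Delta$ and never requires $\Delta$ to be homogeneous.
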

\par\noindent The theorem can be proved by showing that the Jacobiators of the family $\{m_k^\Delta\}_{k\ge 1}$ coincide with the higher derived brackets associated with $\frac{1}{2}[\Delta,\Delta]$.

\par If $\mathfrak{a}_{\Delta}^P$ is positive, there is also an $L_\infty$-structure on $V[1]\oplus \mathfrak{a}$ (\cite{v2}) that can be used to control simultaneous deformations (\cite{cs}). Due to the mismatch of the degrees in the DGLA $(V,[\cdot,\cdot],[\Delta,\cdot])$ and in the $L_\infty$-algebra $\mathfrak{a}_\Delta^P$, this construction does not generalize to blended $L_\infty$-algebras constructed via higher derived brackets.
\begin{theorem}[\cite{v2,cs}]
\label{thm:cs}
Given a positive Maurer-Cartan element $\Delta$ of the V-algebra $(V,[\cdot,\cdot],\mathfrak{a},P)$, the direct sum $V[1]\oplus\mathfrak{a}$ is an $L_\infty$-algebra, denoted by $(V[1]\oplus\mathfrak{a})_\Delta^P$, whose structure maps are given by
\begin{align}
	& m_0=P(\Delta),\label{eq:comb_l_inf1}\\
	& m_1(v[1],a)=(-[\Delta,v][1],P(v+[\Delta,a])),\label{eq:comb_l_inf2}\\
	& m_2(v[1],w[1])=(-1)^{|v|}[v,w][1],\label{eq:comb_l_inf3}\\
	& m_k(v[1],a_1,\cdots,a_{k-1})=P[\cdots[[v,a_1],a_2],\cdots,a_{k-1}],\quad k\ge 2,\label{eq:comb_l_inf4}\\
	& m_k(a_1,\cdots,a_k)=P[\cdots[[\Delta,a_1],a_2],\cdots,a_k],\quad k\ge 2,\label{eq:comb_l_inf5}
\end{align}
for any $a,a_1,\cdots,a_k\in\mathfrak{a}$ and $v,w\in V$ homogeneous, and the structure maps at all other combinations are zero. The $L_\infty$-algebra $(V[1]\oplus\mathfrak{a})_\Delta^P$ is flat if $\Delta\in\text{ker}(P)$. Furthermore, $(\tilde{\Delta},a)\in V_1\oplus\mathfrak{a}_0$ is a Maurer-Cartan element if and only if
\[
	\begin{cases}
    	[\Delta+\tilde{\Delta},\Delta+\tilde{\Delta}]=0, \text{ and}\\
        a\text{ is a Maurer-Cartan element of }\mathfrak{a}_{\Delta+\tilde{\Delta}}^P.
	\end{cases}
\]
\end{theorem}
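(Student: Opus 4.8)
The plan is to establish Theorem~\ref{thm:cs} by reducing its three assertions---that the maps \eqref{eq:comb_l_inf1}--\eqref{eq:comb_l_inf5} satisfy the $L_\infty$ relations, that flatness holds precisely when $\Delta\in\ker(P)$, and that the Maurer-Cartan characterization holds---to the already-available machinery of Theorem~\ref{thm:vo}. Since the present excerpt records Theorem~\ref{thm:cs} as a positive ($L_\infty[1]$) statement, the cleanest route is to exhibit $V[1]\oplus\mathfrak{a}$ as the degree-shifted abelian part of a larger $V$-algebra to which Voronov's construction applies. Concretely, first I would form the graded Lie algebra $\widehat{V}$ whose underlying space is $V[1]\oplus V$ (or, more precisely, the semidirect-type bracket on $V\ltimes V[1]$ in which the second copy is an abelian ideal acted on by the adjoint representation of the first, with the sign $(-1)^{|v|}$ in \eqref{eq:comb_l_inf3} coming from the suspension). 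One checks $\widehat{V}$ is a graded Lie algebra: antisymmetry and Jacobi for the $V$-part are inherited, the mixed brackets reduce to the derivation property of $[\Delta,\cdot]$ once $\Delta$ is specialized, and the $[V[1],V[1]]$-bracket vanishes by construction. Then $\widehat{\mathfrak{a}} := V[1]\oplus\mathfrak{a}$ is an abelian subalgebra (the $V[1]$-part is abelian in $\widehat V$, $\mathfrak{a}$ is abelian in $V$, and the cross bracket lands in $V[1]$ but is killed by the projection), and the projection $\widehat{P}:=\mathrm{id}_{V[1]}\oplus P$ makes $(\widehat V,[\cdot,\cdot],\widehat{\mathfrak a},\widehat P)$ into a $V$-algebra because $\ker\widehat P = \{0\}\oplus\ker P = \{0\}\oplus\mathfrak b$ is a subalgebra of $\widehat V$.

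The second step is to identify the Maurer-Cartan element of $\widehat V$ that produces exactly the structure maps in the statement. The natural candidate is $\widehat\Delta := \Delta + (\text{the canonical degree-}1\text{ ``shift'' element})$, i.e. the element of $\widehat V$ whose $V$-component is $\Delta$ and whose $V[1]$-component encodes the identity map $v\mapsto v[1]$ appearing in \eqref{eq:comb_l_inf2}. One verifies $[\widehat\Delta,\widehat\Delta]=0$: the $V\otimes V$ part is $[\Delta,\Delta]=0$ since $\Delta$ is a positive Maurer-Cartan element, the $V\otimes V[1]$ part is $[\Delta,\mathrm{sh}] \pm \mathrm{sh}$ and vanishes by the chain-rule/normalization of the shift, and the $V[1]\otimes V[1]$ part is zero. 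Then Theorem~\ref{thm:vo} applied to $(\widehat V,\widehat{\mathfrak a},\widehat P,\widehat\Delta)$ yields an $L_\infty$-structure on $\widehat{\mathfrak a}=V[1]\oplus\mathfrak a$ whose higher derived brackets are
$m_k(\xi_1,\dots,\xi_k) = \widehat P[\cdots[[\widehat\Delta,\xi_1],\xi_2],\cdots,\xi_k]$.
Expanding $\xi_i = (v_i[1], a_i)$ and using linearity and the two-block structure of the bracket, this iterated bracket collapses to the five families listed: the purely $\mathfrak a$-arguments reproduce \eqref{eq:comb_l_inf5} (these are literally the old $m_\Delta^k$), one $V[1]$-argument together with $\mathfrak a$-arguments reproduces \eqref{eq:comb_l_inf4} after the shift element is commuted away and only $[v,\cdot]$ survives, two $V[1]$-arguments give \eqref{eq:comb_l_inf3} from the single surviving $[V[1],V[1]]$-free term, and the low-arity cases $k=0,1$ give \eqref{eq:comb_l_inf1}--\eqref{eq:comb_l_inf2} directly from $\widehat P(\widehat\Delta)$ and $\widehat P[\widehat\Delta,-]$. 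Flatness is immediate: $m_0 = \widehat P(\widehat\Delta) = P(\Delta)$ (the shift part lies in $\ker\widehat P$), so $m_0=0$ iff $\Delta\in\ker(P)$.

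For the final assertion, a general element of $\widehat{\mathfrak a}$ in degree $0$ has the form $\xi = (\tilde\Delta[1], a)$ with $\tilde\Delta\in V_1$ and $a\in\mathfrak a_0$; I would feed this into the Maurer-Cartan equation $\sum_{k\ge0}\frac1{k!}m_k(\xi,\dots,\xi)=0$ and split by the number of $V[1]$-slots. Because $[V[1],V[1]]=0$ in $\widehat V$, every term with two or more $\tilde\Delta[1]$-arguments dies, so the sum has only a ``zero-$\tilde\Delta$'' part and a ``one-$\tilde\Delta$'' part. The zero-$\tilde\Delta$ part is exactly the Maurer-Cartan series of $\mathfrak a_\Delta^P$ evaluated at $a$; the one-$\tilde\Delta$ part assembles, via \eqref{eq:comb_l_inf2}--\eqref{eq:comb_l_inf4}, into the $V[1]$-valued expression $-\tfrac12[\Delta+\tilde\Delta,\Delta+\tilde\Delta][1]$ plus the difference between the Maurer-Cartan series of $\mathfrak a_{\Delta+\tilde\Delta}^P$ and that of $\mathfrak a_\Delta^P$ at $a$. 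Setting both components to zero and combining gives precisely the stated system: $[\Delta+\tilde\Delta,\Delta+\tilde\Delta]=0$ and $a$ is Maurer-Cartan for $\mathfrak a_{\Delta+\tilde\Delta}^P$. I expect the main obstacle to be bookkeeping the Koszul and suspension signs---particularly reconciling the $(-1)^{|v|}$ in \eqref{eq:comb_l_inf3} and the sign on $-[\Delta,v][1]$ in \eqref{eq:comb_l_inf2} with the conventions forced by the shift-isomorphism of Remark~\ref{re:l_infty1} and by Voronov's formula---rather than any conceptual difficulty; the structural claims all follow formally once the correct ambient $V$-algebra $(\widehat V,\widehat{\mathfrak a},\widehat P,\widehat\Delta)$ is in place and $[\widehat\Delta,\widehat\Delta]=0$ is checked.
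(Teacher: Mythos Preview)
The paper does not supply its own proof of Theorem~\ref{thm:cs}; it is quoted from \cite{v2,cs} and used as a black box. So there is no in-paper argument to compare against, and your proposal should be judged on its own.

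Your overall strategy---embed $V[1]\oplus\mathfrak a$ as the abelian part of a larger $V$-algebra and invoke Theorem~\ref{thm:vo}---is the right idea and is essentially how \cite{v2} proceeds. But two steps in your write-up do not go through as stated.

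First, $\widehat{\mathfrak a}=V[1]\oplus\mathfrak a$ is \emph{not} abelian in the semidirect product $\widehat V=V\ltimes V[1]$ you describe. For $a\in\mathfrak a\subset V$ and $w[1]\in V[1]$ the cross bracket is $[a,w[1]]=[a,w][1]$, which is nonzero whenever $w\notin\mathfrak a$. Your parenthetical ``the cross bracket lands in $V[1]$ but is killed by the projection'' does not help: you defined $\widehat P=\mathrm{id}_{V[1]}\oplus P$, which is the identity on $V[1]$, and in any case Theorem~\ref{thm:vo} requires $\mathfrak a$ to be a genuine abelian subalgebra, not merely abelian modulo $\ker P$.

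Second, the ``canonical shift element'' you invoke does not exist as an element of $\widehat V$. In $V\ltimes V[1]$ the adjoint action of any element sends $V[1]$ into $V[1]$ (never back into $V$), so no $\widehat\Delta\in\widehat V$ can satisfy $[\widehat\Delta,v[1]]=\pm v+\cdots$; yet exactly this is needed to produce $m_2(v[1],w[1])=(-1)^{|v|}[v,w][1]$ and $m_k(v[1],a_1,\dots)=P[\cdots[v,a_1],\cdots]$ as derived brackets. The desuspension is a degree~$+1$ \emph{derivation} of $\widehat V$, not an inner one. The fix is either to apply Voronov's theorem in the form that allows an arbitrary (non-inner) odd derivation---this is precisely the content of \cite{v2} and is strictly stronger than Theorem~\ref{thm:vo} as recorded here---or to enlarge $\widehat V$ once more by adjoining a generator $\epsilon$ of degree~$1$ with $[\epsilon,w[1]]=\pm w$, $[\epsilon,V]=0$, $[\epsilon,\epsilon]=0$, and take $\widehat\Delta=\Delta+\epsilon$. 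With that enlargement both obstructions disappear: $V[1]\oplus\mathfrak a$ becomes abelian (the old cross bracket now lies in the complement), and $\widehat\Delta$ is a bona fide Maurer--Cartan element whose derived brackets unwind to \eqref{eq:comb_l_inf1}--\eqref{eq:comb_l_inf5}. Your treatment of flatness and of the Maurer--Cartan characterization in the last paragraph is fine once this is repaired.
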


\section{Graded Q-manifolds and Submanifolds}
\par The regular homological vector field in literature is not adequate for our applications. In this section, we shall extend the concept of homological vector field to include multi-vector fields, and generalize it to a blended version that allows a component of degree $-1$. The advantage of doing this is that by this viewpoint Poisson manifolds, Lie algebroids and Courant algebroids are unified elegantly as blended $Q$-manifolds (Example \ref{eg:poisson}, Example \ref{eg:lie_algd}, Lemma \ref{lem:courant_q}). The submanifold of coisotropic type in a blended $Q$-manifold is defined, which turns out to encompass the structures including coisotropic submanifolds (Example \ref{eg:lie_algd_sub}), Lie subalgebroids (Example \ref{eg:poisson_mfd_sub}), and Dirac structures (Lemma \ref{lem:dirac_q}). 

\par Essentially all graded manifolds are from graded vector bundles (\cite{b}). Therefore, we adopt the following definition.

\begin{definition}
A graded manifold is a $\mathbb{Z}$-graded vector bundle $\mathcal{E}$, i.e. a collection of ordinary vector bundles $\{\mathcal{E}_k\}_{k\in\mathbb{Z}}$ of finite rank over some manifold $M$.
\end{definition}
\par\noindent The elements in $\mathcal{E}_k$ are homogeneous of degree $k$. For any $n\in\mathbb{Z}$, the $n$th-suspension $\mathcal{E}[n]$ is a graded vector bundle with $(\mathcal{E}[n])_k=\mathcal{E}_{n+k}$. The dual $\mathcal{E}^*$ of $\mathcal{E}$ is also a graded vector bundle with grading $(\mathcal{E}^*)_k=(\mathcal{E}_{-k})^*$. Moreover, any regular vector bundle $A\to M$ can be considered as a graded vector bundle concentrated in degree $0$, still denoted by $A$.

\par\noindent In this paper, we are only interested in graded manifold of the form $\{E_k\}_{k=-n}^{-1}$ $(n\in\mathbb{Z}^+)$, i.e. it only has finitely many nonzero components and all of them are concentrated in degrees between $-n$ and $-1$ with $\mathcal{E}_{-n}\ne 0$. We call $n$ the degree of the graded manifold.

\par Fixing $x\in M$, the fiber $\mathcal{E}_x$ is a graded vector space whose symmetric algebra $S(\mathcal{E}_x)$ is defined. Then $S(\mathcal{E})=\bigsqcup_{x\in M} S(\mathcal{E}_x)$ becomes a graded vector bundle, called the \emph{symmetric algebra bundle} of $\mathcal{E}$. The algebra of smooth functions on $\mathcal{E}$, denoted by $\mathcal{C}^\infty(\mathcal{E})$, is the unital graded commutative associative algebra $\Gamma(S(\mathcal{E}^*)))$.
\begin{example}
Given a smooth manifold $M$, the graded vector bundle $\mathcal{E}=T[1]M$ is obtained by shifting the degree of the fibers of $TM$ by $1$. It follows that $\mathcal{C}^\infty(\mathcal{E})$ coincides with the algebra $\Omega(M)$ of differential forms on $M$.  
\end{example}

\par The derivation of $\mathcal{C}^\infty(\mathcal{E})$ are vector fields.
\begin{definition}
A \emph{vector field} $X$ of degree $|X|=k$ on $\mathcal{E}$ is a linear map $\mathcal{C}^\infty(\mathcal{E})\to \mathcal{C}^\infty(\mathcal{E})[k]$ that satisfies the Leibniz rule, i.e.
\[
	X(fg)=X(f)g+(-1)^{k|f|}fX(g)
\]
for any $f,g\in\mathcal{C}^\infty(\mathcal{E})$ homogeneous. The collection of vector fields on $\mathcal{E}$ is a graded vector space, denoted by $\mathcal{V}(\mathcal{E})$.
\end{definition}
\par\noindent Let $(x^1,\cdots,x^m)$ be a local coordinate system on an open subset $U\subset M$, and $\{\xi_k^j\}$ the linear coordinates of the fibers of $E_k|_U$. Locally, a vector field has the form
\[
	f^i\dfrac{\partial}{\partial x^i}+g^j_k \dfrac{\partial}{\partial \xi_k^j},\qquad f^i,g^j_k\in\mathcal{C}^\infty(\mathcal{E}).
\]
Here $\dfrac{\partial}{\partial x^i}$ and $\dfrac{\partial}{\partial \xi^j_k}$ have degrees inherited from the grading of $\mathcal{E}$, defined by $\bigg|\dfrac{\partial}{\partial x^i} \bigg| =0$, and $\bigg|\dfrac{\partial}{\partial \xi^j_k} \bigg|=-|\xi^j_k|$. As a locally free $\mathcal{C}^\infty(\mathcal{E})$-module, the set $\mathcal{V}(\mathcal{E})$ is in fact the section space of a graded vector bundle $T\mathcal{E}$ over $\mathcal{E}$, called the tangent bundle. For each $e\in\mathcal{E}$, the tangent space  $T_e\mathcal{E}$ is a graded vector space. Denote by $T_e[-1]\mathcal{E}$ the degree shifted by $-1$ in the fibers. It follows that the collection $\sqcup_{e\in \mathcal{E}} S(T_e[-1]\mathcal{E})$ forms a graded vector bundle over $\mathcal{E}$, the sections of which are \emph{multi-vector fields}. Locally, a multi-vector field is a finite sum of terms of the form
\[	
	f^{i_1\cdots i_p j_1\cdots j_q}_{k_1\cdots k_q} \frac{\partial}{\partial x^{i_1}}\wedge\cdots\wedge \frac{\partial}{\partial x^{i_p}}\wedge\frac{\partial}{\partial \xi_{k_1}^{j_1}}\wedge\cdots\wedge\frac{\partial}{\partial \xi_{k_q}^{j_q}}.
\]
Here, the operation $v\wedge w$ of vector fields is understood as $(v[-1])\odot (w[-1])$. Beyond the degree inherited from $\mathcal{E}$, a multi-vector field is also graded by its multiplicity. Therefore, a homogeneous multi-vector field $Z$ of multiplicity $l$ has bi-grading $(|Z|,l)$. In fact, the grading on $\mathfrak{X}(\mathcal{E})$ is such selected so that the degree of $Z$ is the sum $|Z|+l$. In the following, we shall call this sum the \emph{total degree} of $Z$, and denote it by $|Z|_\text{total}$.

\par\noindent A multi-vector field of multiplicity $l$ is called an $l$-vector field. Denote by $\mathfrak{X}^l(\mathcal{E})$ the collection of $l$-vector fields on $\mathcal{E}$. It is clear that $\mathfrak{X}^0(\mathcal{E})$ is just $\mathcal{C}^\infty(\mathcal{E})$, and $\mathfrak{X}^1(\mathcal{E})$ coincides with $\mathcal{V}(\mathcal{E})$. For a homogeneous function $f$ and a homogeneous vector field $X$, one has $|f|_\text{total}=|f|$ and $|X|_\text{total}=|X|+1$. The space $\mathfrak{X}(\mathcal{E})=\sum_{l=0}^\infty\mathfrak{X}^l(\mathcal{E})$ is the collection of all multi-vector fields, which is a $\mathcal{C}^\infty(\mathcal{E})$-module as well as a graded commutative associative algebra under the product $\wedge$. Precisely, we have
\[
	Y\wedge Z=(-1)^{|Y|_\text{total}\cdot |Z|_\text{total}}Z\wedge Y,\qquad \forall Y,Z\in\mathfrak{X}(\mathcal{E})\text{ homogeneous}.
\]

\par\noindent For a smooth manifold $M$, it is known that $\mathfrak{X}(M)$ is a Gerstenhaber algebra. This picture can be generalized to $\mathfrak{X}(\mathcal{E})$.

\begin{theorem}
With respect to the total degree of multi-vector fields, $\mathfrak{X}(\mathcal{E})$ together with the product $\wedge$ and the Schouten-Nijenhuis bracket $[\cdot,\cdot]_\text{SN}$ of multi-vector fields forms a Gerstenhaber algebra. Here, the Schouten-Nijenhuis bracket is determined by
\par (1) $[f,g]_\text{SN}=0$,
\par (2) $[X,f]_\text{SN}=X(f)$,
\par (3) $[X,Y]_\text{SN}=X\circ Y-(-1)^{|X||Y|}Y\circ X$,
\par (4) $[Z_1,Z_2]=-(-1)^{(|Z_1|_\text{total}-1)(|Z_2|_\text{total}-1)}[Z_2,Z_1]_\text{SN}$,
\par (5) $[Z,Z_1\wedge Z_2]_\text{SN}=[Z,Z_1]_\text{SN}\wedge Z_2+(-1)^{(|Z|_\text{total}-1)(|Z_1|_\text{total})}Z_1\wedge[Z,Z_2]_\text{SN}$,\\
for any $f,g\in\mathcal{C}^\infty(\mathcal{E})$, $X,Y\in \mathfrak{X}^1(\mathcal{E})$, and $Z,Z_1,Z_2\in\mathfrak{X}(\mathcal{E})$ homogeneous.
\end{theorem}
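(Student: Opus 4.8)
\medskip
\noindent\emph{Proof strategy (plan).}
The plan is to recognize $\mathfrak{X}(\mathcal{E})$, graded by total degree, as the graded symmetric $\mathcal{C}^\infty(\mathcal{E})$-algebra generated by $\mathcal{V}(\mathcal{E})[-1]$, and to view the pair $(\mathcal{C}^\infty(\mathcal{E}),\mathcal{V}(\mathcal{E}))$ --- with the Leibniz action of $\mathcal{V}(\mathcal{E})$ on $\mathcal{C}^\infty(\mathcal{E})$ and the graded commutator $[X,Y]=X\circ Y-(-1)^{|X||Y|}Y\circ X$ on $\mathcal{V}(\mathcal{E})$ --- as a graded Lie--Rinehart algebra. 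The Schouten--Nijenhuis bracket is then the canonical Gerstenhaber bracket attached to such a pair, and the content of the theorem is that the rules (1)--(5) do describe a well-defined global operator and that the resulting structure is compatible with the chosen total-degree grading. I would split the argument into three parts: (i) the graded commutative associative algebra structure, (ii) the existence and uniqueness of $[\cdot,\cdot]_\text{SN}$, and (iii) graded antisymmetry, the Leibniz rule, and the graded Jacobi identity.

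For (i), $\mathfrak{X}(\mathcal{E})=\Gamma(S(T[-1]\mathcal{E}))$ is by construction a unital associative algebra under $\wedge$, so only graded commutativity with respect to total degree needs a word: an $l$-vector field of internal degree $p$ has total degree $p+l$, and passing it past an $l'$-vector field of internal degree $q$ introduces the Koszul sign of the shifted tangent factors, which works out to $(-1)^{(p+l)(q+l')}=(-1)^{|Y|_\text{total}|Z|_\text{total}}$; this is precisely the purpose of the shift $[-1]$ in the definition of multi-vector fields, and it is a short sign computation. For (ii), uniqueness is immediate: rules (1)--(3) fix the bracket on the generating subspace $\mathcal{C}^\infty(\mathcal{E})\oplus\mathcal{V}(\mathcal{E})$, rule (5) extends it from one argument to arbitrary $\wedge$-products by recursion on multiplicity, and rule (4) transfers this to the other argument; since every multi-vector field is locally a finite $\mathcal{C}^\infty(\mathcal{E})$-combination of $\wedge$-products of generators, the operator is pinned down. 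Existence --- that is, consistency of this recursion --- is where the work lies: I would write down the explicit local formula for $[\cdot,\cdot]_\text{SN}$ in a chart $(x^i,\xi_k^j)$ using the frame $\partial/\partial x^i,\partial/\partial\xi_k^j$ forced by (1)--(5), and verify that it transforms as a section of $S(T[-1]\mathcal{E})$ under change of chart, so that the local formulas agree on overlaps. This is a bookkeeping computation with Koszul signs and the transition functions of $T\mathcal{E}$, and once more the shift-by-$-1$ is exactly what makes the signs in (4)--(5) close up; alternatively one may quote the general construction of the Gerstenhaber algebra of a (graded) Lie--Rinehart algebra applied to $(\mathcal{C}^\infty(\mathcal{E}),\mathcal{V}(\mathcal{E}))$.

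For (iii), with respect to the total degree shifted down by $1$ --- which is the Gerstenhaber convention --- graded antisymmetry of degree $-1$ is exactly rule (4) and the Leibniz (Poisson) rule is exactly rule (5), so only the graded Jacobi identity
\[
[Z,[Z_1,Z_2]_\text{SN}]_\text{SN}=[[Z,Z_1]_\text{SN},Z_2]_\text{SN}+(-1)^{(|Z|_\text{total}-1)(|Z_1|_\text{total}-1)}[Z_1,[Z,Z_2]_\text{SN}]_\text{SN}
\]
remains to be checked. By rules (4)--(5) the Jacobiator is a graded derivation in each of its three arguments, so it suffices to verify it when $Z,Z_1,Z_2$ all lie in $\mathcal{C}^\infty(\mathcal{E})\oplus\mathcal{V}(\mathcal{E})$. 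If all three are functions, every term vanishes by (1); if exactly one is a vector field, the identity reduces, via (2), to the Leibniz rule $X(fg)=X(f)g+(-1)^{|X||f|}fX(g)$ built into the definition of a vector field; if two are vector fields, to the fact that $[X,\cdot]_\text{SN}$ acts on $\mathcal{C}^\infty(\mathcal{E})$ as the derivation $X$ and that the commutator respects the module structure; and if all three are vector fields, it is the graded Jacobi identity for the commutator bracket on $\mathcal{V}(\mathcal{E})$, which holds because composition of derivations of $\mathcal{C}^\infty(\mathcal{E})$ is associative. Collecting (i)--(iii) yields the Gerstenhaber algebra.

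The step I expect to be the main obstacle is the existence half of (ii): proving that extending $[\cdot,\cdot]_\text{SN}$ by the Leibniz rule is consistent and, equivalently, that the local formula is chart-independent. Everything else is either definitional (the algebra structure, antisymmetry, Leibniz) or a reduction-to-generators argument resting on structure already in hand --- the graded Lie algebra $\mathcal{V}(\mathcal{E})$ and the Leibniz rule for vector fields acting on functions.
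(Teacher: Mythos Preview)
Your proposal is correct and, for the Jacobi identity, amounts to the same argument the paper gestures at: the paper says only that Jacobi ``can be verified by straightforward but tedious induction on the multiplicity of multi-vector fields'' and omits the details, and your reduction-to-generators via the observation that the Jacobiator is a graded derivation in each slot is precisely how that induction is carried out. Where you go beyond the paper is in treating the existence/well-definedness of $[\cdot,\cdot]_\text{SN}$ (your part (ii)) and in framing the construction through the graded Lie--Rinehart pair $(\mathcal{C}^\infty(\mathcal{E}),\mathcal{V}(\mathcal{E}))$; the paper takes well-definedness for granted and does not invoke the Lie--Rinehart viewpoint. Your added discussion is a genuine improvement in rigor --- the consistency of extending by the Leibniz rule is indeed the nontrivial step --- though the Lie--Rinehart language is not strictly necessary, since the local-formula-plus-coordinate-change check you outline already suffices.
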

\par\noindent We only need to prove $(\mathfrak{X}(\mathcal{E}[1],[\cdot,\cdot]_\text{SN})$ is a graded Lie algebra by showing that it satisfies the Jacobi identity. This can be verified by straightforward but tedious induction on the multiplicity of multi-vector fields. We ask the reader to excuse us for omitting the verification here.

\par\noindent Since we shall work with the graded Lie algebra structure on $\mathfrak{X}(\mathcal{E})[1]$ exclusively, for simplicity, denote the degree of $\mathfrak{X}(\mathcal{E})[1]$ by $\|\cdot\|$, i.e. $\|Z\|=|Z|_\text{total}-1$. Without further notice, we shall use this shifted total degree $\|\cdot\|$ for the graded Lie algebra $(\mathfrak{X}(\mathcal{E}),[\cdot,\cdot]_\text{SN})$. It is worth noticing that the degrees $\|\cdot\|$ and $|\cdot|$ coincide on $\mathfrak{X}^1(\mathcal{E})$. Furthermore, $\mathfrak{X}^1(\mathcal{E})$ is close under $[\cdot,\cdot]_{SN}$, and we have the following easy fact. 
\begin{corollary}
$\mathfrak{X}^1(\mathcal{E})$ is a graded lie subalgebra of $(\mathfrak{X}(\mathcal{E}),[\cdot,\cdot]_\text{SN})$.
\end{corollary}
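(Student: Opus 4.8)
The plan is to deduce this directly from characterization (3) of the Schouten--Nijenhuis bracket in the preceding theorem, together with the elementary fact that the graded commutator of two derivations of a graded commutative algebra is again a derivation. First I would recall that $\mathfrak{X}^1(\mathcal{E})=\mathcal{V}(\mathcal{E})$ is precisely the space of (graded) derivations of the algebra $\mathcal{C}^\infty(\mathcal{E})$, and that on this subspace the shifted total degree $\|\cdot\|$ coincides with the ordinary degree $|\cdot|$, as already noted in the text; so the assertion about gradings requires nothing further, and the content of the corollary is exactly closure under $[\cdot,\cdot]_\text{SN}$.

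Next, for homogeneous $X,Y\in\mathfrak{X}^1(\mathcal{E})$, item (3) of the theorem gives $[X,Y]_\text{SN}=X\circ Y-(-1)^{|X||Y|}Y\circ X$ as a $\mathsf{k}$-linear operator on $\mathcal{C}^\infty(\mathcal{E})$ raising degree by $|X|+|Y|$. I would then check that this operator satisfies the graded Leibniz rule of degree $|X|+|Y|$: expanding $X\bigl(Y(fg)\bigr)$ and $Y\bigl(X(fg)\bigr)$ via the Leibniz rules for $X$ and $Y$ and collecting terms, the ``mixed'' summands involving $X(f)Y(g)$ and $Y(f)X(g)$ cancel thanks to the sign $(-1)^{|X||Y|}$, and what remains is precisely $[X,Y]_\text{SN}(f)\,g+(-1)^{(|X|+|Y|)|f|}\,f\,[X,Y]_\text{SN}(g)$. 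Hence $[X,Y]_\text{SN}$ is again a vector field, i.e. $[X,Y]_\text{SN}\in\mathfrak{X}^1(\mathcal{E})$, which is the desired closure.

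Finally, since $\mathfrak{X}^1(\mathcal{E})$ is closed under the bracket, graded antisymmetry and the Jacobi identity for $[\cdot,\cdot]_\text{SN}$ restricted to $\mathfrak{X}^1(\mathcal{E})$ are inherited from the ambient Gerstenhaber (in particular graded Lie) structure on $\mathfrak{X}(\mathcal{E})$ proved in the theorem, so $\bigl(\mathfrak{X}^1(\mathcal{E}),[\cdot,\cdot]_\text{SN}\bigr)$ is a graded Lie subalgebra. I do not expect any genuine obstacle; the only step demanding a little care is the sign bookkeeping in the Leibniz-rule verification, and that is simply the classical computation that graded commutators of derivations are derivations.
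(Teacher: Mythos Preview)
Your proposal is correct and matches the paper's approach: the paper does not give an explicit proof but simply records the corollary as an ``easy fact'' immediately after noting that $\|\cdot\|=|\cdot|$ on $\mathfrak{X}^1(\mathcal{E})$ and that $\mathfrak{X}^1(\mathcal{E})$ is closed under $[\cdot,\cdot]_\text{SN}$, the latter being precisely what item (3) of the preceding theorem encodes. Your argument is exactly the intended justification, just with the Leibniz-rule verification spelled out.
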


\par Next, let us define and generalize homological vector fields.

\begin{definition}
A multi-vector field $Q^+\in\mathfrak{X}(\mathcal{E})$ is called \emph{homological} if $\|Q^+\|=1$ and $[Q^+,Q^+]_{\text{SN}}=0$. The pair $(\mathcal{E},Q^+)$ is called a graded $Q$-manifold or a positive $Q$-manifold.
\par A \emph{negative homological vector field} is a multi-vector field $Q_-$ of total degree $-1$ satisfying $[Q_-,Q_-]_\text{SN}=0$. A \emph{blended $Q$-structure} on $\mathcal{E}$ is a pair consisting of a positive homological vector field $Q_+$ and a negative homological vector field $Q_-$ that are compatible by satisfying
\[
	Q_+Q_-+Q_-Q_+=0.
\]
In this case, $Q=Q_++Q_-$ is called a \emph{blended homological vector field}.
\end{definition}

\par\noindent Concerning the differential attribute of homological vector fields, the The following observation is immediate.
\begin{proposition}
A homological vector field $Q$ gives rise to a differential $[Q,\cdot]$ of $(\mathfrak{X}(\mathcal{E}),[\cdot,\cdot]_\text{SN})$ and makes it into a differential graded Lie algebra. If $Q$ is negative or blended, one gets negative or blended differential graded Lie algebra structures respectively.
\par If $Q$ is a vector field, i.e. the multiplicity of $Q$ is $1$, the respective differential graded Lie algebra structure can be restricted on $\mathfrak{X}^1(\mathcal{E})$ to make it a differential graded Lie algebra of a version the same as $Q$. 
\end{proposition}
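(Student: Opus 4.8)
The plan is to unwind the definitions and check the differential property degree by degree. First I would recall that for $Z \in \mathfrak{X}(\mathcal{E})$ homogeneous, the operator $[Q,\cdot]_\text{SN}$ raises the shifted total degree $\|\cdot\|$ by $\|Q\|$, because the Schouten--Nijenhuis bracket on $(\mathfrak{X}(\mathcal{E}),\|\cdot\|)$ has degree $0$ as a bilinear map (i.e. $\|[Y,Z]_\text{SN}\| = \|Y\| + \|Z\|$). Hence for a positive homological $Q$ with $\|Q\| = 1$ the operator $[Q,\cdot]_\text{SN}$ has degree $+1$; for a negative one with $\|Q\| = |Q|_\text{total} - 1 = -1$ it has degree $-1$; and for a blended $Q = Q_+ + Q_-$ it splits as a sum of a degree $+1$ and a degree $-1$ operator.

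Next I would verify the derivation and square-zero properties. That $[Q,\cdot]_\text{SN}$ is a derivation of the bracket of the appropriate degree is exactly the graded Jacobi identity for $[\cdot,\cdot]_\text{SN}$, rewritten as
\[
	[Q,[Y,Z]_\text{SN}]_\text{SN} = [[Q,Y]_\text{SN},Z]_\text{SN} + (-1)^{\|Q\|\,\|Y\|}[Y,[Q,Z]_\text{SN}]_\text{SN},
\]
which holds for any homogeneous $Q$ and is precisely the condition in Definition~\ref{def:DGLA}. For the square: $[Q,[Q,Z]_\text{SN}]_\text{SN} = \tfrac{1}{2}[[Q,Q]_\text{SN},Z]_\text{SN}$ by the same Jacobi identity (together with graded antisymmetry, using $\|Q\| = 1$ so that the sign is $-1$ and the two cross-terms are equal), so $[Q,Q]_\text{SN} = 0$ gives $([Q,\cdot]_\text{SN})^2 = 0$. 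The same computation with $\|Q\| = -1$ gives the negative differential case. For the blended case one has $[Q,Q]_\text{SN} = [Q_+,Q_+]_\text{SN} + 2[Q_+,Q_-]_\text{SN} + [Q_-,Q_-]_\text{SN}$, and the three summands live in distinct total-degree components, so $[Q,Q]_\text{SN}=0$ forces each to vanish — equivalently $[Q_\pm,Q_\pm]_\text{SN}=0$ and $Q_+Q_- + Q_-Q_+ = 0$ — which is exactly the definition of a blended homological vector field and yields $(d^+)^2 = (d^-)^2 = 0$ and $d^+d^- + d^-d^+ = 0$ with $d^\pm = [Q_\pm,\cdot]_\text{SN}$.

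For the second paragraph of the statement, I would use the Corollary that $\mathfrak{X}^1(\mathcal{E})$ is a graded Lie subalgebra of $(\mathfrak{X}(\mathcal{E}),[\cdot,\cdot]_\text{SN})$, on which $\|\cdot\|$ and $|\cdot|$ agree. If $Q$ has multiplicity $1$, then $[Q,\cdot]_\text{SN}$ maps $\mathfrak{X}^1(\mathcal{E})$ into itself by the closure property of the subalgebra bracket, so the (positive, negative, or blended) DGLA structure restricts; the restriction is a derivation and squares to zero because these are identities inherited from the ambient algebra. The degree of $Q$ as an element of $\mathfrak{X}^1(\mathcal{E})$ equals $\|Q\|$, so the restricted differential keeps the same sign type.

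I do not expect a genuine obstacle here: the whole proposition is a formal consequence of the Gerstenhaber (graded Jacobi) identity plus bookkeeping of the total degree. The only point requiring a little care is the sign in the identity $[Q,[Q,Z]_\text{SN}]_\text{SN} = \tfrac12[[Q,Q]_\text{SN},Z]_\text{SN}$ — one must check that for $\|Q\|$ odd the self-term $(-1)^{\|Q\|^2}[Q,[Q,Z]_\text{SN}]_\text{SN}$ combines with the left-hand side to give the factor $2$ rather than $0$, which is where the oddness of $\|Q\| = \pm 1$ is essential (an even-degree $Q$ would only give $[[Q,Q]_\text{SN},Z]_\text{SN}=0$ trivially and no square-zero conclusion). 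I would state this sign check explicitly and otherwise present the argument as a short formal deduction.
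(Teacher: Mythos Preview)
Your argument is correct and is exactly the routine verification the paper has in mind: the proposition is stated there as an ``immediate observation'' with no proof, and your use of the graded Jacobi identity for $[\cdot,\cdot]_\text{SN}$ to get the derivation property, the odd-degree sign check to obtain $([Q,\cdot]_\text{SN})^2=\tfrac12[[Q,Q]_\text{SN},\cdot]_\text{SN}$, and the total-degree decomposition in the blended case are precisely the expected steps. The restriction to $\mathfrak{X}^1(\mathcal{E})$ via the Corollary is also the intended reasoning.
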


\begin{example}
\label{eg:poisson}
Any smooth manifold $M$ can be considered as a degenerated graded vector bundle without fibers. A Poisson structure on $M$ is a bi-vector field $\pi$ satisfying $[\pi,\pi]_\text{SN}=0$. Since $\|\pi\|=1$, we conclude that a Poisson manifold $(M,\pi)$ is equivalent to a (positive) $Q$-manifold of degree $0$.
\par In general, a blended $Q$-manifold of degree $0$ is a smooth manifold together with $Q^+=\pi\in\mathfrak{X}^2(M)$ and $Q^-=f\in\mathcal{C}^\infty(M)$ satisfying
\[
	[\pi,\pi]_\text{SN}=0,\quad \text{ and }\quad [\pi,f]_\text{SN}=0.
\]
This is equivalent to that $(M,\pi)$ is a Poisson manifold, and $f$ is a Casimir function.
\end{example}

\begin{example}
\label{eg:lie_algd}
Let $(A,[\cdot,\cdot],\rho)$ be a Lie algebroid over $M$. It is well-known that the Lie algebroid structure on $A$ is equivalent to its Lie algebroid differential $d_A:\Gamma(\wedge^n A^*)\to \Gamma(\wedge^{n+1}A^*)$ defined by
\begin{eqnarray*}
	d_A \omega(a_0,\cdots,a_n) &=&\sum_{i=0}^n \rho(a_i)(\omega(a_0,\cdots,\hat{a}_i,\cdots,a_n))\\
    &+& \sum_{0\le i<j\le n} (-1)^{i+j}\omega([a_i,a_j],a_0,\cdots,\hat{a}_i,\cdots,\hat{a}_j,\cdots,a_n),
\end{eqnarray*}
which satisfies $d_A^2=0$. Let us consider the graded vector bundle $A[1]$. It is easy to see $\mathcal{C}^\infty(A[1])=\Gamma(\wedge^\cdot A^*)$. It follows that $d_A$ can be viewed as a vector field of degree $1$ on $A[1]$ that satisfies $[d_A,d_A]_\text{SN}=0$. Therefore, Lie algebroids are graded $Q$-manifolds of degree $1$. 
\end{example}


\par A submanifold $\mathcal{A}$ of a graded manifold $\mathcal{E}$ is a graded vector bundle $\{\mathcal{A}_k\}_{k\in\mathbb{Z}}$ over a submanifold $S$ of $M$ such that $\mathcal{A}_k$ is a subbundle of $\mathcal{E}_k$. With the homological vector field $Q$ in consideration, we propose the following definition. 
\begin{definition}
\label{def:coiso}
A submanifold $\mathcal{A}$ of $(\mathcal{E},Q)$ is called \emph{of coisotropic type} if $\mathcal{A}$ is concentrated in some fixed degree $k_0$, and $Q|_{\mathcal{E}_{k_0}} (N^\bot \mathcal{A})=0$. Here,
\[
	N^\bot \mathcal{A}=\sqcup_{s\in S}\{ \xi \in T^*_s\mathcal{E}_{k_0}\,|\, \xi(T_s\mathcal{A})=0 \}
\]
is the conormal bundle of $\mathcal{A}$ in $\mathcal{E}_{k_0}$.
\end{definition}

\begin{example}
\label{eg:poisson_mfd_sub}
When $(M,\pi)$ is Poisson, a submanifold $S$ of coisotropic type is just a coisotropic submanifold in the usual sense. This is also the reason we select this name for the submanifold in Definition \ref{def:coiso}.
\par\noindent For a general blended $Q$-manifold $M$ of degree $0$, we have $Q=\pi+f$ with $\pi$ a Poisson bi-vector field and $f$ a Casimir function. A submanifold $S$ is of coisotropic type if and only if $S$ is a regular coisotropic submanifold of the Poisson manifold $(M,\pi)$, and satisfies $f|_S=0$.  
\end{example}

\begin{example}
\label{eg:lie_algd_sub}
Given a Lie algebroid $(A,[\cdot,\cdot],\rho)$, a subbundle $E$ over $S\subset M$ is a Lie subalgebroid if and only if the homological vector field $Q=d_A$ is tangent to $E[1]$ (\cite{j}). Equivalently, $Q|_{E[1]}$ annihilates the conormal bundle $N^\bot E[1]$. Thus the submanifolds of coisotropic type in $(A[1],d_A)$ are in 1-1 correspondence with the Lie subalgeboids of $A$. 
\end{example}

\section{Deformations in Graded $Q$-manifolds}
\label{sec:deform}
\par We shall consider the deformations of a blended $Q$-manifold, and of a submanifold of coisotropic type in this section.

\subsection{Deformation of $Q$-structure}
\par Let $(\mathcal{E},Q)$ be a blended $Q$-manifold. A deformation of the $Q$-structure is an element $\tilde{Q}$ of the form $\tilde{Q}^++\tilde{Q}^-\in \mathfrak{X}_1(\mathcal{E})\oplus \mathfrak{X}_{-1}(\mathcal{E})$. Here we use $\mathfrak{X}_k(\mathcal{E})$ to denote the collection of multi-vector field $Z$ whose shifted degree $\|Z\|=k$. It is easy to see $Q+\tilde{Q}$ is homological if and only if $\tilde{Q}$ satisfies the Maurer-Cartan equation
\[
	[Q,\tilde{Q}]_\text{SN}+\frac{1}{2}[\tilde{Q},\tilde{Q}]_\text{SN}=0.
\]
\begin{theorem}
\label{thm:q_deform}
Deformations of a blended $Q$-manifold $(\mathcal{E},Q)$ is controlled by the blended DGLA $(\mathfrak{X}(\mathcal{E}),[\cdot,\cdot]_\text{SN},[Q,\cdot]_\text{SN})$. To be specific, $\tilde{Q}=\tilde{Q}^++\tilde{Q}^-\in \mathfrak{X}_1(\mathcal{E})\oplus \mathfrak{X}_{-1}(\mathcal{E})$ is a deformation of $Q$ if and only if $\tilde{Q}$ is a Maurer-Cartan element.
\end{theorem}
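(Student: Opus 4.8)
The plan is to show that the Maurer-Cartan equation for the blended DGLA $(\mathfrak{X}(\mathcal{E}),[\cdot,\cdot]_\text{SN},[Q,\cdot]_\text{SN})$ is literally the same equation as the condition that $Q+\tilde Q$ be a blended homological vector field, and that the degree constraints match up. First I would recall that, by the Proposition on the differential attribute of homological vector fields, $[Q,\cdot]_\text{SN}$ is a blended differential of $(\mathfrak{X}(\mathcal{E}),[\cdot,\cdot]_\text{SN})$, with positive part $[Q^+,\cdot]_\text{SN}$ and negative part $[Q^-,\cdot]_\text{SN}$; the compatibility $Q^+Q^-+Q^-Q^+=0$ is exactly what makes $[Q^+,\cdot]_\text{SN}\circ[Q^-,\cdot]_\text{SN}+[Q^-,\cdot]_\text{SN}\circ[Q^+,\cdot]_\text{SN}=0$ hold (via the graded Jacobi identity applied to $[Q^+,[Q^-,\cdot]_\text{SN}]_\text{SN}$). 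So the ambient blended DGLA structure in the statement is well-defined, and Maurer-Cartan elements of it are, by definition, elements of the form $v_++v_-$ with $\|v_+\|=1$, $\|v_-\|=-1$ satisfying $[Q,v]_\text{SN}+\frac12[v,v]_\text{SN}=0$.

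Next I would identify the deformation side with this. A deformation of the $Q$-structure is by definition an element $\tilde Q=\tilde Q^++\tilde Q^-\in\mathfrak{X}_1(\mathcal{E})\oplus\mathfrak{X}_{-1}(\mathcal{E})$ such that $Q+\tilde Q$ is again a blended homological vector field, i.e. $[Q+\tilde Q,Q+\tilde Q]_\text{SN}=0$. Expanding bilinearly and using $[Q,Q]_\text{SN}=0$ (which holds because $Q$ is already blended homological) gives $2[Q,\tilde Q]_\text{SN}+[\tilde Q,\tilde Q]_\text{SN}=0$, i.e. $[Q,\tilde Q]_\text{SN}+\frac12[\tilde Q,\tilde Q]_\text{SN}=0$, which is precisely the Maurer-Cartan equation. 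The degree bookkeeping is automatic: the splitting $\tilde Q=\tilde Q^++\tilde Q^-$ with shifted degrees $+1$ and $-1$ is exactly the form $v_++v_-$ required of a Maurer-Cartan element, and conversely the $\|\cdot\|$-degree-$1$ and degree-$(-1)$ components of any Maurer-Cartan element are multi-vector fields of total degree $2$ and $0$, so $Q^\pm+\tilde Q^\pm$ have the total degrees $\|\cdot\|=\pm1$ demanded of the positive and negative parts of a blended homological vector field. One should also note that, by counting $\|\cdot\|$-degree in $[Q,\tilde Q]_\text{SN}+\frac12[\tilde Q,\tilde Q]_\text{SN}=0$, the single Maurer-Cartan equation splits into the three-equation system $[Q^+,\tilde Q^+]+\frac12[\tilde Q^+,\tilde Q^+]=0$, $[Q^+,\tilde Q^-]+[Q^-,\tilde Q^+]+[\tilde Q^+,\tilde Q^-]=0$, $[Q^-,\tilde Q^-]+\frac12[\tilde Q^-,\tilde Q^-]=0$ (all brackets $[\cdot,\cdot]_\text{SN}$), and these are respectively equivalent to $(Q^++\tilde Q^+)$ being positive homological, $(Q^-+\tilde Q^-)$ being negative homological, and the two being compatible — so the equivalence is visibly with the full structure of a blended $Q$-manifold, not something weaker.

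Since everything reduces to an elementary bilinear expansion plus degree counting, there is no serious obstacle; the only point that deserves care is making the phrase ``controlled by'' precise, namely that \emph{every} Maurer-Cartan element — not just formal or small ones — gives a genuine deformation and vice versa. Because the Schouten-Nijenhuis bracket is a finite, purely algebraic operation (no infinite sums, unlike the $L_\infty$ case discussed later), the Maurer-Cartan equation here is a finite polynomial identity with no convergence issue, so the correspondence is an exact bijection between deformations of $Q$ and Maurer-Cartan elements of the blended DGLA. I would therefore present the proof as: (i) recall the ambient blended DGLA is well-defined from the Proposition; (ii) expand $[Q+\tilde Q,Q+\tilde Q]_\text{SN}=0$; (iii) match degrees to see the Maurer-Cartan form $v_++v_-$ is automatic; (iv) remark that finiteness of $[\cdot,\cdot]_\text{SN}$ makes this an honest bijection.
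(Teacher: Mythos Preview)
Your proposal is correct and follows essentially the same approach as the paper: the paper's argument is simply the one-line observation that $[Q+\tilde Q,Q+\tilde Q]_\text{SN}=0$ expands (using $[Q,Q]_\text{SN}=0$) to the Maurer-Cartan equation $[Q,\tilde Q]_\text{SN}+\tfrac12[\tilde Q,\tilde Q]_\text{SN}=0$, followed by the degree-by-degree decomposition into the three-equation system you wrote down. If anything, you are more thorough than the paper, which leaves the well-definedness of the blended DGLA and the degree bookkeeping implicit.
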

\par\noindent Let $Q^+$ and $Q^-$ be the degree $+1$ and degree $-1$ component of $Q$ respectively. Decomposing the components by degree, the above Maurer-Cartan equation is equivalent to the system
\begin{eqnarray*}
\begin{cases}
	[Q^+,\tilde{Q}^+]_\text{SN}+\dfrac{1}{2}[\tilde{Q}^+,\tilde{Q}^+]_\text{SN}=0\\
    [Q^+,\tilde{Q}^-]_\text{SN}+[Q^-,\tilde{Q}^+]_\text{SN}+[\tilde{Q}^+,\tilde{Q}^-]_\text{SN}=0\\
    [Q^-,\tilde{Q}^-]_\text{SN}+\dfrac{1}{2}[\tilde{Q}^-,\tilde{Q}^-]_\text{SN}=0
\end{cases}.
\end{eqnarray*}
If $Q$ is of multiplicity $1$, we may restrict our attention to deformations of the same type, i.e. $\tilde{Q}$ is also of multiplicity $1$. It follows that such deformations are controlled by the Lie subalgebra $(\mathfrak{X}^1(\mathcal{E}),[\cdot,\cdot]_\text{SN},[Q,\cdot]_\text{SN})$.
\begin{corollary}
If the homological vector field $Q$ has multiplicity $1$, a vector field $\tilde{Q}\in\mathfrak{X}^1(\mathcal{E})$ is a deformation of $Q$ if and only if $\tilde{Q}$ is a Maurer-Cartan element of $(\mathfrak{X}^1(\mathcal{E}),[\cdot,\cdot]_\text{SN},[Q,\cdot]_\text{SN})$.
\end{corollary}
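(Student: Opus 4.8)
The plan is to deduce this corollary directly from Theorem~\ref{thm:q_deform} together with the fact that $\mathfrak{X}^1(\mathcal{E})$ is a graded Lie subalgebra of $(\mathfrak{X}(\mathcal{E}),[\cdot,\cdot]_\text{SN})$. First I would observe that, since $Q$ has multiplicity $1$, its degree $+1$ and degree $-1$ components $Q^+,Q^-$ both lie in $\mathfrak{X}^1(\mathcal{E})$ (recall that on $\mathfrak{X}^1(\mathcal{E})$ the shifted total degree $\|\cdot\|$ agrees with the ordinary degree $|\cdot|$). By property (3) of the Schouten--Nijenhuis bracket, the bracket of two $1$-vector fields is again a $1$-vector field, so $[Q^+,\cdot]_\text{SN}$ and $[Q^-,\cdot]_\text{SN}$ map $\mathfrak{X}^1(\mathcal{E})$ into itself. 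Hence the blended differential $[Q,\cdot]_\text{SN}=[Q^+,\cdot]_\text{SN}+[Q^-,\cdot]_\text{SN}$ restricts to $\mathfrak{X}^1(\mathcal{E})$, and the compatibility condition $[Q^+,[Q^-,\cdot]]+[Q^-,[Q^+,\cdot]]=0$, which holds on all of $\mathfrak{X}(\mathcal{E})$, holds a fortiori on the invariant subspace. Therefore $(\mathfrak{X}^1(\mathcal{E}),[\cdot,\cdot]_\text{SN},[Q,\cdot]_\text{SN})$ is a blended sub-DGLA of $(\mathfrak{X}(\mathcal{E}),[\cdot,\cdot]_\text{SN},[Q,\cdot]_\text{SN})$.

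Next I would note that, for $\tilde{Q}\in\mathfrak{X}^1(\mathcal{E})$, the element $Q+\tilde{Q}$ again has multiplicity $1$, so it is a deformation of $Q$ within the class of multiplicity-$1$ homological vector fields precisely when $Q+\tilde{Q}$ is a blended homological vector field. By the discussion preceding Theorem~\ref{thm:q_deform}, this happens if and only if $\tilde{Q}$ solves the Maurer--Cartan equation $[Q,\tilde{Q}]_\text{SN}+\tfrac12[\tilde{Q},\tilde{Q}]_\text{SN}=0$. Since the Maurer--Cartan equation only involves $\tilde{Q}$, the bracket, and the differential, all of which are computed inside the subalgebra, $\tilde{Q}$ is a Maurer--Cartan element of the sub-DGLA $(\mathfrak{X}^1(\mathcal{E}),[\cdot,\cdot]_\text{SN},[Q,\cdot]_\text{SN})$ if and only if it is one in the ambient blended DGLA. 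Combining these two observations with Theorem~\ref{thm:q_deform} yields the claim.

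Since the argument is a formal restriction of Theorem~\ref{thm:q_deform} to an invariant subalgebra, there is no genuine obstacle; the only point deserving care is the bookkeeping of degrees, namely checking that the $\mathbb{Z}$-grading used for the blended DGLA structure on $\mathfrak{X}^1(\mathcal{E})$ is the one inherited from the shifted total degree $\|\cdot\|$ on $\mathfrak{X}(\mathcal{E})$, which indeed coincides with $|\cdot|$ in multiplicity $1$, so that the decomposition $\tilde{Q}=\tilde{Q}^++\tilde{Q}^-$ into degree $\pm1$ pieces used in Theorem~\ref{thm:q_deform} is exactly the one visible inside $\mathfrak{X}^1(\mathcal{E})$. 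Once this is noted, the corollary is immediate.
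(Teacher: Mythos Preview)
Your proposal is correct and follows essentially the same approach as the paper. The paper simply observes, in the sentence immediately preceding the corollary, that when $Q$ has multiplicity $1$ one may restrict to deformations $\tilde{Q}$ of multiplicity $1$ and that these are controlled by the Lie subalgebra $(\mathfrak{X}^1(\mathcal{E}),[\cdot,\cdot]_\text{SN},[Q,\cdot]_\text{SN})$; your argument spells out in detail why this subalgebra is a blended sub-DGLA and why the Maurer--Cartan condition restricts, which is exactly the content the paper leaves implicit.
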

\par\noindent This corollary also includes the result on deformations of a classical homological vector field, by imposing the condition $|Q|=|\tilde{Q}|=1$.

\begin{example}
A Poisson manifold $(M,\pi)$ is a $Q$-manifold of degree $0$. Considering it as a blended $Q$-manifold, a deformation of $\pi$ is of the form $\tilde{\pi}+f$ with $\tilde{\pi}\in\mathfrak{X}^2(M)$ and $f\in\mathcal{C}^\infty(M)$ that satisfies
\[
	\begin{cases}
		X_f=[\pi,f]_\text{SN}=0,\\
        [\pi,\tilde{\pi}]_\text{SN}+\dfrac{1}{2}[\tilde{\pi},\tilde{\pi}]_\text{SN}=0.
	\end{cases}
\]
Here $X_f$ is the Hamiltonian vector field of $f$. If restricting attention to deformations with $f=0$, we recover the classical result that $\tilde{\pi}\in\mathfrak{X}^(M)$ is a deformation of $\pi$ if and only if it is a Mauer-Cartan element of the DGLA $(\mathfrak{X}^1(M),[\cdot,\cdot]_\text{SN},[\pi,\cdot]_\text{SN})$.
\end{example}

\begin{example}
A Lie algebroid $(A,[\cdot,\cdot],\rho)$ over $M$ is equivalent to the $Q$-manifold $(A[1],d_A)$ of degree $1$, in which the homological vector field $d_A$ is the Lie algebroid differential. As a blended $Q$-manifold, a deformation of $d_A$ is a pair $(\tilde{d}_A,f)$ with $\tilde{d}_A$ a derivation of the graded commutative algebra $\Gamma(\wedge^\cdot E^*)$ and $f$ a smooth function on $M$, satisfying
\[
\begin{cases}
	\rho^*(df)=0,\\
    [d_A, \tilde{d}_A]_\text{SN}+\dfrac{1}{2}[d_A,\tilde{d}_A]_\text{SN}=0.
\end{cases}
\]
If requiring that $f=0$, we recover that the deformations of a Lie algebroid structure in the usual sense is controlled by the DGLA $(\mathfrak{X}^1(A[1]),[\cdot,\cdot]_\text{SN},[d_A,\cdot]_\text{SN})$.
\end{example}

\subsection{$L_\infty$-algebra of a submanifold}
\par Let $\mathcal{A}$ over $S\subset M$ be a submanifold of $(\mathcal{E},Q)$ of coisotropic type. By definition, $\mathcal{A}$ has only one nontrivial component concentrated in degree $k_0$. Let $\mathcal{B}$ be a complement of $\mathcal{A}$, i.e. $\mathcal{E}_{k_0}|_S=\mathcal{A}\oplus \mathcal{B}$. Identify the normal bundle $NS$ as a tubular neighborhood of $S$. Denote the pull-backs of $\mathcal{A}$ and $\mathcal{B}$ along the bundle projection $NS\to S$ by $\tilde{\mathcal{A}}$ and $\tilde{\mathcal{B}}$ respectively.
\[
	\begin{CD}
		\tilde{\mathcal{A}} @>>>  \mathcal{A}\\
		@VVV	@VVV\\
		NS	@>>>  S,
	\end{CD}\qquad\qquad
    \begin{CD}
		\tilde{\mathcal{B}} @>>>  \mathcal{B}\\
		@VVV	@VVV\\
		NS	@>{p}>>  S.
	\end{CD}
\]
\begin{definition}
A \emph{deformation} of $\mathcal{A}$ is a pair $(\sigma,\psi)\in \Gamma(NS) \oplus \Gamma ((\tilde{\mathcal{A}})^*|_{S_\sigma}\otimes \tilde{\mathcal{B}}|_{S_\sigma})$ in which $S_\sigma$ is the graph of $\sigma$, so that the graph of $\psi$ is a submanifold of coisotropic type over $S_\sigma$.
\end{definition}
\par\noindent Shrinking the tubular neighborhood if necessary, we may assume $\mathcal{E}_{k_0}=\tilde{\mathcal{A}}\oplus \tilde{\mathcal{B}}$ on $NS$. Since we are only interested in deformations near $\mathcal{A}$, without loss of generality, assume $M=NS$. Secondly, $\psi$ can be identified with an element $\phi\in\Gamma(\mathcal{A}^*\otimes\mathcal{B})$. To do this, by the pull-back construction, we extend $\psi$ to an element $\tilde{\psi}\in\Gamma(\tilde{\mathcal{A}}^*\otimes \tilde{\mathcal{B}})$, then restrict $\tilde{\psi}$ to $S$.  Let $\mathfrak{a}'$ be the set $\Gamma(S(\mathcal{A}^*)\otimes (NS\oplus\mathcal{B}))$. Immediately, $\mathfrak{a}'_0=\Gamma(NS)\oplus\Gamma(\mathcal{A}^*\otimes \mathcal{B})$ consists of all possible deformations of $\mathcal{A}$.

\par Next, we construct an $L_\infty$-algebra associated with $\mathcal{A}$ via higher derived brackets, which is similar to the construction of the $L_\infty$-algebra associated with a coisotropic submanifold in \cite{cf} and that associated with a Lie subalgebroid in \cite{j}. Readers will see that the construction does not rely on $\mathcal{A}$ being of coisotropic type, and we shall consider $\mathcal{A}$ as a subbundle unless otherwise specified. Let $\mathfrak{a}$ be the collection $\Gamma\big(S(\mathcal{A}^*)\otimes \big( (\wedge^\cdot NS)[1] \oplus S(\mathcal{B}[-1])[1]\big) \big)$, and it will be shown that $\mathfrak{a}$ is the space where the $L_\infty$-structure lives. If denoting the pull-backs of $NS$ and $\mathcal{B}$ along $p:\mathcal{A}\to S$ by $p^!NS$ and $p^!\mathcal{B}$, respectively, we have $\mathfrak{a}=\Gamma(S(p^!NS[-1]\oplus p^!\mathcal{B}[-1])[1])$.
\[
	\begin{CD}
		p^!NS @>>>  NS\\
		@VVV	@VVV\\
		\mathcal{A}	@>{p}>>  S,
	\end{CD}\qquad\qquad
    \begin{CD}
		p^!\mathcal{B} @>>>  \mathcal{B}\\
		@VVV	@VVV\\
		\mathcal{A}	@>{p}>>  S.
	\end{CD}
\]
An important observation is that the elements of $\mathfrak{a}$ can be viewed as multi-vector fields on $\mathcal{E}$, i.e, there is an injection $I:\mathfrak{a}\to \mathfrak{X}(\mathcal{E})$. For better understanding, we shall define $I$ locally first, and then give an intrinsic definition to show it does depend on the choice of local coordinates. Let $(x^i)$ be a local chart on an open subset $U\subset S$, $(y^j)$ the linear coordinates of the fibers of $NS|_U$. Let $(v_p)$ and $(w_q)$ be frames of $\mathcal{A}|_U$ and $\mathcal{B}|_U$, and  $(\xi^p)$, $(\eta^q)$ the dual linear coordinates of the fibers of $\mathcal{A}|_U$ and $\mathcal{B}|_U$ respectively. By the pull-back construction, $(\xi^p)$ and $(\eta^q)$ extend to linear coordinates of the fibers of $p^!\mathcal{A}$ and $p^!\mathcal{B}$, still denoted by $(\xi^p)$ and $(\eta^q)$. Finally, let $\gamma_k^{r_k}$ be the linear coordinates of the fibers of $\mathcal{E}_k$ for $k\ne k_0$. Then $I$ is a morphism of graded algebras determined by
\begin{eqnarray}
	F(x^i,\xi^p)\mapsto F(x^i,\xi^p),\qquad \sigma \mapsto \sigma^j\dfrac{\partial}{\partial y^j}, \qquad w_q\mapsto \dfrac{\partial}{\partial \eta^q}.\label{eq:I}
\end{eqnarray}
Here $(\sigma^j)$ are the components of $\sigma\in\Gamma(NS)$, and the image of $F\in\mathcal{C}^\infty(\mathcal{A})$ under $I$ is viewed as a function on $\mathcal{E}$ that is constant along the fibers of $p^!\mathcal{B}$, $p^!NS$ and $\mathcal{E}_{k}$ $(k\ne k_0)$. To define $I$ conceptually, consider the diagram
\[
	\begin{CD}
		T\mathcal{E}_{k_0} @>{Tp_2}>>  T(p^!\mathcal{A}) @>{Tp_1}>> T\mathcal{A}\\
		@VVV	@VVV	@VVV\\
		\mathcal{E}_{k_0}    @>{p_2}>>  p^!\mathcal{\mathcal{A}} @>{p_1}>> \mathcal{A},
	\end{CD}
\]
in which $Tp_i$ is the tangent map of the natural projection $p_i$ ($i=1,2$). The kernel of the composition $T(p_1\circ p_2)$ is just the pull-back of $NS\oplus \mathcal{B}$ along $\mathcal{A}\to S$, i.e. $p^!(NS\oplus\mathcal{B})$. The map $I$ turns out to be the composition
\[
	I: \mathfrak{a} \to \mathfrak{X}(\mathcal{E}_{k_0}) \to \mathfrak{X}(\mathcal{E}),
\]
which is easily seen an injection. Be altered that the degree adopted on $\mathfrak{a}$ is such selected to be compatible with the degree $\|\cdot\|=|\cdot|_\text{total}-1$ on $\mathfrak{X}(\mathcal{E})$.

\par\noindent A second observation is that $p!(NS \oplus \mathcal{B})$ is a tubular neighborhood of $\mathcal{A}$ in $\mathcal{E}_{k_0}$. In fact, we have $\mathcal{E}_{k_0}\cong p!(NS \oplus \mathcal{B})$ and shall identify them accordingly. Since for any vector bundle $F\to M$ one has $TF|_M=TM\oplus F$, substituting $F\to M$ by $\mathcal{E}_{k_0}=p!(NS \oplus \mathcal{B})\to \mathcal{A}$, we get a projection $T\mathcal{E}_{k_0}\to T\mathcal{E}_{k_0}|_\mathcal{A} \to p^!(NS\oplus \mathcal{B})$ and hence the map 
\begin{align*}
P:\mathfrak{X}(\mathcal{E})\to \mathfrak{X}(\mathcal{E}_{k_0})\to \mathfrak{a}.	
\end{align*}
Locally, the map $P$ is determined by
\begin{eqnarray*}
    \frac{\partial}{\partial x^i}\mapsto 0, \qquad \frac{\partial}{\partial y^j}\mapsto \frac{\partial}{\partial y^j}, \qquad \frac{\partial}{\partial \xi^p}\mapsto 0,\qquad  \frac{\partial}{\partial \eta^q}\mapsto \frac{\partial}{\partial \eta^q},\qquad \frac{\partial}{\partial \gamma_k^{r_k}}\mapsto 0,\\ 
    \eta^q\mapsto 0,\qquad \xi^p\mapsto \xi^p,\qquad \gamma_k^{r_k}\mapsto 0,\qquad f(x,y)\mapsto f(x,0) \quad \forall f\in\mathcal{C}^\infty(NS).
\end{eqnarray*}
\par\noindent It is straightforward to verify that $P\circ I=\text{Id}_{\mathfrak{a}}$ (using local coordinates for example). This indicates $\mathfrak{X}(\mathcal{E})$ splits as $\mathfrak{a}\oplus \text{ker}(P)$.
\begin{lemma}
$(\mathfrak{X}(\mathcal{E}),[\cdot,\cdot]_\text{SN},\mathfrak{a},P)$ is a V-algebra.
\end{lemma}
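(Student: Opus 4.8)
The plan is to verify the two defining conditions of a V-algebra for the quadruple $(\mathfrak{X}(\mathcal{E}),[\cdot,\cdot]_\text{SN},\mathfrak{a},P)$: first, that $\mathfrak{a}$ (realized inside $\mathfrak{X}(\mathcal{E})$ via the injection $I$) is an \emph{abelian} graded Lie subalgebra; second, that $\mathfrak{b}:=\ker(P)$ is a graded Lie subalgebra, equivalently that $P[x,y]_\text{SN}=P[Px,y]_\text{SN}+P[x,Py]_\text{SN}$ for all $x,y$. Throughout I will identify $\mathfrak{a}$ with its image $I(\mathfrak{a})\subset\mathfrak{X}(\mathcal{E})$, use the local model $\mathcal{E}_{k_0}=p^!(NS\oplus\mathcal{B})\to\mathcal{A}$ established above, and exploit the local formulas for $I$ and $P$ in~\eqref{eq:I} together with the Leibniz rule~(5) for $[\cdot,\cdot]_\text{SN}$ and the fact that the bracket is determined by its values on functions and vector fields, rules~(1)--(5).

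First I would check that $\mathfrak{a}$ is abelian. By the derivation property~(5) of $[\cdot,\cdot]_\text{SN}$ with respect to $\wedge$, it suffices to show that the brackets of the generators of $I(\mathfrak{a})$ vanish; the generators are the functions $F(x^i,\xi^p)$ pulled back from $\mathcal{A}$, the coordinate vector fields $\partial/\partial y^j$ (images of sections of $NS$), and the coordinate vector fields $\partial/\partial\eta^q$ (images of $w_q\in\Gamma(\mathcal{B})$). Two functions bracket to zero by~(1). A function $F(x^i,\xi^p)$ bracketed with $\partial/\partial y^j$ or $\partial/\partial\eta^q$ gives, by~(2), the derivative $\pm(\partial/\partial y^j)(F)$ or $\pm(\partial/\partial\eta^q)(F)$, which is zero since $F$ depends only on $x^i,\xi^p$. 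Finally the coordinate vector fields $\partial/\partial y^j$, $\partial/\partial\eta^q$ all commute among themselves by~(3) since they are coordinate derivations with constant (indeed unit) coefficients. Hence all generator brackets vanish, and~(5) propagates this to all of $\mathfrak{a}$: $[\mathfrak{a},\mathfrak{a}]_\text{SN}=0$. (A coordinate-free phrasing: $I(\mathfrak{a})=\Gamma(S(p^!(NS\oplus\mathcal{B})[-1]))$ with $p^!(NS\oplus\mathcal{B})$ the vertical bundle of $\mathcal{E}_{k_0}\to\mathcal{A}$, and vertical-coordinate multivector fields with base-pulled-back coefficients form an abelian algebra, which is the standard fact underlying derived-bracket constructions on vector bundles.)

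Next I would establish that $\mathfrak{b}=\ker(P)$ is a Lie subalgebra, equivalently the displayed identity $P[x,y]_\text{SN}=P[Px,y]_\text{SN}+P[x,Py]_\text{SN}$. Since $P$ is an algebra projection onto the subalgebra $\mathfrak{a}$ complementary to $\ker P$, and $P\circ I=\mathrm{Id}_\mathfrak{a}$, this identity is equivalent to the statement that $[\ker P,\ker P]_\text{SN}\subseteq\ker P$. The cleanest route is again through the local picture: using the explicit local formula for $P$, the kernel of $P$ is spanned, as a $\wedge$-ideal-type submodule, by $\partial/\partial x^i$, $\partial/\partial\xi^p$, $\partial/\partial\gamma_k^{r_k}$, by $\eta^q$ and $\gamma_k^{r_k}$ as degree-$0$ multipliers, by $(f(x,y)-f(x,0))$, and by all multivector fields containing at least one such factor; concretely $\ker P$ consists of multivector fields that either act nontrivially along the $x,\xi,\gamma$-directions, or carry a coefficient/factor vanishing on $\mathcal{A}\subset\mathcal{E}_{k_0}$. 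One then checks, using rules~(2),(3),(5), that the bracket of two such generators again lands in this span — the key sub-checks being that brackets among $\partial/\partial x^i,\partial/\partial\xi^p,\partial/\partial\gamma$ vanish or stay in the span, that $[\partial/\partial y^j, f(x,y)-f(x,0)]_\text{SN}=\partial f/\partial y^j$ still vanishes on $\{y=0\}$, and that brackets of the remaining vertical generators with an ``ideal'' factor preserve the ideal by the Leibniz rule. I would phrase this intrinsically as: $\ker P$ is generated by the annihilator of $\mathcal{A}$ together with derivations tangent to the pair $(\mathcal{E}_{k_0},\mathcal{A})$, which is visibly bracket-closed because it is the kernel of the Lie-algebra-ish projection attached to the splitting $T\mathcal{E}_{k_0}|_\mathcal{A}=T\mathcal{A}\oplus p^!(NS\oplus\mathcal{B})$.

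The main obstacle I anticipate is the bookkeeping in the second part: verifying $[\ker P,\ker P]_\text{SN}\subseteq\ker P$ rigorously requires pinning down a clean generating set for $\ker P$ and confirming closure on all mixed types of generators, including the subtle ``coefficient vanishing on $\mathcal{A}$'' generators where the Leibniz rule~(5) must be used to see that differentiating such a coefficient in a fiber direction keeps it (or its product with another such factor) in $\ker P$. I would handle this by organizing $\ker P$ as a sum of an ``ideal part'' $\mathcal{I}$ (multivector fields all of whose monomial terms contain a factor from $\{\partial/\partial x^i,\partial/\partial\xi^p,\partial/\partial\gamma_k^{r_k},\eta^q,\gamma_k^{r_k},f-f|_{y=0}\}$) and noting $P$ kills exactly $\mathcal{I}$; then $[\mathcal{I},\mathcal{I}]_\text{SN}\subseteq\mathcal{I}$ follows from~(5) once one checks the finitely many generator-on-generator brackets stay in $\mathcal{I}$, which is routine. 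This completes the verification that $(\mathfrak{X}(\mathcal{E}),[\cdot,\cdot]_\text{SN},\mathfrak{a},P)$ is a V-algebra.
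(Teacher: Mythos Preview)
Your overall strategy matches the paper's: show $\mathfrak{a}$ is abelian by checking generator brackets, then show $\ker P$ is bracket-closed by reducing to a generating set $\mathsf{C}$. The first half is correct and essentially identical to the paper's argument.

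In the second half there is a genuine gap. The ``key sub-check'' you single out, that $[\partial/\partial y^j,\,f(x,y)-f(x,0)]_\text{SN}=\partial f/\partial y^j$ ``still vanishes on $\{y=0\}$'', is simply false: take $f(x,y)=y^j$, whose $y^j$-derivative is the constant $1$. It is also the wrong check, since $\partial/\partial y^j$ lies in $\mathfrak{a}$, not in $\ker P$; the relevant closure test is $[c_1,c_2]$ for $c_1,c_2$ in the generating set $\mathsf{C}=\{f:f|_{y=0}=0\}\cup\{\eta^q,\gamma_k^{r_k}\}\cup\{\partial/\partial x^i,\partial/\partial\xi^p,\partial/\partial\gamma_k^{r_k}\}$. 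Among those pairs the one you do not address is $[\partial/\partial\gamma_k^{r_k},\gamma_k^{r_k}]=1$, a nonzero constant that is \emph{not} in the ideal generated by $\mathsf{C}$. Concretely, $b_1=\partial/\partial\gamma_k^{r_k}$ and $b_2=\gamma_k^{r_k}\,\partial/\partial\eta^q$ both lie in $\ker P$, yet $[b_1,b_2]_\text{SN}=\partial/\partial\eta^q$ has $P(\partial/\partial\eta^q)=\partial/\partial\eta^q\neq 0$. So your claim that ``the finitely many generator-on-generator brackets stay in $\mathcal{I}$, which is routine'' fails exactly when $\mathcal{E}$ has components in degrees $k\neq k_0$ (as in the Courant case). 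The paper's proof is terse at this point (``by exhausting all possible combinations''), so you are faithfully following its outline; but the explicit assertion you make is incorrect, and the $\gamma$-obstruction is not dealt with in either argument.
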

\begin{proof}
We only need to show that ker$(P)$ is a Lie algebra and $\mathfrak{a}$ is an abelian Lie subalgebra. Given an element $a\in\mathfrak{a}$, locally, its image $I(a)$ is a product of $F\in\mathcal{C}^\infty(\mathcal{A})$ and one or several terms in $\left\{\dfrac{\partial}{\partial y^j}, \dfrac{\partial}{\partial \eta^q}\right\}$. Since $F$ does not involve coordinates $(y^j)$ and $(\eta^q)$, one has $\dfrac{\partial}{\partial y^j}F=\dfrac{\partial}{\partial \eta^q}F=0$. This implies $[I(a_1),I(a_2)]_\text{SN}=0$ for any $a_1,a_2\in\mathfrak{a}$.
\par An element in $b\in\text{ker}(P)$ satisfies $P(b)=0$. Locally, this requires each summand of $b$ containing at least one factor from the collection $\mathsf{C}=\{f(x,y)\in\mathcal{C}^\infty(NS)| f(x,0)=0\}\cup\{\eta^q, \gamma_k^{r_k}\}\cup\left\{\dfrac{\partial}{\partial x^i}, \dfrac{\partial}{\partial \xi^p},\dfrac{\partial}{\partial \gamma_k^{r_k}}\right\}$. By exhausting all possible combinations of $b_1,b_2\in\text{ker}(P)$ in the form of monomials with a factor in $\mathsf{C}$, one gets $P[b_1,b_2]_\text{SN}=0$. So ker$(P)$ is a Lie subalgebra.
\end{proof}
\par\noindent Since the blended homological vector field $Q$ is a Maurer-Cartan element of the V-algebra $(\mathfrak{X}(\mathcal{E}),[\cdot,\cdot]_\text{SN},\mathfrak{a},P)$, we can apply Theorem \ref{thm:vo}.
\begin{theorem}
\label{thm:l_infty_of_coiso}
Fixing the choice of a tubular neighborhood of $S$ and a complement $\mathcal{B}$ of $\mathcal{A}$ in $\mathcal{E}_{k_0}$, the subbundle $\mathcal{A}$ is associated with an $L_\infty$-algebra $\mathfrak{a}_Q^P$ whose structure maps are given by the higher derived brackets of $Q$ below
\begin{eqnarray*}
	m_0 &=& P(Q)\\
    m_{k}(a_1,\cdots,a_k) &=& P[\cdots [Q,I(a_1)]_\text{SN},\cdots,I(a_k)],\qquad \forall a_1,\cdots, a_k\in \mathfrak{a}.
\end{eqnarray*}
Moreover, if $\mathcal{A}$ is of coisotropic type, the $L_\infty$-algebra $\mathfrak{a}_Q^P$ is flat.
\end{theorem}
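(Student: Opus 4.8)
The statement is essentially a corollary of Voronov's Theorem~\ref{thm:vo} once the preceding lemma identifying $(\mathfrak{X}(\mathcal{E}),[\cdot,\cdot]_\text{SN},\mathfrak{a},P)$ as a V-algebra is in place, so the work is mostly bookkeeping. The plan is to proceed in three steps. First I would verify that the blended homological vector field $Q=Q^++Q^-$ is a blended Maurer-Cartan element of this V-algebra in the sense of the definition preceding Theorem~\ref{thm:vo}: that is, $[Q,\cdot]_\text{SN}$ is a blended differential of $(\mathfrak{X}(\mathcal{E}),[\cdot,\cdot]_\text{SN})$. This is immediate from the Proposition on the differential attribute of homological vector fields, since $Q$ is blended means $[Q,Q]_\text{SN}=0$, equivalently $(Q^+)^2=0$, $(Q^-)^2=0$, $Q^+Q^-+Q^-Q^+=0$, which is exactly what makes $[Q,\cdot]_\text{SN}=[Q^+,\cdot]_\text{SN}+[Q^-,\cdot]_\text{SN}$ a blended differential (the $\pm1$ degrees coming from $\|Q^\pm\|=\pm1$, which hold since $\|Q\|$ decomposes as before). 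Second, I would invoke Theorem~\ref{thm:vo} directly: the higher derived brackets $m_k^Q(a_1,\dots,a_k)=P[\cdots[Q,a_1]_\text{SN},\dots,a_k]_\text{SN}$ (with $m_0=P(Q)$) make $\mathfrak{a}$ into a blended $L_\infty$-algebra. To write this in terms of elements of $\mathfrak{a}$ rather than multi-vector fields, I would compose with the injection $I:\mathfrak{a}\to\mathfrak{X}(\mathcal{E})$, noting that $P\circ I=\text{Id}_\mathfrak{a}$ so that $I$ realizes $\mathfrak{a}$ as the abelian summand; this gives the displayed formulas with $I(a_i)$ inserted.

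\textbf{Flatness when $\mathcal{A}$ is of coisotropic type.} By the last sentence of Theorem~\ref{thm:vo}, $\mathfrak{a}_Q^P$ is flat precisely when $P(Q)=0$, i.e.\ $m_0=0$. So the remaining task is to show $P(Q)=0$ under the coisotropic-type hypothesis. Recall the coisotropic-type condition from Definition~\ref{def:coiso}: $\mathcal{A}$ is concentrated in degree $k_0$ and $Q|_{\mathcal{E}_{k_0}}(N^\bot\mathcal{A})=0$, i.e.\ $Q$ (restricted to the relevant slot) annihilates the conormal bundle of $\mathcal{A}$. In the identification $\mathcal{E}_{k_0}\cong p^!(NS\oplus\mathcal{B})$ with local coordinates $(x^i,\xi^p)$ on $\mathcal{A}$ and fiber coordinates $(y^j,\eta^q)$ on the normal directions, the conormal bundle $N^\bot\mathcal{A}$ is spanned by $dy^j$ and $d\eta^q$ along $S$ (more precisely, the condition ``$Q$ is tangent to $\mathcal{A}$'' is dual to ``$Q$ kills the conormal''). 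I would unwind the local description of $P$ given just before the V-algebra lemma: $P$ sends $\partial/\partial x^i\mapsto 0$, $\partial/\partial\xi^p\mapsto 0$, $\partial/\partial\gamma_k^{r_k}\mapsto 0$, $\xi^p\mapsto\xi^p$, $\eta^q\mapsto 0$, $\gamma_k^{r_k}\mapsto 0$, $f(x,y)\mapsto f(x,0)$, and keeps only $\partial/\partial y^j$, $\partial/\partial\eta^q$. Thus $P(Q)$ picks out, in a local expansion of $Q$, exactly the terms that are products of $\partial/\partial y^j$'s and $\partial/\partial\eta^q$'s with coefficients that survive setting $y=\eta=\gamma=0$. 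The coisotropic-type hypothesis says precisely that when $Q$ is evaluated at points of $\mathcal{A}$ it has no component normal to $T\mathcal{A}$ (tangency to $\mathcal{A}$), which forces every such surviving term to vanish; I would spell this out by noting that $Q|_{\mathcal{E}_{k_0}}(N^\bot\mathcal{A})=0$ means the coefficients of $\partial/\partial y^j$ and $\partial/\partial\eta^q$ in $Q$ vanish on $S$ after setting the normal coordinates to zero, and that higher-multiplicity terms of $P(Q)$ carry such a factor as well. Hence $P(Q)=0$, giving $m_0=0$ and flatness.

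\textbf{Main obstacle.} The conceptual content is light; the real care goes into the local coordinate bookkeeping of the last step --- matching Definition~\ref{def:coiso}'s conormal-annihilation condition to the vanishing of the specific coefficients that $P$ extracts from $Q$, across all multiplicities and across the extra degree components $\mathcal{E}_k$ with $k\ne k_0$. In particular one must be careful that $Q$ is a \emph{multi}-vector field (not just a vector field), so ``tangency'' must be interpreted at the level of each monomial, and one should check that the contraction $Q|_{\mathcal{E}_{k_0}}(N^\bot\mathcal{A})$ in Definition~\ref{def:coiso} really does control all monomials of $P(Q)$ and not merely the multiplicity-one part. I would therefore organize the flatness argument around an explicit local monomial expansion of $Q$, split according to how many factors lie in the ``normal'' set $\{\partial/\partial y^j,\partial/\partial\eta^q\}$ versus the ``tangent/vanishing'' set used to describe $\ker P$, and observe that $P$ annihilates any monomial with a factor from the latter; the coisotropic condition then exactly kills the remaining ones. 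A brief remark that the well-definedness of $\mathfrak{a}_Q^P$ (independence of the construction from auxiliary choices beyond those fixed in the statement) is not claimed here --- only its existence given those choices --- would also be appropriate, since uniqueness up to $L_\infty$-isomorphism would be a separate matter.
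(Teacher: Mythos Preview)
Your proposal is correct and follows essentially the same approach as the paper: invoke the preceding V-algebra lemma together with Theorem~\ref{thm:vo} for the $L_\infty$-structure, and then verify flatness by a local coordinate check that $P(Q)=0$ under the coisotropic-type hypothesis. The paper's own proof is in fact terser than yours --- it simply notes that $\{dy^j,d\eta^q\}$ is a local frame of $N^\bot\mathcal{A}$, so $Q|_\mathcal{A}(N^\bot\mathcal{A})=0$ forces every monomial of $Q|_\mathcal{A}$ to contain at least one factor from $\{\partial/\partial x^i,\partial/\partial\xi^p\}$, which is exactly $P(Q)=0$; your ``main obstacle'' about controlling all multiplicities is thus handled by the paper in one line via this same observation, and your more careful monomial-by-monomial expansion would simply make that step explicit.
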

\begin{proof}
Only the second part of the theorem needs proved. If $\mathcal{A}$ is of coisotropic type, one has $Q|_\mathcal{A}(N^\bot \mathcal{A})=0$. Since $\{dy^j,d\eta^q\}$ is a local frame of $N^\bot \mathcal{A}$, this forces $Q|_\mathcal{A}$ to have at least one factor from $\dfrac{\partial}{\partial x^i}$ and $\dfrac{\partial}{\partial \xi^p}$, but this is equivalent to $m_0=P(Q)=0$, i.e. $\mathfrak{a}_Q^P$ is flat.
\end{proof}

\par Notice that $\mathfrak{a}'=\Gamma(S(\mathcal{A}^*)\otimes (NS\oplus \mathcal{B}))$ is a subspace of $\mathfrak{a}$. If the higher derived brackets in Theorem \ref{thm:l_infty_of_coiso} are closed on $\mathfrak{a}'$, it becomes an $L_\infty$-subalgebra of $\mathfrak{a}_Q^P$. Since the image of $I':\mathfrak{a'}\overset{\tilde{I}}{\hookrightarrow} \mathfrak{a} \xrightarrow{I} \mathfrak{X}(\mathcal{E})$ consists of single vector fields only, the higher derived brackets of $Q$ are closed on $\mathfrak{a}'$ only when $Q$ is of multiplicity $1$. If this is the case, the $L_\infty$-structure on $\mathfrak{a}'$ can be constructed directly from a V-algebra. Consider the injection $I':\mathfrak{a}'\to \mathfrak{X}^1(\mathcal{E})$ and the projection
\begin{eqnarray*}
	P': \mathfrak{X}^1(\mathcal{E}) \xrightarrow{P|_{\mathfrak{X}^1(\mathcal{E})}} \mathfrak{a} \xrightarrow{\tilde{P}} \mathfrak{a'},
\end{eqnarray*}
where $\tilde{P}$ is the natural projection onto its direct summand. One has $\tilde{P}\circ \tilde{I}=\text{Id}_{\mathfrak{a}'}$, from which it is easy to derive $P'\circ I'=\text{Id}_{\mathfrak{a}'}$. Next, let us prove that $(\mathfrak{X}(\mathcal{E}), [\cdot,\cdot]_\text{SN},\mathfrak{a}', P')$ is a $V$-algebra: 
\par 1) it is straightforward that $I'(\mathfrak{a}')\subset I(\mathfrak{a})$ must be abelian; 
\par 2) ker$(P')$ is a $\mathcal{C}^\infty(\mathcal{E})$-module locally generated by the collection
\[
\left\{\dfrac{\partial}{\partial x^i}, \dfrac{\partial}{\partial \gamma^{r_k}_k}, \dfrac{\partial}{\partial \xi^p}, f\dfrac{\partial}{\partial y^j}, g\dfrac{\partial}{\partial \eta^q} \; \Big|\; f,g\in\mathcal{C}^\infty(\mathcal{E}), f|_\mathcal{A}=g|_\mathcal{A}=0\right\}.
\]
It is ready to verify that ker$(P)$ is a Lie subalgebra. Now we can apply Theorem \ref{thm:vo} again.
\begin{theorem}
\label{thm:l_infty_of_coiso2}
If $Q$ is of multiplicity $1$, then $(\mathfrak{X}^1(\mathcal{E}),[\cdot,\cdot]_\text{SN},\mathfrak{a}',P')$ is a V-algebra, and the higher derived brackets of $Q$ under the projection $P'$ induces an $L_\infty$-structure on $\mathfrak{a}'$, which is an $L_\infty$-subalgebra of $\mathfrak{a}_Q^P$.
\end{theorem}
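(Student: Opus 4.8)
The plan is to deduce the statement from Voronov's higher derived bracket construction (Theorem \ref{thm:vo}), mirroring the proof of Theorem \ref{thm:l_infty_of_coiso} but now carried out inside the graded Lie subalgebra $\mathfrak{X}^1(\mathcal{E})$ of single vector fields. I would proceed in four steps: (i) verify that $(\mathfrak{X}^1(\mathcal{E}),[\cdot,\cdot]_\text{SN},\mathfrak{a}',P')$ satisfies the two V-algebra axioms; (ii) note that, $Q$ having multiplicity $1$, the adjoint operator $[Q,\cdot]_\text{SN}$ restricts to a blended differential of $\mathfrak{X}^1(\mathcal{E})$, so $Q$ is a blended Maurer--Cartan element of this V-algebra; (iii) apply Theorem \ref{thm:vo} to obtain the induced $L_\infty$-structure $(\mathfrak{a}')_Q^{P'}$ given by the higher derived brackets of $Q$ relative to $P'$; (iv) show that the inclusion $\tilde{I}:\mathfrak{a}'\hookrightarrow\mathfrak{a}$ realizes $(\mathfrak{a}')_Q^{P'}$ as an $L_\infty$-subalgebra of $\mathfrak{a}_Q^P$.

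For step (i), the decomposition $\mathfrak{X}^1(\mathcal{E})=I'(\mathfrak{a}')\oplus\ker(P')$ is immediate from the already noted identity $P'\circ I'=\text{Id}_{\mathfrak{a}'}$, together with the fact that $\mathfrak{X}^1(\mathcal{E})$ is a graded Lie subalgebra of $(\mathfrak{X}(\mathcal{E}),[\cdot,\cdot]_\text{SN})$. That $I'(\mathfrak{a}')$ is an abelian Lie subalgebra is clear since $I'(\mathfrak{a}')\subseteq I(\mathfrak{a})$ and $I(\mathfrak{a})$ was already shown abelian in the V-algebra lemma for $\mathfrak{a}$. The remaining point is that $\ker(P')$ is closed under $[\cdot,\cdot]_\text{SN}$: using the explicit local generators listed above, one computes the Schouten--Nijenhuis bracket of any two of them and checks that every resulting term again carries a ``forbidden'' direction ($\partial_{x^i}$, $\partial_{\gamma_k^{r_k}}$ or $\partial_{\xi^p}$) or a coefficient function vanishing on $\mathcal{A}$; the intrinsic description of $P$ then guarantees that this conclusion is independent of the chosen local coordinates.

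For steps (ii) and (iii): since $Q$ has multiplicity $1$ we have $Q\in\mathfrak{X}^1(\mathcal{E})$, and since $(\mathcal{E},Q)$ is a blended $Q$-manifold, $[Q,\cdot]_\text{SN}$ is a blended differential of $(\mathfrak{X}(\mathcal{E}),[\cdot,\cdot]_\text{SN})$ which, by the Proposition on homological vector fields, restricts to $\mathfrak{X}^1(\mathcal{E})$. Thus $Q$ is a blended Maurer--Cartan element of $(\mathfrak{X}^1(\mathcal{E}),[\cdot,\cdot]_\text{SN},\mathfrak{a}',P')$, and Theorem \ref{thm:vo} furnishes the (blended, as for $\mathfrak{a}_Q^P$) $L_\infty$-algebra $(\mathfrak{a}')_Q^{P'}$ with structure maps $m'_0=P'(Q)$ and $m'_k(a_1,\cdots,a_k)=P'[\cdots[Q,I'(a_1)]_\text{SN},\cdots,I'(a_k)]$.

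For step (iv), I would use the factorizations $I'=I\circ\tilde{I}$ and $P'=\tilde{P}\circ\big(P|_{\mathfrak{X}^1(\mathcal{E})}\big)$, where $\tilde{P}$ restricts to the identity on $\tilde{I}(\mathfrak{a}')$. Given $a_1,\cdots,a_k\in\mathfrak{a}'$, the iterated bracket $[\cdots[Q,I'(a_1)]_\text{SN},\cdots,I'(a_k)]$ lies in $\mathfrak{X}^1(\mathcal{E})$, because $Q$ and the $I'(a_i)$ do and $\mathfrak{X}^1(\mathcal{E})$ is closed under $[\cdot,\cdot]_\text{SN}$; applying $P$ then lands it in $\tilde{I}(\mathfrak{a}')$, the multiplicity-$1$ summand of $\mathfrak{a}$ (the local formula for $P$ replaces each $\partial$-factor by a $\partial$-factor or by $0$, hence preserves multiplicity). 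Consequently each higher derived bracket $m_k$ (and $m_0=P(Q)$) of $\mathfrak{a}_Q^P$ is closed on $\tilde{I}(\mathfrak{a}')$, so by Definition \ref{def:l_infty} (and its evident blended analogue) $\mathfrak{a}'$ is an $L_\infty$-subalgebra of $\mathfrak{a}_Q^P$, and because $\tilde{P}$ is the identity on $\tilde{I}(\mathfrak{a}')$ the restricted structure is exactly $(\mathfrak{a}')_Q^{P'}$. I expect the only genuinely laborious step to be the case analysis in (i) showing $\ker(P')$ is a Lie subalgebra: it is routine but requires enumerating the Schouten--Nijenhuis brackets of the local generators and tracking the vanishing-on-$\mathcal{A}$ conditions, while everything else follows formally once Theorem \ref{thm:vo} and the closedness of $\mathfrak{X}^1(\mathcal{E})$ under $[\cdot,\cdot]_\text{SN}$ are in hand.
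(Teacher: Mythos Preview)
Your proposal is correct and follows essentially the same approach as the paper: the paper establishes the V-algebra axioms for $(\mathfrak{X}^1(\mathcal{E}),[\cdot,\cdot]_\text{SN},\mathfrak{a}',P')$ in the paragraph immediately preceding the theorem (abelianness of $I'(\mathfrak{a}')$ via $I'(\mathfrak{a}')\subset I(\mathfrak{a})$, closure of $\ker(P')$ via the same list of local generators you mention), and then simply invokes Theorem~\ref{thm:vo}. Your step~(iv), arguing that $P$ preserves multiplicity and hence the higher derived brackets of $\mathfrak{a}_Q^P$ restrict to $\tilde I(\mathfrak{a}')$, makes explicit what the paper leaves implicit in the sentence ``the higher derived brackets of $Q$ are closed on $\mathfrak{a}'$ [when] $Q$ is of multiplicity $1$''.
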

\par\noindent This $L_\infty$-structure on $\mathfrak{a}'$ shall be denoted by ${\mathfrak{a}'}_Q^{P'}$.

\subsection{Deformation of submanifolds}

\par Given a possible deformation $a=(\sigma,\phi)$ of $\mathcal{A}$, i.e. $(\sigma,\phi)\in \mathfrak{a}'_0\subset\mathfrak{a}_0$, denote the vector field $I(a)=I(\sigma,\phi)$ by $X_{a}$, and the deformed subbundle determined by $(\sigma,\phi)$ by $\mathcal{A}_{\sigma,\phi}$. If $\sigma$ and $\phi$ have local coordinate expression $y^j=\sigma^j(x^i)$ and $\phi(v^p)=\phi^p_q w^q$, by the definition of $I$ in Eq.\ref{eq:I}, we have
\[
	X_{a}=\sigma^j\dfrac{\partial}{\partial y^j}+\phi^q_p \xi^p\dfrac{\partial}{\partial \eta^q}.
\]
Without loss of generality, assume $\mathcal{E}_{k_0}$ is a trivial bundle over $\mathcal{A}$. It can be checked that $N^\bot\mathcal{A}_{\sigma,\phi}$ has a global frame $\{\mu^j=dy^j-\dfrac{\partial \sigma^j}{\partial x^i}dx^i, \kappa^q=d\eta^q-\phi^q_p d\xi^p\}$. For an $l$-multi vector field $Z\in\mathfrak{X}(\mathcal{E})$, denote $Z(\mu^J,\kappa^K)|_{\mathcal{E}_{k_0}}$ by $Z^{JK}$ in which $J,K$ are multi-indices satisfying $|J|+|K|=l$. Notice that the action of $Z|_{\mathcal{E}_{k_0}}$ on $N^\bot\mathcal{A}_{\sigma,\phi}$ is completely determined by the collection $\{Z^{JK}\,|\,|J|+|K|=l\}$. If $l=0$, the multi-vector field $Z$ degenerates to a function. In this case, the action $Z|_{\mathcal{E}_{k_0}}(N^\bot\mathcal{A}_{\sigma,\phi})$ degenerates to $Z|_{\mathcal{E}_{k_0}}$. Technically, we have $Z|_{\mathcal{E}_{k_0}}(N^\bot\mathcal{A}_{\sigma,\phi})$ is equal to $I\circ P(Z)$, the multi-vector field determined by the multi-section $P(Z)\in \mathfrak{a}$ by the following lemma.
\begin{lemma}
\label{lem:maclaurin_expansion}
The image of the multi-vector field $Z$ under the map
\begin{eqnarray*}
	H: \mathfrak{X}(\mathcal{E}) &\to& \mathfrak{X}(\mathcal{E})\\
	Z & \mapsto &\sum_{k=0}^\infty \dfrac{1}{k!} (I\circ P) [\cdots [Z,\overbrace{-X_a]_\text{SN},\cdots, -X_a]_\text{SN}}^{k}
\end{eqnarray*}
is a formal $l$-multi-vector field tangent to the fibers of $p^!NS\oplus p^!\mathcal{B}$, in which the coefficient of the term $\dfrac{\partial}{\partial y^J}\wedge \dfrac{\partial}{\partial \eta^K}$ is the Maclaurin series of $Z^{JK}$ with respect to the linear coordinates of the fibers of $p^!NS$ and the fibers of $p^!\mathcal{B}$.
\end{lemma}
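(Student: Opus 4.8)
The plan is to recognise $H$ as the composition of the projection $I\circ P$ with the formal pull-back of multivector fields along the flow of $X_a$, and then to read off the coefficients in a chart. Since $(\sigma,\phi)\in\mathfrak{a}'_0\subset\mathfrak{a}_0$, the multivector field $X_a=I(\sigma,\phi)$ has multiplicity $1$ and shifted total degree $\|X_a\|=0$; hence, by the graded skew-symmetry of $[\cdot,\cdot]_{\mathrm{SN}}$, $[Z,-X_a]_{\mathrm{SN}}=[X_a,Z]_{\mathrm{SN}}=\mathcal{L}_{X_a}Z$ for every homogeneous $Z$, and iterating, the $k$-fold bracket in the definition of $H$ equals $\mathcal{L}_{X_a}^{k}Z$. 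Thus
\[
   H(Z)=(I\circ P)\Big(\textstyle\sum_{k\ge 0}\tfrac{1}{k!}\,\mathcal{L}_{X_a}^{k}Z\Big).
\]
In coordinates $X_a=\sigma^j\partial_{y^j}+\phi^q_p\,\xi^p\partial_{\eta^q}$, and since the coefficients $\sigma^j,\phi^q_p$ do not involve the fibre coordinates $y^j,\eta^q$, the operator $\sum_{k\ge0}\tfrac1{k!}\mathcal{L}_{X_a}^k$ is locally finite on polynomials in $y,\eta$, hence a well-defined formal operator to which $I\circ P$ may be applied termwise; it is precisely the Taylor expansion of the pull-back $\varphi_1^{*}$ by the time-$1$ flow $\varphi_t\colon(x,y,\xi,\eta,\gamma)\mapsto(x,\,y+t\sigma(x),\,\xi,\,\eta+t\phi(x)\xi,\,\gamma)$ of $X_a$, which exists in this formal sense because $X_a$ is affine in the fibre coordinates.

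Now $\varphi_1$ carries $\mathcal{A}=\{y=0,\eta=0\}$ diffeomorphically onto the graph $\mathcal{A}_{\sigma,\phi}$, so $\varphi_1^{*}$ carries the coframe $\{dy^j,d\eta^q\}$ of $N^{\bot}\mathcal{A}$ onto a coframe of $N^{\bot}\mathcal{A}_{\sigma,\phi}$; a short computation shows that, after the appropriate choice of representatives, this coframe is $\{\mu^j,\kappa^q\}$, i.e. $\varphi_1^{*}\mu^j=dy^j$ and $\varphi_1^{*}\kappa^q=d\eta^q$ (up to a $dx^i$-term in $\kappa^q$, needed for $\kappa^q$ to annihilate $T\mathcal{A}_{\sigma,\phi}$, which $P$ kills anyway). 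Hence for an $l$-vector field $Z$ and multi-indices with $|J|+|K|=l$, tracking the Koszul signs,
\[
   (\varphi_1^{*}Z)(dy^J,d\eta^K)=\varphi_1^{*}\big(Z(\mu^J,\kappa^K)\big)=\varphi_1^{*}(Z^{JK}).
\]
Since $P$ retains only the $\partial_{y^J}\wedge\partial_{\eta^K}$-part of a multivector field and restricts its coefficient to $\mathcal{A}$, while $I$ merely re-embeds it, the $\partial_{y^J}\wedge\partial_{\eta^K}$-component of $H(Z)$ is $\big(\varphi_1^{*}Z^{JK}\big)\big|_{\mathcal{A}}=\sum_{k\ge0}\tfrac1{k!}\big(X_a^{k}Z^{JK}\big)\big|_{\mathcal{A}}$ -- exactly the Maclaurin expansion of $Z^{JK}$ in the fibre coordinates of $p^!NS\oplus p^!\mathcal{B}$, with $y^j,\eta^q$ read along the graph as $\sigma^j$, $\phi^q_p\xi^p$ -- and all remaining components of $H(Z)$ vanish, which is the assertion that $H(Z)$ is tangent to the fibres of $p^!NS\oplus p^!\mathcal{B}$.

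The step I expect to be the main obstacle is making the two formal manipulations rigorous simultaneously: that $\sum_{k}\tfrac1{k!}\mathcal{L}_{X_a}^{k}Z$ is a bona fide formal multivector field to which $I\circ P$ may be applied term by term, and that the displayed identity $(\varphi_1^{*}Z)(dy^J,d\eta^K)=\varphi_1^{*}(Z^{JK})$ holds with the correct signs. The latter is delicate because $\mathcal{L}_{X_a}$ acts simultaneously on the coefficients of $Z$ and on its coordinate coframe, and it is the contributions $[X_a,\partial_{x^i}]=-\partial_i\sigma^j\partial_{y^j}-(\partial_i\phi^q_p)\xi^p\partial_{\eta^q}$ and $[X_a,\partial_{\xi^p}]=-\phi^q_p\partial_{\eta^q}$ that convert the bare $\partial_{y},\partial_{\eta}$-components of $Z$ into the mixed components $Z^{JK}=Z(\mu^J,\kappa^K)$; this is where the graded structure of $\mathcal{E}$ enters, through the Koszul signs in these brackets, and it is cleanest to verify once in a local chart and then pass to the formal sum.
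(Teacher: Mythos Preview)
Your route is genuinely different from the paper's. The paper argues by induction on the multiplicity of $Z$: it checks the formula by hand for functions and for the coordinate vector fields $\partial_{x^i},\partial_{y^j},\partial_{\xi^p},\partial_{\eta^q},\partial_{\gamma}$, then uses the Leibniz rule for $[\cdot,-X_a]_{\text{SN}}$ (together with the observation that iterated brackets of a constant coordinate vector field with $X_a$ vanish from the second step on) to obtain $H(Z_1\wedge Z_2)=H(Z_1)\wedge H(Z_2)$ and close the induction. You instead recognise $\sum_k\tfrac{1}{k!}\mathcal{L}_{X_a}^{k}$ as the formal pull-back by the time-$1$ flow of $X_a$ and invoke naturality of the pairing under pull-back. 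Your argument is shorter and more conceptual, and it yields the multiplicativity of $H$ for free via multiplicativity of pull-back; the paper's induction, by contrast, avoids any discussion of flows on graded manifolds and keeps everything at the level of explicit bracket computations.

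One step needs more care, however. The claim ``$\varphi_1^{*}\kappa^q=d\eta^q$ up to a $dx^i$-term\ldots which $P$ kills anyway'' is not innocuous. One has $\varphi_1^{*}\kappa^q=d\eta^q+(\partial_i\phi^q_p)\xi^p\,dx^i$, and this extra term enters the naturality identity through the $\partial_{x^i}$-components of $\varphi_1^{*}Z$; but $P$ is applied to the multivector field $\varphi_1^{*}Z$ \emph{before} the pairing that extracts the $\partial_{y^J}\wedge\partial_{\eta^K}$-coefficient, so it does not kill this contribution. Concretely, for $Z=\partial_{x^i}$ one computes $H(Z)=-(\partial_i\sigma^j)\partial_{y^j}-(\partial_i\phi^q_p)\xi^p\partial_{\eta^q}$, whose $\partial_{\eta^q}$-coefficient is nonzero, whereas $Z(\kappa^q)=0$ with the paper's $\kappa^q$. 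The clean fix---which you essentially point to---is to work from the outset with the corrected covector $\tilde\kappa^q=d\eta^q-\phi^q_p\,d\xi^p-(\partial_i\phi^q_p)\xi^p\,dx^i$, which genuinely lies in $N^\bot\mathcal{A}_{\sigma,\phi}$ and satisfies $\varphi_1^{*}\tilde\kappa^q=d\eta^q$ exactly; then your naturality identity holds without remainder and the proof is complete. (The same correction is tacitly needed in the paper's own frame $\{\mu^j,\kappa^q\}$ and in its verification of the case $Z=\partial_{x^i}$.)
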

\begin{proof}
The theorem can be proved by induction on the multiplicity of $Z$. Since $H$ is linear in $Z$, we may assume $Z$ has only one summand. 
\par 1) If $Z$ is of multiplicity $0$, i.e. $Z$ is a smooth function over $\mathcal{E}$. Noticing that $P(\gamma^{r_k}_k)=0$ and $[\gamma^{r_k}_k,X_a]_\text{SN}=0$, we may assume $Z=Z(x,y,\xi,\eta)$ does not depend on $\gamma^{r_k}_k$. By straightforward calculation, we get
\[
	(I\circ P)[\cdots [Z,\overbrace{-X_a]_\text{SN},\cdots, -X_a]_\text{SN}}^{k}=\sum_{|J|+|K|=k} {k\choose |J|}\dfrac{\partial^k Z}{\partial y^J\partial \eta^K}(x,0,\xi,0)\sigma^J \phi^K,
\]
in which we adopt the abbreviation $\displaystyle \phi^q=\sum_{p}\phi^q_p\xi^p$. This indicates $H(Z)$ is the Maclaurin series of $Z|_{\mathcal{E}_{k_0}}=Z(x^i,\sigma^j(x),\xi^p,\phi^q)$ with respect to the coordinates $y^j$ and $\eta^q$, which verifies the theorem.
\par 2) If $Z$ is of multiplicity $1$, we only verify the cases $Z$ is a constant vector field in this step and prove the general cases in 3). \ding{192} If $Z=\dfrac{\partial}{\partial \gamma^{r_k}_k}$, we have $H(Z)=0$, which is consistent with $Z|_{\mathcal{E}_{k_0}}=0$. \ding{193} If $Z=\dfrac{\partial}{\partial x^i}$, the only non-vanishing term in $H(Z)$ is $(I\circ P)[Z,-X_a]_\text{SN}=-\dfrac{\partial\sigma^j}{\partial x^i}$. The lemma holds in this case if noticing that $Z|_{\mathcal{E}_{k_0}}(\mu^j)=-\dfrac{\partial \sigma^j}{\partial x^i}$ and $Z(\kappa^q)=0$. \ding{194} If $Z=\dfrac{\partial}{\partial y^j}$, we have $[\cdots [Z,\overbrace{-X_a]_\text{SN},\cdots, -X_a]_\text{SN}}^{k}=0$ when $k\ge 1$, and the only non-vanishing term is $I\circ P(Z)=Z$. This in turn matches $Z(\mu^{j'})=\delta^{jj'}$ and $Z(\kappa^q)=0$ where $\delta^{jj'}$ is the Kronecker symbol. \ding{195} The cases $Z=\dfrac{\partial}{\partial \xi^p}$ is similar to $Z=\dfrac{\partial}{\partial x^i}$ and the case $Z=\dfrac{\partial}{\partial \eta^q}$ is similar to $Z=\dfrac{\partial}{\partial y^j}$.
\par 3) Assume the lemma is true when the multiplicity of $Z$ is no greater than $N$. Now consider the case the multiplicity is $N+1$. Clearly, $Z$ can be written as a product $Z_1\wedge Z_2$ with $|Z_1|=N$, $|Z_2|=1$ and $Z_2$ being a constant vector field. By 2), we have $[\cdots [Z_2,\overbrace{-X_a]_\text{SN},\cdots, -X_a]_\text{SN}}^{k}=0$ when $k\ge 1$. As a result,  
\begin{eqnarray*}
	[\cdots [Z_1Z_2,\overbrace{-X_a]_\text{SN},\cdots, -X_a]_\text{SN}}^{k} &=& [\cdots [Z_1,\overbrace{-X_a]_\text{SN},\cdots, -X_a]_\text{SN}}^{k}\wedge Z_2 \\
    && +{k\choose 1} [\cdots [Z_1,\overbrace{-X_a]_\text{SN},\cdots, -X_a]_\text{SN}}^{k-1}\wedge [Z_2,-X_a]_\text{SN}.
\end{eqnarray*}
This gives rise to $H(Z_1Z_2)=H(Z_1)H(Z_2)$ and verifies the lemma in this case. By the principle of induction, the lemma is proved.
\end{proof}

\par The Maclaurin series in Lemma \ref{lem:maclaurin_expansion} may not be convergent in general. To ensure convergence, it is natural to require that $Z$ is analytic in $y^j$ and $\eta^q$. The following definition does not depend on local coordinates.

\begin{definition}
A function $F\in\mathcal{C}^\infty(\mathcal{E}_{k_0})$ is a \emph{polynomial} along $p^!NS$ if $G|_{p^!NS}$ is a polynomial in the linear coordinates of the fibers of $p^!NS$ with coefficients in $\mathcal{C}^\infty(\mathcal{A})$. A function $G$ is called \emph{analytic} along $p^!NS$ if $G$ has Maclaurin series with respect to the linear coordinates of the fibers of $p^!NS$ and the Maclaurin series is convergent on all fibers. An $l$-multi-vector field $Z$ on $\mathcal{E}$ is \emph{analytic} along $p^!NS$ if for any functions $F_1,\cdots, F_l\in\mathcal{C}^\infty(\mathcal{E}_{k_0})$ that are polynomial along $p^!NS$, one has $Z|_{\mathcal{E}_{k_0}}(H_1,\cdots,H_l)$ is an analytic function along $p^!NS$. A multi-vector field is analytic along $p^!NS$ if its component of multiplicity $l$ is analytic along $p^!NS$ for all $l$.
\end{definition}
\par\noindent We are ready to state the following result.
\begin{theorem}
\label{thm:coiso_deform}
Suppose $Q$ is analytic along $p^!NS$. Given $(\sigma,\phi)\mathfrak{a}'_0$, the submanifold $\mathcal{A}_{\sigma,\phi}$ is of coisotropic type if and only if $-(\sigma,\phi)$ is a Maurer-Cartan element of the $L_\infty$-algebra $\mathfrak{a}_Q^P$ in Theorem \ref{thm:l_infty_of_coiso}. 
\end{theorem}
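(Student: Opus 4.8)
The plan is to reduce the Maurer--Cartan equation of $\mathfrak{a}_Q^P$ evaluated at $-(\sigma,\phi)$ to the single equation $H(Q)=0$, and then to recognise $H(Q)=0$ as the coisotropy condition for $\mathcal{A}_{\sigma,\phi}$ via Lemma \ref{lem:maclaurin_expansion}. Set $a=(\sigma,\phi)\in\mathfrak{a}'_0\subset\mathfrak{a}_0$; then $I(-a)=-X_a$ is the single vector field $-\big(\sigma^j\frac{\partial}{\partial y^j}+\phi^q_p\xi^p\frac{\partial}{\partial\eta^q}\big)$, and $-a$ has degree $0$, as a Maurer--Cartan element must. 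By Theorem \ref{thm:vo} and Theorem \ref{thm:l_infty_of_coiso}, the structure maps of the blended $L_\infty$-algebra $\mathfrak{a}_Q^P$ are $m_0=P(Q)$ and $m_k(a_1,\dots,a_k)=P[\cdots[Q,I(a_1)]_\text{SN},\cdots,I(a_k)]_\text{SN}$; since the higher derived brackets depend linearly on $Q=Q^++Q^-$, the families $\{m_k\}$ (from $Q^+$) and $\{n_k\}$ (from $Q^-$) satisfy $m_k+n_k=m_k$ in this notation, so the Maurer--Cartan equation of $\mathfrak{a}_Q^P$ is the one equation $\sum_{k\ge 0}\frac{1}{k!}\,P[\cdots[Q,-X_a]_\text{SN},\cdots,-X_a]_\text{SN}=0$ (with $k$ nested brackets in the $k$-th term). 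Applying the injective linear map $I$ termwise, this is equivalent to $H(Q)=0$. The hypothesis that $Q$ is analytic along $p^!NS$ is exactly what makes the series defining $H(Q)$ converge, so that $H(Q)$ is a bona fide multi-vector field and Lemma \ref{lem:maclaurin_expansion} applies; moreover, since $\mathcal{A}$ is of coisotropic type, $m_0=P(Q)=0$ by Theorem \ref{thm:l_infty_of_coiso}, so the $k=0$ term is automatically zero.

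Next I would read off what $H(Q)=0$ means. On the one hand, because $\mathcal{A}_{\sigma,\phi}$ is concentrated in degree $k_0$, it is of coisotropic type precisely when $Q|_{\mathcal{E}_{k_0}}$ annihilates $N^\bot\mathcal{A}_{\sigma,\phi}$; since $\{\mu^j,\kappa^q\}$ is a global frame of $N^\bot\mathcal{A}_{\sigma,\phi}$ and $Q$ is a finite sum of multi-vector fields of assorted multiplicities $l$, this holds if and only if, for every such $l$ and all multi-indices $J,K$ with $|J|+|K|=l$, the function $Q^{JK}=Q(\mu^J,\kappa^K)|_{\mathcal{E}_{k_0}}$ vanishes along $\mathcal{A}_{\sigma,\phi}$. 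On the other hand, Lemma \ref{lem:maclaurin_expansion} identifies the coefficient of $\frac{\partial}{\partial y^J}\wedge\frac{\partial}{\partial\eta^K}$ in $H(Q)$ with the Maclaurin series, in the fibre coordinates of $p^!NS\oplus p^!\mathcal{B}$, of $Q^{JK}$; under the analyticity hypothesis this series is convergent and reproduces $Q^{JK}$ restricted to $\mathcal{A}_{\sigma,\phi}$. Equivalently, $\sum_{k\ge 0}\frac{1}{k!}[\cdots[Q,-X_a]_\text{SN},\cdots,-X_a]_\text{SN}$ is the pull-back of $Q$ along the time-one flow of the vector field $X_a$ on $\mathcal{E}_{k_0}$, which carries $\mathcal{A}$ onto $\mathcal{A}_{\sigma,\phi}$ and $N^\bot\mathcal{A}_{\sigma,\phi}$ onto $N^\bot\mathcal{A}$ (the latter being the span of $\{dy^j,d\eta^q\}$), and $H(Q)=I\circ P$ of this pull-back. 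Hence $H(Q)=0$ if and only if all the functions $Q^{JK}$ vanish along $\mathcal{A}_{\sigma,\phi}$, i.e.\ if and only if $\mathcal{A}_{\sigma,\phi}$ is of coisotropic type. Chaining this with the equivalence of the first paragraph yields the theorem.

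The step I expect to be the crux is the identification in the second paragraph: one must check that the formal Maclaurin series produced by Lemma \ref{lem:maclaurin_expansion} is, under the analyticity assumption, genuinely the restriction of $Q^{JK}$ to the deformed submanifold, and that a convergent power series vanishes as a function exactly when it vanishes as a series; this is precisely where analyticity of $Q$ along $p^!NS$ is indispensable, and without it only a statement about formal deformations survives. By contrast, the passage between the Maurer--Cartan equation and $H(Q)=0$ is formal once convergence is in hand, and the blended (rather than merely positive) nature of $Q$ introduces no extra difficulty, since everything is linear in $Q$ and one never needs to separate the degree $+1$ and degree $-1$ parts — one only needs $-(\sigma,\phi)$ to lie in $\mathfrak{a}_0$, which is guaranteed by $(\sigma,\phi)\in\mathfrak{a}'_0$.
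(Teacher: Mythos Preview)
Your proposal is correct and follows essentially the same route as the paper: rewrite the Maurer--Cartan equation as $H(Q)=0$ and invoke Lemma~\ref{lem:maclaurin_expansion} together with the analyticity hypothesis to identify this with $Q|_{\mathcal{E}_{k_0}}(N^\bot\mathcal{A}_{\sigma,\phi})=0$. The paper's proof is terser and adds one clarifying remark you leave implicit: the analyticity hypothesis is stated only along $p^!NS$, whereas Lemma~\ref{lem:maclaurin_expansion} expands in both the $y^j$ and the $\eta^q$ directions; the paper notes that $Q$ is automatically \emph{polynomial} in the $\eta^q$ (they have nonzero degree and $Q$ has bounded degree), so no extra assumption is needed there. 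Your time-one-flow interpretation of $\sum_k\frac{1}{k!}[\cdots[Q,-X_a]_\text{SN},\cdots,-X_a]_\text{SN}$ is a nice geometric gloss not present in the paper but entirely consistent with it.
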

\begin{proof}
By the analyticity of $Q$, the Maurer-Cartan equation of $-(\sigma,\phi)$ in $\mathfrak{a}^P_Q$ is convergent. Since $Q$ is polynomial with respect to the coordinates $\eta^q$, the analyticity condition of $Q$ along $p^!\mathcal{B}$ is automatically satisfied. By Lemma \ref{lem:maclaurin_expansion}, $-(\sigma,\phi)$ is a Maurer-Cartan element is equivalent to $Q|_{\mathcal{E}_{k_0}}(N^\bot\mathcal{A}_{\sigma,\phi})=0$, i.e. $\mathcal{A}_{\sigma,\phi}$ is a submanifold of coisotropic type.  
\end{proof}
\begin{remark}
If $S=M$, the normal bundle $NS$ is trivial. The analyticity condition on $Q$ is trivially satisfied in this situation.
\end{remark}
\par Generally, $\mathfrak{a}_0$ contains elements not in $\mathfrak{a}'_0$. Although a deformation must be a Maurer-Cartan element, conversely, a Maurer-Cartan element of $\mathfrak{a}_Q^P$ may not be a deformation of $\mathcal{A}$. However, this happens only when $k_0=-1$, since in this case $\dfrac{\partial}{\partial \eta^q}$ has total degree $0$ and can be multiplied to any homogeneous multi-vector field without changing the total degree. Readers shall see that the situation $k_0=-1$ dominates all our applications. Alternatively, we first find a special case in which the 1-1 correspondence can be reachieved.
\begin{corollary}
\label{cor:coiso_deform}
If $Q$ is analytic along $p^!NS$ and $Q$ is of multiplicity $1$, the $L_\infty$-algebra ${\mathfrak{a}'}_Q^{P'}$ in Theorem \ref{thm:l_infty_of_coiso2} controls the deformations of $\mathcal{A}$ so that there is 1-1 correspondence between the deformations of $\mathcal{A}$ and the Maurer-Cartan elements of ${\mathfrak{a}'}_Q^{P'}$.
\end{corollary}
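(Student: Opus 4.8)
The plan is to deduce Corollary~\ref{cor:coiso_deform} from Theorem~\ref{thm:coiso_deform} together with Theorem~\ref{thm:l_infty_of_coiso2}, using the multiplicity-$1$ hypothesis on $Q$ to replace $\mathfrak{a}_Q^P$ by the smaller $L_\infty$-algebra ${\mathfrak{a}'}_Q^{P'}$, whose degree-$0$ part is \emph{precisely} the space $\mathfrak{a}'_0=\Gamma(NS)\oplus\Gamma(\mathcal{A}^*\otimes\mathcal{B})$ of deformation data; this is what removes the ambiguity described in the paragraph preceding the corollary. First I would record two facts. (i) Since $Q$ has multiplicity $1$, Theorem~\ref{thm:l_infty_of_coiso2} gives ${\mathfrak{a}'}_Q^{P'}$ as an $L_\infty$-subalgebra of $\mathfrak{a}_Q^P$; by the definition of $L_\infty$-subalgebra its structure maps are then the \emph{restrictions} to $\mathfrak{a}'$ of the structure maps of $\mathfrak{a}_Q^P$. (ii) The degree-$0$ part of $\mathfrak{a}'$ coincides with $\mathfrak{a}'_0$, which by the identification made earlier is the full set of pairs $(\sigma,\phi)$ eligible to be deformation data of $\mathcal{A}$; checking this amounts to tracing the shift conventions so that $X_{(\sigma,\phi)}=I'(\sigma,\phi)$ sits in $\|\cdot\|$-degree $0$.

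With (i) and (ii) in place, I would run a short chain of equivalences. Fix $(\sigma,\phi)\in\mathfrak{a}'_0$. By (i) the Maurer-Cartan series $\sum_{k=0}^{\infty}\frac{1}{k!}(m_k+n_k)\big(-(\sigma,\phi),\cdots,-(\sigma,\phi)\big)$ computed in ${\mathfrak{a}'}_Q^{P'}$ coincides term-by-term with the one computed in $\mathfrak{a}_Q^P$, and its convergence is exactly the statement already proved for Theorem~\ref{thm:coiso_deform} ($Q$ is analytic along $p^!NS$ by hypothesis, and polynomial along $p^!\mathcal{B}$ automatically). Hence $-(\sigma,\phi)$ is a Maurer-Cartan element of ${\mathfrak{a}'}_Q^{P'}$ if and only if it is one of $\mathfrak{a}_Q^P$, which by Theorem~\ref{thm:coiso_deform} is equivalent to $\mathcal{A}_{\sigma,\phi}$ being of coisotropic type, i.e.\ to $(\sigma,\phi)$ being a deformation of $\mathcal{A}$. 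Finally, $(\sigma,\phi)\mapsto-(\sigma,\phi)$ is an involution of $\mathfrak{a}'_0$, and every Maurer-Cartan element of ${\mathfrak{a}'}_Q^{P'}$ is of degree $0$ and hence lies in $\mathfrak{a}'_0$; putting these together yields the asserted 1-1 correspondence between deformations of $\mathcal{A}$ and Maurer-Cartan elements of ${\mathfrak{a}'}_Q^{P'}$. If $Q$ has no degree $-1$ component, then ${\mathfrak{a}'}_Q^{P'}$ is an ordinary $L_\infty$-algebra and the argument is unchanged.

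The genuine obstacle is item (ii), namely showing that the degree-$0$ part of ${\mathfrak{a}'}_Q^{P'}$ is the full space of deformation data and nothing more, and making transparent why multiplicity $1$ of $Q$ is indispensable there. Once $Q$ is a single vector field, $I'$ takes values in $\mathfrak{X}^1(\mathcal{E})$, the iterated Schouten-Nijenhuis brackets appearing in the higher derived brackets of $Q$ remain of multiplicity $1$, $P$ restricted to $\mathfrak{X}^1(\mathcal{E})$ agrees with $P'$ and lands inside $\mathfrak{a}'$, so the entire $L_\infty$-structure genuinely restricts to $\mathfrak{a}'$; there is then no room to append a factor such as $\partial/\partial\eta^q$ (whose total degree $k_0+1$ equals $0$ precisely when $k_0=-1$) without leaving $\mathfrak{X}^1(\mathcal{E})$, and it was exactly such factors that produced spurious Maurer-Cartan elements in $\mathfrak{a}_Q^P$ when $k_0=-1$. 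The remaining tasks---keeping track of the two gradings carried by $\mathfrak{a}'$, and of the sign convention inherited from Lemma~\ref{lem:maclaurin_expansion} (the $-X_a$ there, which is why the Maurer-Cartan element attached to a deformation is $-(\sigma,\phi)$ rather than $(\sigma,\phi)$)---are routine.
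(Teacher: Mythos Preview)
Your proposal is correct and follows precisely the approach implicit in the paper: the corollary is left without proof there because it is immediate from Theorem~\ref{thm:coiso_deform} and Theorem~\ref{thm:l_infty_of_coiso2} once one observes that ${\mathfrak{a}'}_Q^{P'}$ is an $L_\infty$-subalgebra of $\mathfrak{a}_Q^P$ with degree-$0$ part exactly $\mathfrak{a}'_0$. Your elaboration of why multiplicity $1$ is what eliminates the spurious Maurer-Cartan elements (those carrying extra $\partial/\partial\eta^q$ factors) makes explicit what the paper only hints at in the paragraph preceding the corollary.
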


\par\noindent Secondly, one many enforce the 1-1 correspondence between deformations and Maurer-Cartan elements by shrinking the space $\mathfrak{a}_0$. Let $\bar{\mathfrak{a}}$ be the graded vector space defined by
\[
	\bar{\mathfrak{a}}_0=\mathfrak{a}'_0,\quad \text{ and }\quad \bar{\mathfrak{a}}_k=\mathfrak{a}_k\text{ when }k\ne 0.
\]
Since the structure maps $\{m_k\}$ of $\mathfrak{a}$ are of degree $1$, no image under these structure maps lies in $\mathfrak{a}_0$. This means $\{m_k\}$ are closed on $\bar{\mathfrak{a}}$, and the subspace $\bar{\mathfrak{a}}$ is an $L_\infty$-subalgebra of $\mathfrak{a}$, denoted by $\bar{\mathfrak{a}}_Q^P$.
\begin{corollary}
\label{cor:coiso_deform_perfect}
If $Q$ is analytic along $p^!NS$, the $L_\infty$-algebra $\bar{\mathfrak{a}}_Q^P$ defined above controls the deformations of $\mathcal{A}$ as a submanifold of coisotropic type, i.e. there is a 1-1 correspondence between the deformations of $\mathcal{A}$ and the Maurer-Cartan elements of $\bar{\mathfrak{a}}_Q^P$.
\end{corollary}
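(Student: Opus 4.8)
The plan is to obtain the corollary by assembling ingredients that are already in place: the identification of $\bar{\mathfrak{a}}_0=\mathfrak{a}'_0$ with the set of all candidate deformations $(\sigma,\phi)$ of $\mathcal{A}$; the fact, recorded just before the statement, that $\bar{\mathfrak{a}}_Q^P$ is an $L_\infty$-subalgebra of $\mathfrak{a}_Q^P$; and Theorem~\ref{thm:coiso_deform}, which under the analyticity hypothesis on $Q$ characterises which candidates yield a submanifold of coisotropic type. The desired correspondence will be the sign flip $(\sigma,\phi)\mapsto-(\sigma,\phi)$.

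First I would spell out what a Maurer-Cartan element of $\bar{\mathfrak{a}}_Q^P$ is. Since $\bar{\mathfrak{a}}$ coincides with $\mathfrak{a}$ in every degree except $0$, where $\bar{\mathfrak{a}}_0=\mathfrak{a}'_0$, and its structure maps are the restrictions to $\bar{\mathfrak{a}}$ of those of $\mathfrak{a}_Q^P$, a degree-$0$ element $v$ is a Maurer-Cartan element of $\bar{\mathfrak{a}}_Q^P$ precisely when $v\in\mathfrak{a}'_0$ and $v$ solves the Maurer-Cartan equation of $\mathfrak{a}_Q^P$; the analyticity of $Q$ along $p^!NS$ makes this series converge for such $v$, exactly as in the proof of Theorem~\ref{thm:coiso_deform}. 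Writing $v=-(\sigma,\phi)$ with $(\sigma,\phi)\in\mathfrak{a}'_0$, Theorem~\ref{thm:coiso_deform} then says that $v$ is a Maurer-Cartan element of $\mathfrak{a}_Q^P$ if and only if $\mathcal{A}_{\sigma,\phi}$ is of coisotropic type, i.e. if and only if $(\sigma,\phi)$ is a deformation of $\mathcal{A}$. Since $(\sigma,\phi)\mapsto-(\sigma,\phi)$ is an involution of $\mathfrak{a}'_0$, this produces the claimed $1$-$1$ correspondence. The reason for working with $\bar{\mathfrak{a}}$ rather than with $\mathfrak{a}_Q^P$ directly, as in Theorem~\ref{thm:coiso_deform}, is precisely that here \emph{every} degree-$0$ element of the ambient algebra is already a legitimate candidate deformation, so nothing is lost or spuriously created when passing between the two Maurer-Cartan sets.

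I do not anticipate a serious obstacle, because the analytic heart of the matter---that the higher derived brackets of $Q$ reproduce the Maclaurin expansion of the action of $Q|_{\mathcal{E}_{k_0}}$ on $N^\bot\mathcal{A}_{\sigma,\phi}$---is already Lemma~\ref{lem:maclaurin_expansion}, and the closure of the structure maps on $\bar{\mathfrak{a}}$ is the degree count recorded before the statement. The points that still deserve a careful word are the bookkeeping reductions from Section~\ref{sec:deform}: that after shrinking the tubular neighbourhood to arrange $M=NS$, and after passing from $\psi$ over $S_\sigma$ to $\phi\in\Gamma(\mathcal{A}^*\otimes\mathcal{B})$ via extension and restriction to $S$, one indeed has a genuine bijection between $\mathfrak{a}'_0$ and the set of candidate deformations; and that, on this subspace, restricting the Maurer-Cartan equation from $\mathfrak{a}_Q^P$ to $\bar{\mathfrak{a}}_Q^P$ changes nothing. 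The only mild subtlety is the case $k_0=-1$ flagged above, where $\mathfrak{a}_0$ strictly contains $\mathfrak{a}'_0$; but this is exactly what the definition of $\bar{\mathfrak{a}}$ is designed to accommodate, so once that definition is in force the argument is formal.
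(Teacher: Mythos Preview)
Your proposal is correct and matches the paper's approach. The paper states the corollary without proof, relying on the immediately preceding observation that $\bar{\mathfrak{a}}$ is an $L_\infty$-subalgebra of $\mathfrak{a}_Q^P$ together with Theorem~\ref{thm:coiso_deform}; your write-up makes this implicit reasoning explicit and correctly identifies the sign flip $(\sigma,\phi)\mapsto-(\sigma,\phi)$ as the bijection.
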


\par Our results apply to the following situations immediately.
\begin{example}
Let $S$ be a coisotropic submanifold of a Poisson manifold $(M,\pi)$. By Theorem \ref{thm:l_infty_of_coiso}, $S$ is associated with a flat $L_\infty$-structure on $\Gamma(\wedge^\cdot NS)[1]$. By Theorem \ref{thm:coiso_deform}, if $\pi$ is analytic along $NS$, the $L_\infty$-algebra $(\Gamma(\wedge^\cdot NS)[1])_\pi^P$ controls the deformations of $S$. Thus, we recover the construction of an $L_\infty$-algebra associated with the coisotropic submanifold $S$ in \cite{cf} and the result on deformations of $S$ in \cite{sz}.
\par\noindent In a general blended $Q$-manifold $M$ of degree $0$, one has with $Q=\pi+f$. Similarly, the deformations of a submanifold $S$ of coisotropic type is controlled by the blended $L_\infty$-algebra $(\Gamma(\wedge^\cdot NS)[1])_{\pi+f}^P$ upon assuming $\pi$ is analytic along $NS$.
\end{example}

\begin{example}
Let $E\to S$ be a Lie subalgebroid of $(A,[\cdot,\cdot],\rho)$, then $E[1]$ is submanifold of $(A[1],d_A)$ of coisotropic type. Let $F$ be a complement of $E$ in $A|_S$. Since $d_A$ is of multiplicity $1$, by Theorem \ref{thm:l_infty_of_coiso2}, $E$ is associated with a flat $L_\infty$-algebra $\mathfrak{a}'=\Gamma(S(E[1]^*)\otimes (NS\oplus F[1]))$. By Corollary \ref{cor:coiso_deform}, if $d_A$ is analytic along $p^!NS$, the $L_\infty$-algebra ${\mathfrak{a}'}_{d_A}^{P'}$ controls the deformations of $E$. Here $p^!NS$ is the pull-back of $NS$ along $E\to S$. This result is first found by the author in \cite{j}, and we re-obtain it from a new viewpoint. We also point out that when $S=M$, the analyticity condition is not necessary. The space $\mathfrak{a}'$ degenerates to $S(E[1]^*)\otimes F[1]$, and only the first three higher derived brackets of $d_A$ are non-vanishing.
\end{example}

\subsection{Simultaneous Deformations}
\par The $L_\infty$-algebras $\mathfrak{a}_Q^P$ and ${\mathfrak{a}'}_Q^{P'}$ are constructed from the V-algebras $(\mathfrak{X}(\mathcal{E}),[\cdot,\cdot]_\text{SN},\mathfrak{a},P)$ and $(\mathfrak{X}^1(\mathcal{E}),[\cdot,\cdot]_\text{SN},\mathfrak{a},P)$ ($Q$ is of multiplicity $1$ in the second case), respectively. We can apply Theorem \ref{thm:cs} if $Q$ is positive (i.e. the degree $-1$ component of $Q$ vanishes).
\begin{theorem}
\label{thm:simul_deform}
If $Q$ is of degree $1$ and analytic along $p^!NS$, there is a flat $L_\infty$-structure on $\mathfrak{X}(\mathcal{E})[1]\oplus \mathfrak{a}$ whose structure maps are given by Eq.\ref{eq:comb_l_inf1} -- \ref{eq:comb_l_inf5}. The $L_\infty$-algebra $(\mathfrak{X}(\mathcal{E})[1]\oplus \mathfrak{a})_Q^P$ controls the simultaneous deformations of $Q$ and $\mathcal{A}$ in the sense that given $\tilde{Q}\in \mathfrak{X}(\mathcal{E})$ and $(\sigma,\phi)\in\mathfrak{a}'_0$ satisfying $\tilde{Q}$ is analytic along $p^!NS$ and $\|\tilde{Q}\|=1$, then $(\tilde{Q},-(\sigma,\phi))$ is a Maurer-Cartan element of $(\mathfrak{X}(\mathcal{E})[1]\oplus \mathfrak{a})_Q^P$ if and only if
\[
\begin{cases}
Q+\tilde{Q} \text{ is a homological vector field of } \mathcal{E}, \text{ and}\\
\mathcal{A}_{\sigma,\phi} \text{ is a submanifold of coisotropic type w.r.t. } Q+\tilde{Q}.
\end{cases}
\]
\end{theorem}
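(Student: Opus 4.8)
\emph{Plan.} The strategy is to obtain the entire statement from the combined $L_\infty$-construction of Theorem~\ref{thm:cs}, applied to the V-algebra $(\mathfrak{X}(\mathcal{E}),[\cdot,\cdot]_\text{SN},\mathfrak{a},P)$ established above, with $\Delta=Q$. Once one checks that $Q$ is a \emph{positive} Maurer-Cartan element of this V-algebra, both the $L_\infty$-structure on $\mathfrak{X}(\mathcal{E})[1]\oplus\mathfrak{a}$ and the characterization of its Maurer-Cartan elements are inherited wholesale, and the coisotropy of the deformed submanifold is read off via Theorem~\ref{thm:coiso_deform}.

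\emph{Steps.} First I would verify that $Q$ is a positive Maurer-Cartan element of the V-algebra: since $Q$ is of degree $1$, that is, its degree $-1$ component vanishes, we have $\|Q\|=1$ and $[Q,Q]_\text{SN}=0$, and the graded Jacobi identity then gives $[Q,[Q,\cdot]_\text{SN}]_\text{SN}=0$, so $[Q,\cdot]_\text{SN}$ is a positive differential. Theorem~\ref{thm:cs} therefore applies verbatim and equips $\mathfrak{X}(\mathcal{E})[1]\oplus\mathfrak{a}$ with the $L_\infty$-structure whose structure maps are Eq.\ref{eq:comb_l_inf1}--\ref{eq:comb_l_inf5}; flatness follows because $\mathcal{A}$ is of coisotropic type, whence $m_0=P(Q)=0$ by Theorem~\ref{thm:l_infty_of_coiso}, i.e. $Q\in\ker(P)$. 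Next I would feed the hypotheses $\|\tilde Q\|=1$ and $(\sigma,\phi)\in\mathfrak{a}'_0\subset\mathfrak{a}_0$ into the last clause of Theorem~\ref{thm:cs}: the pair $(\tilde Q,-(\sigma,\phi))\in\mathfrak{X}(\mathcal{E})_1\oplus\mathfrak{a}_0$ is a Maurer-Cartan element of $(\mathfrak{X}(\mathcal{E})[1]\oplus\mathfrak{a})_Q^P$ if and only if $[Q+\tilde Q,Q+\tilde Q]_\text{SN}=0$ and $-(\sigma,\phi)$ is a Maurer-Cartan element of $\mathfrak{a}_{Q+\tilde Q}^P$. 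Since $\|Q+\tilde Q\|=1$, the first condition says precisely that $Q+\tilde Q$ is a homological vector field of $\mathcal{E}$. Assuming it holds, $Q+\tilde Q$ is again a degree-$1$ homological vector field, and it is analytic along $p^!NS$ because analyticity of a multi-vector field is a linear condition on its coefficients and both $Q$ and $\tilde Q$ are analytic. Hence Theorem~\ref{thm:coiso_deform}, applied with $Q+\tilde Q$ in place of $Q$, identifies the second condition with ``$\mathcal{A}_{\sigma,\phi}$ is a submanifold of coisotropic type with respect to $Q+\tilde Q$''. Concatenating the two equivalences yields the theorem.

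\emph{Main obstacle.} The point that requires care is the logical order in the final step rather than any computation: Theorem~\ref{thm:coiso_deform} and the $L_\infty$-algebra $\mathfrak{a}_{Q+\tilde Q}^P$ only make sense once $Q+\tilde Q$ is known to be homological, so one must first extract the condition ``$Q+\tilde Q$ homological'' from the Maurer-Cartan equation (which Theorem~\ref{thm:cs} does automatically) and only afterwards invoke the coisotropy criterion, instead of treating the two conditions symmetrically. A related routine check is that the analyticity hypothesis on $\tilde Q$ is exactly what guarantees that Lemma~\ref{lem:maclaurin_expansion}, hence Theorem~\ref{thm:coiso_deform}, survives the replacement of $Q$ by $Q+\tilde Q$. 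No genuinely new identity is needed, since the structure maps Eq.\ref{eq:comb_l_inf1}--\ref{eq:comb_l_inf5} and the Maurer-Cartan equivalence are supplied directly by Theorem~\ref{thm:cs}.
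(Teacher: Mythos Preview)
Your proposal is correct and follows exactly the route the paper takes: the paper states Theorem~\ref{thm:simul_deform} as a direct application of Theorem~\ref{thm:cs} to the V-algebra $(\mathfrak{X}(\mathcal{E}),[\cdot,\cdot]_\text{SN},\mathfrak{a},P)$ with $\Delta=Q$ (see the sentence preceding the theorem), and then reads off the coisotropy condition via Theorem~\ref{thm:coiso_deform}. Your explicit justification of flatness via $P(Q)=0$ and your remark on the analyticity of $Q+\tilde Q$ fill in details the paper leaves implicit, but the argument is the same.
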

\par\noindent Parallel to Corollary \ref{cor:coiso_deform}, an $L_\infty$-subalgebra may be constructed.
\begin{corollary}
\label{cor:simul_deform}
Suppose $Q$ satisfies the hypothesis of Theorem \ref{thm:simul_deform}, and furthermore $Q$ has multiplicity $1$. The $L_\infty$-algebra structure on $\mathfrak{X}(\mathcal{E})[1]\oplus \mathfrak{a}$ can be restricted to $\mathfrak{X}^1(\mathcal{E})[1]\oplus \mathfrak{a}'$ to get an $L_\infty$-subalgebra that controls the simultaneous deformations of $Q$ and $\mathcal{A}$.
\end{corollary}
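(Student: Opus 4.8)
The plan is to obtain Corollary \ref{cor:simul_deform} as a direct consequence of the already-established Theorem \ref{thm:simul_deform} together with the restriction machinery that was set up in Theorem \ref{thm:l_infty_of_coiso2} and Corollary \ref{cor:coiso_deform}. Recall that when $Q$ has multiplicity $1$ we have two nested $V$-algebras: the big one $(\mathfrak{X}(\mathcal{E}),[\cdot,\cdot]_\text{SN},\mathfrak{a},P)$ and the sub-one $(\mathfrak{X}^1(\mathcal{E}),[\cdot,\cdot]_\text{SN},\mathfrak{a}',P')$, with $\mathfrak{X}^1(\mathcal{E})$ a graded Lie subalgebra of $\mathfrak{X}(\mathcal{E})$ (by the Corollary following the Gerstenhaber theorem), $\mathfrak{a}'\subset\mathfrak{a}$, and $P'=\tilde{P}\circ P|_{\mathfrak{X}^1(\mathcal{E})}$. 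Since $Q\in\mathfrak{X}^1(\mathcal{E})$ is a positive Maurer-Cartan element of both $V$-algebras, Theorem \ref{thm:cs} applies to each and produces $L_\infty$-structures on $\mathfrak{X}(\mathcal{E})[1]\oplus\mathfrak{a}$ and on $\mathfrak{X}^1(\mathcal{E})[1]\oplus\mathfrak{a}'$, respectively.

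The first step is to check that the inclusion $\mathfrak{X}^1(\mathcal{E})[1]\oplus\mathfrak{a}' \hookrightarrow \mathfrak{X}(\mathcal{E})[1]\oplus\mathfrak{a}$ is a strict $L_\infty$-embedding, i.e. that each structure map \ref{eq:comb_l_inf1}--\ref{eq:comb_l_inf5} of the big $L_\infty$-algebra, when all arguments are taken from the sub-space, lands in the sub-space and agrees with the corresponding structure map of $(\mathfrak{X}^1(\mathcal{E})[1]\oplus\mathfrak{a}')_Q^{P'}$. This is essentially a bookkeeping verification: $m_0=P(Q)=P'(Q)$ lands in $\mathfrak{a}'$ since $Q$ has multiplicity $1$; for $m_1$, $[\Delta,v]=[Q,v]_\text{SN}$ stays in $\mathfrak{X}^1(\mathcal{E})$ because it is a Lie subalgebra, and $P(v+[Q,a])$ equals $P'(v+[Q,a])$ after applying $\tilde P$, which is automatic because the image already lies in $\mathfrak{a}'$ (the iterated brackets of multiplicity-$1$ fields are multiplicity-$1$); for $m_2(v[1],w[1])=(-1)^{|v|}[v,w]_\text{SN}[1]$ one again uses that $\mathfrak{X}^1(\mathcal{E})$ is closed under $[\cdot,\cdot]_\text{SN}$; and for the higher $m_k$ ($k\ge 2$) in \ref{eq:comb_l_inf4}--\ref{eq:comb_l_inf5} the iterated Schouten brackets of $Q$ (or $v$) of multiplicity $1$ with elements $I'(a_i)$ of multiplicity $1$ remain of multiplicity $1$, so $P$ and $P'$ agree on them. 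Thus the restricted family is exactly $(\mathfrak{X}^1(\mathcal{E})[1]\oplus\mathfrak{a}')_Q^{P'}$, and since it is itself produced by Theorem \ref{thm:cs} from a $V$-algebra it automatically satisfies the $L_\infty$-relations; flatness follows from $Q\in\ker(P')$ exactly as in Theorem \ref{thm:simul_deform}.

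The second step is the deformation-control statement. A Maurer-Cartan element of the sub-$L_\infty$-algebra is a pair $(\tilde Q,-(\sigma,\phi))$ with $\tilde Q\in\mathfrak{X}^1_1(\mathcal{E})$ and $(\sigma,\phi)\in\mathfrak{a}'_0$; by the strictness just established, it is a Maurer-Cartan element of the ambient $(\mathfrak{X}(\mathcal{E})[1]\oplus\mathfrak{a})_Q^P$ as well, so Theorem \ref{thm:simul_deform} tells us it is equivalent to $Q+\tilde Q$ being homological and $\mathcal{A}_{\sigma,\phi}$ being of coisotropic type with respect to $Q+\tilde Q$. Here one invokes, as in Corollary \ref{cor:coiso_deform}, that requiring $\tilde Q$ to have multiplicity $1$ legitimately captures all deformations of $Q$ within the same type, and that for $\mathfrak{a}'_0$ there is a genuine $1$-$1$ correspondence between elements of $\mathfrak{a}'_0$ and deformations of the subbundle $\mathcal{A}$ (there is no degeneracy issue at $k_0=-1$ because $\mathfrak{a}'$ only contains multiplicity-$1$ fields). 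Combining, $(\mathfrak{X}^1(\mathcal{E})[1]\oplus\mathfrak{a}')_Q^{P'}$ controls the simultaneous deformations of $Q$ (within multiplicity $1$) and of $\mathcal{A}$.

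The main obstacle I anticipate is the first step: verifying that the formulas \ref{eq:comb_l_inf3}--\ref{eq:comb_l_inf5} genuinely restrict, which hinges on two facts that must be stated cleanly — that $\mathfrak{X}^1(\mathcal{E})$ is closed under $[\cdot,\cdot]_\text{SN}$ (already recorded as a Corollary) and that the projection $P$ sends multiplicity-$1$ multi-vector fields into $\mathfrak{a}'\subset\mathfrak{a}$ and moreover agrees there with $P'$ after composing with $\tilde P$. The subtlety is that $P'$ was defined as $\tilde P\circ P|_{\mathfrak{X}^1(\mathcal{E})}$, so one needs $\tilde P$ to act as the identity on the image $P(\mathfrak{X}^1(\mathcal{E}))$, which is true precisely because that image lies in the direct summand $\mathfrak{a}'$; this follows from the local description of $P$ (the generators $\partial/\partial y^j,\partial/\partial\eta^q,\xi^p$ and functions on $\mathcal{A}$) together with the fact that iterated Schouten brackets preserve multiplicity $1$. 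Once this compatibility is nailed down, everything else is formal: the restricted structure is an $L_\infty$-subalgebra by construction, and the deformation-control claim is just Theorem \ref{thm:simul_deform} read on the subspace, together with Corollary \ref{cor:coiso_deform} to guarantee the bijection with actual deformations.
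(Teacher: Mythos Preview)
Your proposal is correct and follows exactly the approach the paper indicates (the paper gives no explicit proof, only the remark ``Parallel to Corollary~\ref{cor:coiso_deform}, an $L_\infty$-subalgebra may be constructed''). You have correctly identified and verified the one non-formal point, namely that $P(\mathfrak{X}^1(\mathcal{E}))\subset\mathfrak{a}'$ so that $\tilde P$ acts as the identity there and $P'$ agrees with $P$ on multiplicity-$1$ fields; after this, the restriction of \eqref{eq:comb_l_inf1}--\eqref{eq:comb_l_inf5} and the deformation-control statement are immediate from Theorem~\ref{thm:cs} applied to the sub-$V$-algebra and from Theorem~\ref{thm:simul_deform}.
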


\par The result on simultaneous deformations of a Poisson structure and a coisotropic submanifold is first presented in \cite{fz}, while that of a Lie algebroid and its Lie subalgebroid is obtained in \cite{j}. By our viewpoint, both results can be recovered easily.
\begin{example}
Given a coisotropic submanifold $S$ of a Poisson manifold $(M,\pi)$, by Theorem \ref{thm:simul_deform}, there is a flat $L_\infty$-structure on $\mathfrak{X}(M)[1]\oplus \Gamma(\wedge^\cdot NS)$. If $\pi$ is analytic along $NS$, then this $L_\infty$-algebra controls the simultaneous deformations of the Poisson bi-vector field $\pi$ and the coisotropic submanifold $S$. That is, given $(\tilde{\pi},\sigma)\in \mathfrak{X}^2(M)\oplus \Gamma(NS)$, assuming $\tilde{\pi}$ is analytic along $NS$, the pair $(\tilde{\pi},-\sigma)$ is a Maurer-Cartan element of $(\mathfrak{X}(M)[1]\oplus \Gamma(\wedge^\cdot NS))_{\pi}^P$ if and only if
\[
\begin{cases}
Q+\tilde{Q} \text{ is a Poisson bi-vector field on } M, \text{ and}\\
\text{graph}(\sigma) \text{ is a coisotropic submanifold of } M \text{ w.r.t. the Poisson structure } \pi+\tilde{\pi}.
\end{cases}
\]
\end{example}

\begin{example}
Let $E\to S$ be a Lie subalgebroid of $(A,[\cdot,\cdot],\rho)$, and $F$ a complement of $E$ in $A|_S$ as before. If $d_A$ is analytic along $p^!NS$, by Corollary \ref{cor:simul_deform}, there is a flat $L_\infty$-structure on $\mathfrak{X}^1(A[1])[1]\oplus\mathfrak{a}'$ that controls the simultaneous deformations of the Lie algebroid structure $([\cdot,\cdot],\rho)$ and the Lie subalgebroid $E$. To be specific, given $(\tilde{d}_A,(\sigma,\phi))\in (\mathfrak{X}^1(A[1])[1]\oplus\mathfrak{a}')_0$, as long as $\tilde{d}_A$ is analytic along $p^!NS$, the pair $(\tilde{d}_A,-(\sigma,\phi))$ is a Maurer-Cartan element of $\mathfrak{X}^1{A[1]}\oplus\mathfrak{a}'$ if and only if 
\[
\begin{cases}
d_A+\tilde{d}_A \text{ defines a Lie algebroid structure on } A, \text{ and}\\
\text{graph}(\sigma,\phi) \text{ is a Lie subalgebroid of } A \text{ w.r.t. the Lie algebroid structure defied by }d_A+\tilde{d}_A.
\end{cases}
\]
As before, the analyticity condition on $d_A$ can be omitted if $S=M$.
\end{example}

\section{Application: Deformations of Courant Algebroid and Dirac Structure}
\label{sec:app}
\par By D. Roytenberg, a Courant algebroid is a symplectic $Q$-manifold of degree $2$. Translating to our language, a Courant algebroid is a blended $Q$-manifold of degree $2$. This allows us to apply the results in Section \ref{sec:deform} to attack the deformation problems in Courant algebroid. Courant algebroid was first defined in \cite{lxw}. Thanks to \cite{u}, the axioms in the definition got simplified. Nowadays, a Courant algebroid is defined as follows.

\begin{definition}
A \emph{Courant algebroid} is a vector bundle $E\to M$ together with a non-degenerate $\mathcal{C}^\infty(M)$-bilinear form $\langle\cdot,\cdot\rangle$ on $\Gamma(E)$ (called a pseudo-inner product), a $\mathbb{R}$-bilinear operation on $\Gamma(E)$ (called the Dorfman bracket), and a bundle map $\rho: E\to TM$ (call the anchor), satisfying
\par 1) $e_1\circ(e_2\circ e_3)=(e_1\circ e_2)\circ e_3+e_2\circ(e_1\circ e_3)$,
\par 2) $\rho(e)\langle e_1,e_2\rangle =\langle e\circ e_1, e_2\rangle+\langle e_1,e\circ e_2\rangle$, and
\par 3) $e_1\circ e_2+e_2\circ e_1=D\langle e_1,e_2\rangle$,
\par\noindent for any $e,e_1,e_2,e_3\in\Gamma(A)$. Here $D:\mathcal{C}^\infty(M)\to \Gamma(E)$ is defined by $D=\rho^*d$, i.e. $\langle Df, e\rangle =\rho(e)f$ for any $f\in\mathcal{C}^\infty(M)$.
\end{definition}
\par\noindent The first two axioms in the definition requires $e$ being a derivation with respect to both $\circ$ and $\langle\cdot,\cdot\rangle$. In general, the Dorfman bracket $\circ$ is not anti-symmetric and the last axiom measure how far for $\circ$ being anti-symmetric. Instead the Dorfman bracket, one may adopt the bracket
\[
	[e_1, e_2]=\dfrac{1}{2}(e_1\circ e_2-e_2\circ e_1)
\]
in the definition. The discussions on the axioms required for this bracket and the equivalence of the two definitions can be found in \cite{r1}. Furthermore, the nondegeneracy of $\langle\cdot,\cdot\rangle$ forces the rank of $E$ to be an even number.

\begin{remark} 
From the definition, the following properties can be derived:
\par 1) $\rho(e_1\circ e_2)=[\rho(e_1),\rho(e_2)]_\text{SN}$,
\par 2) $\rho(e_1\circ(fe_2))=f[\rho(e_1),\rho(e_2)]_\text{SN}+(\rho(e_1)f)e_2$, 
\par 3) $\langle Df,Dg \rangle=0$ ($\rho\circ D=0$),
\par 4) $e\circ (\rho^*f)=\rho^*\mathcal{L}_{\rho(e)}(f)$,
\par 5) $(\rho^*f)\circ e=-\rho^*(\iota_{\rho(e)} df)$.
\par\noindent Here, $\mathcal{L}_u$ and $\iota_u$ are the Lie derivative and contraction of a vector field $u\in\mathfrak{X}(M)$, respectively. In the last two identities, $\rho^*f$ is identified with a section of $E$ by the isomorphism $E\to E^*$ defined by $e\to \langle e,\cdot\rangle$. 
\end{remark}

\par Next, let us briefly describe the 1-1 correspondence between Courant algebroids and symplectic $Q$-manifolds of degree $2$ found in \cite{r2}. Given a symplectic $Q$-manifolds $(\mathcal{E},\Omega)$ with $\Omega$ the symplectic structure satisfying $|\Omega|=2$, by the non-degeneracy of $\Omega$, the degree of the graded manifold $\mathcal{E}$ can not exceed $2$. Consequently, $\mathcal{E}$ is concentrated in degree $-1$ and $-2$ only. Moreover, the Poisson structure $\pi$ inverse to $\Omega$ is non-degenerate and of degree $|\pi|=-2$. Let $\mathcal{A}^k$ be the collection of homogeneous functions on $\mathcal{E}$ of degree $k$. Particularly, we have $\mathcal{A}^0=\mathcal{C}^\infty(M)$, and $\mathcal{A}^1=\Gamma(\mathcal{E}_{-1}^*)$. Reader will find that the Courant algebroid structure is on $\mathcal{E}^*_{-1}$. For simplicity, let us denote it by $E$.

\par 1)  Since $\pi(\mathcal{A}^1,\mathcal{A}^1)\subset \mathcal{A}^0$, $\pi(\mathcal{A}^1,\mathcal{A}^0)=0$ and $\pi(\mathcal{A}^0,\mathcal{A}^0)=0$, the Poisson bi-vector field $\pi$ induces a pseudo-inner product $\langle\cdot,\cdot\rangle=\pi(\cdot,\cdot)$ on $E=\mathcal{E}^*_{1}$.

\par 2) Since $\mathcal{A}^2$ is a locally free sheaf of $\mathcal{A}^0$-modules, there exists a vector bundle $V\to M$ so that $\mathcal{A}^2=\Gamma(V)$. The relations $\pi(\mathcal{A}^2,\mathcal{A}^2)\subset \mathcal{A}^2$ and $\pi(\mathcal{A}^2,\mathcal{A}^0)\subset \mathcal{A}^0$ together with the Leibniz property defines Lie algebroid structure $([\cdot,\cdot],a)$ on $V$. Furthermore, the relation $\pi(\mathcal{A}^2,\mathcal{A}^1)\subset \mathcal{A}^1$ together with the non-degeneracy of $\pi$ implies $V$ is the Gauge Lie algebroid of $(\mathcal{E}_{-1}^*,\langle\cdot,\cdot\rangle)$, i.e. sections of $V$ act on $\mathcal{A}^1$ as covariant differential operators that preserves $\langle\cdot,\cdot\rangle$. To see this, first, the anchor $a$ is surjective by the nondegeneracy; second, $\mathcal{A}^1\mathcal{A}^1=\Gamma(\wedge^2 \mathcal{E}_{-1}^*)\subset \mathcal{A}^2$ is contained in the kernel of $a$ since it acts on $\mathcal{A}^1$ trivially; third, by the Leibniz rule and the Jacobi identity, the action of $\mathcal{A}^2$ on $\mathcal{A}^1$ is a Lie algebroid action that preserves $\langle\cdot,\cdot\rangle$, so there is a morphism from $V$ to the gauge Lie algebroid of $\mathcal{E}_{-1}^*$ which is an isomorphism. Therefore, there is a short exact sequence
\[
	0\to \wedge^2\mathcal{E}_{-1}^*\to V \xrightarrow{a} TM\to 0,
\]
and one has $\mathcal{E}_{-2}=T[2]M$ (the fibers of $TM$ has degree $-2$). Together with $\mathcal{E}_{-1}=E^*[1]$ and the fact that the Poisson structure $\pi$ can be recovered from $\langle\cdot,\cdot\rangle$, the $Q$-manifold $(\mathcal{E},\Omega)$ is completely determined by $(E,\langle\cdot,\cdot\rangle)$. 

\par 3) Any Poisson vector field of degree $|\cdot|$ greater that $-2$ must be Hamiltonian. Since a homological vector field on $(\mathcal{E},\pi)$ is required to preserve $\pi$, it must be of the form $X_\theta=\iota_\Theta(\pi)$ with $\Theta\in\mathcal{C}^\infty(\mathcal{E})$ a cubic function. The cubic function $\Theta$ induces the maps $\circ: \Gamma(E)\times\Gamma(E)\to \Gamma(E)$ and $\rho:\Gamma(E)\to TM$ as derived brackets
\begin{eqnarray*}
	e_1\circ e_2 &=& \{\{e_1,\Theta\},e_2\},\qquad \forall e_1,e_2\in\Gamma(E)\\
    \rho(e) f &=& \{\{e,\theta\},f\},\qquad \forall e\in\Gamma(E) \text{ and } f\in\mathcal{C}^\infty(M). 
\end{eqnarray*}
It is proved that $\{\Theta,\Theta\}=0$ is equivalent to $(E,\langle\cdot,\cdot\rangle,\circ,\rho)$ is a Courant algebroid. Here $\{\cdot,\cdot\}$ is the Poisson bracket of $\pi$. 
\par As a result, a Courant algebroid structure on $E$ is determined by the pair $(\pi,X_\Theta)$. The identity $\{\Theta,\Theta\}=0$ is equivalent to
\[
	[X_\Theta,X_\Theta]_\text{SN}=0.
\]
Together with the relations
\[
	[\pi,\pi]_\text{SN}=0,\qquad [\pi,X_\Theta]_\text{SN}=0,
\]
and the fact that $\|\pi\|=-1$ and $\|X_\Theta\|=1$, we conclude the following lemma.
\begin{lemma}
\label{lem:courant_q}
A Courant algebroid structure $(\langle\cdot,\cdot\rangle,\circ,\rho)$ on $E$ is equivalent to the blended $Q$-vector field $\pi+X_\Theta$ on $\mathcal{E}$.
\end{lemma}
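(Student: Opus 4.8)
The plan is to assemble the equivalence out of the pieces of Roytenberg's correspondence recalled above; beyond degree bookkeeping almost nothing new is needed. First I would fix the geometric data: from the pseudo-inner product $(E,\langle\cdot,\cdot\rangle)$ one reconstructs, via parts 1)--2) above, the degree $2$ graded manifold $\mathcal{E}$ with $\mathcal{E}_{-1}=E^*[1]$ and $\mathcal{E}_{-2}=T[2]M$, together with its symplectic structure $\Omega$ of degree $2$; the inverse bivector field $\pi$ recovers $\langle\cdot,\cdot\rangle=\pi(\cdot,\cdot)$ on $\mathcal{A}^1$, satisfies $[\pi,\pi]_\text{SN}=0$ by the non-degeneracy and closedness of $\Omega$, and a degree count gives $|\pi|_\text{total}=|\pi|+2=0$, i.e. $\|\pi\|=-1$. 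Dually, the pair $(\circ,\rho)$ is repackaged by the derived-bracket formulas of part 3) into a cubic Hamiltonian $\Theta$, whose Hamiltonian vector field $X_\Theta=\iota_\Theta(\pi)$ has $|X_\Theta|=1$, hence $|X_\Theta|_\text{total}=2$ and $\|X_\Theta\|=1$. Thus $\pi$ and $X_\Theta$ occupy exactly the negative and positive slots demanded of a blended structure.

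Next I would run the forward implication. Assuming the Courant axioms hold, the cited result of Roytenberg says $\{\Theta,\Theta\}=0$, which is equivalent to $[X_\Theta,X_\Theta]_\text{SN}=0$; and $[\pi,\pi]_\text{SN}=0$ is already in hand. For the cross term I would use that $X_\Theta$ is Hamiltonian, hence preserves $\pi$: concretely, the graded Jacobi identity for $[\cdot,\cdot]_\text{SN}$ on $\mathfrak{X}(\mathcal{E})[1]$ gives
\[
	[\pi,X_\Theta]_\text{SN}=[\pi,[\pi,\Theta]_\text{SN}]_\text{SN}=\tfrac12[[\pi,\pi]_\text{SN},\Theta]_\text{SN}=0,
\]
so the compatibility $Q^+Q^-+Q^-Q^+=0$ is automatic rather than an extra hypothesis. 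Collecting the three vanishing brackets and observing that they are precisely the $\|\cdot\|$-homogeneous components of $[\pi+X_\Theta,\pi+X_\Theta]_\text{SN}=0$ (using $\|X_\Theta\|=1$ and $\|\pi\|=-1$), one concludes that $\pi+X_\Theta$ is a blended homological vector field on $\mathcal{E}$.

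For the converse I would read the construction backwards. A blended homological vector field on $\mathcal{E}$ whose degree $-1$ component is a non-degenerate bivector field $\pi$ --- this non-degeneracy being the exact counterpart of that of $\langle\cdot,\cdot\rangle$ --- and whose degree $+1$ component is a Hamiltonian field $X_\Theta$ yields back the pseudo-inner product $\langle\cdot,\cdot\rangle:=\pi(\cdot,\cdot)|_{\mathcal{A}^1}$ and, through the derived brackets of $\Theta$, the operations $\circ$ and $\rho$; the equation $[X_\Theta,X_\Theta]_\text{SN}=0$ is precisely $\{\Theta,\Theta\}=0$, which by Roytenberg amounts to the Courant algebroid axioms. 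Since these two assignments are manifestly inverse to one another, the claimed equivalence follows.

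I do not expect a genuine obstacle: the deep input, namely $\{\Theta,\Theta\}=0\Leftrightarrow$ Courant axioms, is quoted from Roytenberg, and the only points requiring care are the degree bookkeeping that places $\pi$ in the negative slot and $X_\Theta$ in the positive slot, and the short Jacobi-identity computation showing $[\pi,X_\Theta]_\text{SN}=0$ is forced by $[\pi,\pi]_\text{SN}=0$ together with $X_\Theta$ being Hamiltonian. The one spot where I would be slightly careful is matching non-degeneracy on the two sides --- that on the graded side one should require the degree $-1$ component of the blended field to be a non-degenerate (hence symplectic) bivector, exactly mirroring the non-degeneracy built into the definition of a Courant algebroid.
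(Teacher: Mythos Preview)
Your proposal is correct and follows essentially the same approach as the paper: the lemma is stated there as an immediate consequence of the preceding discussion, which records exactly the three relations $[\pi,\pi]_\text{SN}=0$, $[\pi,X_\Theta]_\text{SN}=0$, $[X_\Theta,X_\Theta]_\text{SN}=0$ together with the degree bookkeeping $\|\pi\|=-1$, $\|X_\Theta\|=1$. The only addition in your write-up is the short Jacobi-identity computation deriving $[\pi,X_\Theta]_\text{SN}=0$ from $X_\Theta=[\pi,\Theta]_\text{SN}$, which the paper leaves implicit in the statement that $X_\Theta$ is Hamiltonian and hence preserves $\pi$.
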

\par\noindent Then we can apply Theorem \ref{thm:q_deform}.
\begin{theorem}
The deformations of a Courant algebroid $(E,\langle\cdot,\cdot\rangle,\circ,\rho)$ is controlled by the blended Lie algebra $(\mathfrak{X}(\mathcal{E}),[\cdot,\cdot]_\text{SN},[\pi+X_\Theta,\cdot]_\text{SN})$, i.e. given $\tilde{\pi}\in\mathfrak{X}^2(\mathcal{E})$ and $\tilde{X}\in\mathfrak{X}^1(\mathcal{E})$ so that $|\tilde{\pi}|=-2$, $|\tilde{X}|=1$ and $\pi+\tilde{\pi}$ is non-degenerate, the pair $(\pi+\tilde{\pi},X_\Theta+\tilde{X})$ defines a Courant algebroid structure on $E$ if and only if $(\tilde{\pi},\tilde{X})$ satisfies the Maurer-Cartan equation
\[
	[\pi+X_\Theta,\tilde{\pi}+\tilde{X}]_\text{SN}+\frac{1}{2}[\tilde{\pi}+\tilde{X},\tilde{\pi}+\tilde{X}]_\text{SN}=0.
\]
\end{theorem}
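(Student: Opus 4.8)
The plan is to deduce the statement from Lemma~\ref{lem:courant_q} together with Theorem~\ref{thm:q_deform}. By Lemma~\ref{lem:courant_q}, the given Courant algebroid structure $(\langle\cdot,\cdot\rangle,\circ,\rho)$ on $E$ is nothing but the blended homological vector field $Q=\pi+X_\Theta$ on the graded manifold $\mathcal{E}$; in particular $(\mathcal{E},Q)$ is a blended $Q$-manifold of degree $2$, so Theorem~\ref{thm:q_deform} applies to it. What remains is to identify the deformations of the Courant algebroid with the deformations of the blended $Q$-structure.

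First I would check degrees. A bivector $\tilde\pi$ with $|\tilde\pi|=-2$ has total degree $-2+2=0$, hence shifted degree $\|\tilde\pi\|=-1$; and a vector field $\tilde X$ with $|\tilde X|=1$ has $\|\tilde X\|=1$. Therefore $\tilde Q:=\tilde\pi+\tilde X\in\mathfrak{X}_{-1}(\mathcal{E})\oplus\mathfrak{X}_{1}(\mathcal{E})$ is exactly a deformation datum in the sense required by Theorem~\ref{thm:q_deform}, matching the bi-degrees of $Q=\pi+X_\Theta$.

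The core step is to show that, under the standing non-degeneracy hypothesis on $\pi+\tilde\pi$, the pair $(\pi+\tilde\pi,\,X_\Theta+\tilde X)$ defines a Courant algebroid structure on $E$ if and only if $Q+\tilde Q=(\pi+\tilde\pi)+(X_\Theta+\tilde X)$ is a blended homological vector field on $\mathcal{E}$. For this I would re-run Roytenberg's correspondence on the deformed data, using that $\mathcal{E}$, and hence $E=\mathcal{E}_{-1}^*$, is fixed: since $\pi+\tilde\pi$ is non-degenerate and of degree $-2$, its inverse $\Omega'$ is a symplectic structure of degree $2$, so $(\mathcal{E},\Omega')$ is a symplectic graded $Q$-manifold of degree $2$ and $\Omega'$ encodes a pseudo-inner product $\langle\cdot,\cdot\rangle'=(\pi+\tilde\pi)(\cdot,\cdot)$ on $E$; the vector field $X_\Theta+\tilde X$ has degree $1>-2$, so once it is a Poisson vector field for $\pi+\tilde\pi$ it is automatically Hamiltonian, generated by a cubic function $\Theta'$ whose derived brackets produce the deformed $\circ'$ and $\rho'$. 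Finally, decomposing $[Q+\tilde Q,\,Q+\tilde Q]_\text{SN}=0$ by bi-degree yields precisely the three equations $[\pi+\tilde\pi,\pi+\tilde\pi]_\text{SN}=0$, $[\pi+\tilde\pi,X_\Theta+\tilde X]_\text{SN}=0$ and $[X_\Theta+\tilde X,X_\Theta+\tilde X]_\text{SN}=0$, i.e. that $\pi+\tilde\pi$ is Poisson, that $X_\Theta+\tilde X$ preserves it, and (equivalently to $\{\Theta',\Theta'\}=0$) that $(E,\langle\cdot,\cdot\rangle',\circ',\rho')$ obeys the Courant algebroid axioms — which together are exactly the blended homological condition for $Q+\tilde Q$.

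Granting the above, I would conclude by Theorem~\ref{thm:q_deform} applied to $(\mathcal{E},Q)$ and $\tilde Q=\tilde\pi+\tilde X$: the vector field $Q+\tilde Q$ is blended homological if and only if $\tilde Q$ is a Maurer-Cartan element of the blended DGLA $(\mathfrak{X}(\mathcal{E}),[\cdot,\cdot]_\text{SN},[\pi+X_\Theta,\cdot]_\text{SN})$, that is,
\[
    [\pi+X_\Theta,\,\tilde\pi+\tilde X]_\text{SN}+\tfrac{1}{2}[\tilde\pi+\tilde X,\,\tilde\pi+\tilde X]_\text{SN}=0.
\]
Chaining the two equivalences proves the theorem. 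I expect the only delicate point to be the middle step: verifying that a merely non-degenerate deformed $\pi+\tilde\pi$ keeps the data inside the image of Roytenberg's correspondence, and in particular that the degree-$1$ Poisson vector field $X_\Theta+\tilde X$ is forced to be Hamiltonian with respect to the new Poisson structure; this rests on the fact, recalled just before Lemma~\ref{lem:courant_q}, that Poisson vector fields on $\mathcal{E}$ of degree greater than $-2$ are Hamiltonian, and one must check that argument is insensitive to replacing $\pi$ by $\pi+\tilde\pi$. Everything else is bi-degree bookkeeping and direct appeals to the results above.
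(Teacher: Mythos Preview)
Your proposal is correct and follows exactly the route the paper takes: the paper simply says ``Then we can apply Theorem~\ref{thm:q_deform}'' after Lemma~\ref{lem:courant_q} and states the result without further argument. Your write-up supplies the bi-degree bookkeeping and the check that Roytenberg's correspondence applies to the deformed pair $(\pi+\tilde\pi,X_\Theta+\tilde X)$, which the paper leaves implicit; this is helpful elaboration rather than a different approach.
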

\par Let $(q^i)$ be a local coordinate system of $M$ and $\{\xi^a\}$ a local frame of $E$ so that $g^{ab}=\langle \xi^a,\xi^b\rangle$ are constant. Then $(q^i,\xi,p_i)$ forms a Darboux chart of $\mathcal{E}$ in which $(q^i,p^i)$ is the standard Darboux chart of $\mathcal{E}_{-2}=T[2]M$ and $(\xi^a)$ are viewed as linear coordinates of the fibers of $\mathcal{E}_{-1}=E^*[1]$. Locally, $\pi$ is given by
\[
	\pi = \dfrac{\partial}{\partial q^i}\dfrac{\partial}{\partial p_i}+\dfrac{1}{2}\dfrac{\partial}{\partial \xi^a} g^{ab}\dfrac{\partial}{\partial \xi^b}.
\]
Only the second summand of $\pi$ affects the pseudo-inner product, which is the image of $\pi$ under the natural projection $\mathfrak{X}(\mathcal{E})\to \mathfrak{X}(\mathcal{E}_{-1})$. We remark that to deform the pseudo-inner product $\langle\cdot,\cdot\rangle$, one may restrict the change to being in the second summand only.
\par\noindent Besides, a cubic $\Theta$ can be written as 
\[
	\Theta=f_a^i \xi^a p_i-\dfrac{1}{6}h_{abc}\xi^a\xi^b\xi^c,
\]
with $f_a^i,h_{abc}\in\mathcal{C}^\infty(M)$ determined by
\begin{eqnarray*}
	g^{ab}f_b^i &=& \rho(\xi^a)x^i,\\
    g^{aa'}g^{bb'}g^{cc'}h_{a'b'c'} &=& \langle \xi^a\circ \xi^b, \xi^c\rangle.
\end{eqnarray*}
\par Next, let us consider the deformations of a Dirac structure.
\begin{definition}
Given a Courant algebroid $E$, a subbundle $A$ over $M$ is called a Dirac structure or Dirac subbundle if it is maximally isotropic under the pseudo-inner product $\langle\cdot,\cdot\rangle$ and its sections are closed under the Dorfman bracket $\circ$.   
\end{definition}
\par\noindent  Here, $A$ is isotropic requires $\langle \xi^a,\xi^b\rangle=0$ for any $\xi^a,\xi^b\in \Gamma(A)$. The maximal isotropy of $A$ indicates the rank of $A$ is half of that of $E$. Furthermore, $A$ is a Dirac structure implies $(A,\circ,\rho)$ is a Lie algebroid.
\par Given a subbundle $A\to M$ of $E$, via the pseudo-inner product, $A$ can be identified with a subbundle of $E^*$, denoted by $\mathcal{B}$.
\begin{lemma}
\label{lem:dirac_q}
A subbundle $A$ of rank $\frac{1}{2}$rk$(E)$ is a Dirac structure if and only if $\mathcal{B}[1]\subset\mathcal{E}$ is of coisotropic type with respect to $\pi+X_\Theta$.
\end{lemma}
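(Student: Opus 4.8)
The plan is to work in a local Darboux chart and reduce the condition ``$\mathcal{B}[1]$ is of coisotropic type'' to two separate requirements on $A$ --- one coming from the bi-vector component $\pi$ of $Q=\pi+X_\Theta$, the other from the vector-field component $X_\Theta$. First I would fix coordinates $(q^i,\xi^a,p_i)$ as in the discussion following the previous theorem, with $g^{ab}=\langle\xi^a,\xi^b\rangle$ constant, so that $\pi=\frac{\partial}{\partial q^i}\frac{\partial}{\partial p_i}+\frac12\frac{\partial}{\partial\xi^a}g^{ab}\frac{\partial}{\partial\xi^b}$ and $\Theta=f_a^i\xi^a p_i-\frac16 h_{abc}\xi^a\xi^b\xi^c$. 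Since the metric identifies $E\cong E^*$, the subbundle $\mathcal{B}=\langle A,\cdot\rangle\subset E^*$ is fibrewise the annihilator of $A^\bot\subseteq E=\mathcal{E}_{-1}^*$; choosing the frame adapted as $\{\xi^a\}=\{\xi^\gamma\}\sqcup\{\xi^{\bar\alpha}\}$ with $\{\xi^{\bar\alpha}\}$ a frame of $A^\bot$, this says $\mathcal{B}[1]=\{\xi^{\bar\alpha}=0\}$ and $N^\bot\mathcal{B}[1]$ is spanned by $\{d\xi^{\bar\alpha}\}$. Restricting $Q$ to $\mathcal{E}_{-1}$ (set $p=0$, project onto $T\mathcal{E}_{-1}$) gives $\pi|_{\mathcal{E}_{-1}}=\frac12 g^{ab}\frac{\partial}{\partial\xi^a}\wedge\frac{\partial}{\partial\xi^b}$ and $X_\Theta|_{\mathcal{E}_{-1}}=-f_a^i\xi^a\frac{\partial}{\partial q^i}-\frac12 g^{ab}h_{acd}\xi^c\xi^d\frac{\partial}{\partial\xi^b}$.

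Because $\pi$ has multiplicity $2$ and $X_\Theta$ multiplicity $1$, the single condition $Q|_{\mathcal{E}_{-1}}(N^\bot\mathcal{B}[1])=0$ splits into: (a) $\pi|_{\mathcal{E}_{-1}}(d\xi^{\bar\alpha},d\xi^{\bar\beta})=0$ for all $\bar\alpha,\bar\beta$; and (b) $X_\Theta|_{\mathcal{E}_{-1}}$ paired with $d\xi^{\bar\alpha}$ vanishes as a function on $\mathcal{B}[1]$, for all $\bar\alpha$. Condition (a) reads $g^{\bar\alpha\bar\beta}=\langle\xi^{\bar\alpha},\xi^{\bar\beta}\rangle=0$, i.e. $A^\bot$ is isotropic; since $\mathrm{rk}\,A^\bot=\mathrm{rk}\,A=\frac12\mathrm{rk}\,E$ and $\langle\cdot,\cdot\rangle$ is non-degenerate, (a) is equivalent to $A^\bot=A$, hence to $A$ being maximally isotropic. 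For (b), the $\frac{\partial}{\partial q^i}$-term pairs trivially with $d\xi^{\bar\alpha}$, so (b) becomes the vanishing on $\mathcal{B}[1]$ of $g^{\bar\alpha b}h_{bcd}\xi^c\xi^d$. I would read this intrinsically: for $e\in\Gamma(E)=\mathcal{A}^1$ the function $\omega_e:=(X_\Theta|_{\mathcal{E}_{-1}})(e)=-(\{\Theta,e\})|_{\mathcal{E}_{-1}}$ is fibrewise quadratic on $\mathcal{E}_{-1}=E^*[1]$, and by the derived-bracket formula $e\circ e'=\{\{e,\Theta\},e'\}$ --- together with the fact that the $p$-linear part of $\{\Theta,e\}$ does not act on $\mathcal{A}^1$ --- it represents, via the Poisson pairing $g^{ab}$, the bilinear form $(e_1,e_2)\mapsto\langle e\circ e_1,e_2\rangle$ on $\Gamma(E)$ (up to an overall sign). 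Hence $\omega_e|_{\mathcal{B}[1]}=0$ is equivalent to $\langle e\circ a,a'\rangle=0$ for all $a,a'\in\Gamma(A)$, i.e. $e\circ\Gamma(A)\subseteq\Gamma(A^\bot)$; imposing this for every $e\in\Gamma(A^\bot)$ and using (a) to replace $A^\bot$ by $A$, condition (b) is exactly $\Gamma(A)\circ\Gamma(A)\subseteq\Gamma(A)$, i.e. involutivity of $A$ under the Dorfman bracket.

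Putting (a) and (b) together, $\mathcal{B}[1]$ is of coisotropic type with respect to $\pi+X_\Theta$ if and only if $A$ is isotropic and, granting that, involutive --- that is, if and only if $A$ (which already has rank $\frac12\mathrm{rk}\,E$) is a Dirac structure. The argument is local, but the statement is coordinate-free and the conclusion is independent of the Darboux chart, which finishes the lemma. The step I expect to be the real obstacle is (b): one must carefully disentangle the ``lowered'' coefficients $h_{abc}$ appearing in $\Theta$ from the structure functions $h^{abc}=\langle\xi^a\circ\xi^b,\xi^c\rangle$, equivalently verify that $\omega_e$ is the metric dual of the operator $e\circ(-)$, and then use the block form of $(g^{ab})$ forced by the isotropy of $A^\bot$ in order to see that the $d\xi^{\bar\alpha}$-component of $X_\Theta|_{\mathcal{E}_{-1}}$, restricted to $\mathcal{B}[1]$, is precisely the obstruction to $\langle\Gamma(A)\circ\Gamma(A),\Gamma(A)\rangle=0$; by contrast the $\pi$-part (a) is routine.
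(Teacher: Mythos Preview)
Your proof is correct and follows essentially the same strategy as the paper's: split the coisotropic-type condition for $\mathcal{B}[1]$ into its $\pi$-component and its $X_\Theta$-component, identify the first with (maximal) isotropy of $A$ and the second, granted isotropy, with involutivity by exploiting the invertibility of the off-diagonal block of $(g^{ab})$. Your choice of a frame adapted to $A^\bot$ (so that $\mathcal{B}=(A^\bot)^\circ$ is cut out correctly \emph{before} isotropy is established) and your derived-bracket interpretation of condition~(b) are modest refinements of the paper's purely coordinate computation, not a genuinely different route.
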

\begin{proof}
Let $B$ be a complement of $A$, then one may identify $\mathcal{B}$ with $B^*$. Suppose $E$ has rank $2n$, and let $\{\xi^1,\cdots, \xi^n\}$, $\{\xi^{n+1},\cdots,\xi^{2n}\}$ be local frames of $A$ and $B$ respectively. It is easy to see $N^\bot (\mathcal{A}[1])$ has $\{d\xi^1,\cdots, d\xi^n\}$ as a local frame. It follows that $\mathcal{B}[1]$ is of coisotropic type with respect to $\pi$, i.e. $\pi|_{\mathcal{E}_{-1}}(N^\bot \mathcal{B}[1])=0$, if and only if $g^{ab}=0$ when $1\le a,b\le n$, but this is equivalent to that the subbundle $A$ is isotropic under $\langle\cdot,\cdot\rangle$.

\par Locally, $X_\Theta|_{\mathcal{E}_{-1}}$ has the form
\begin{eqnarray*}
	\dfrac{1}{2}h_{abc}g^{cd}\xi^a\xi^b\dfrac{\partial}{\partial \xi^d}-\dfrac{1}{6}\dfrac{\partial h_{abc}}{\partial q^i}\xi^a\xi^b\xi^c\dfrac{\partial}{\partial p_i}+\xi^af_a^i\dfrac{\partial}{\partial q^i}.
\end{eqnarray*}
Then $X_\Theta|_{\mathcal{E}_{-1}}(N^\bot(\mathcal{A}[1]))=0$ is equivalent to 
\begin{eqnarray}
	h_{abc}g^{cc'}=0\qquad\text{ when }a,b>n\text{ and }c'\le n.\label{eq:first}
\end{eqnarray}
Meanwhile, $\Gamma(A)$ is closed under $\circ$ if $\xi^{a'}\circ \xi^{b'}\in \Gamma(A)$ whenever $a',b'\le n$, i.e. 
\begin{eqnarray}
	h_{abc}g^{aa'}g^{bb'}=0\qquad\text{ when }a',b'\le n\text{ and }c>n.\label{eq:second}
\end{eqnarray}
It turns out the two conditions are equivalent given that $A$ is isotropic. The reason is as follows: given smooth functions $k_1,\cdots, k_{2n}\in\mathcal{C}^\infty(M)$ so that $\sum_{a=1}^{2n}k_ag^{aa'}=0$ when $a'\le n$, it is equivalent to $\sum_{a=n+1}^{2n}k_ag^{aa'}=0$ since $g^{aa'}=0$ if $a,a'\le n$; then by the non-degeneracy of $\langle\cdot,\cdot\rangle$, one has the matrix $(g^{aa'})_{\substack{n<a<2n\\1\le a'\le n}}$ is invertible; as a result, $\sum k_ag^{aa'}=0$ is equivalent to $k_a=0$ for $a>n$. Applying this property to the $c$-component in $h_{abc}g^{cc'}$ and the $a,b$-components in $h_{abc}g^{aa'}g^{bb'}$, we get Eq.\ref{eq:first} and Eq.\ref{eq:second} are equivalent.
\par Combining the two aspects together, $A$ is a Dirac structure if and only if $\mathcal{A}[1]\subset\mathcal{E}$ is of coisotropic type with respect to $\pi+X_\Theta$.
\end{proof}
\par\noindent This allows us to apply Theorem \ref{thm:l_infty_of_coiso} and Corollary \ref{cor:coiso_deform_perfect}.
\begin{theorem}
With a selection of the complement bundle $B$ of the Dirac structure $A$ in a Courant algebroid $E$, there is a flat $L_\infty$-structure on $\mathfrak{a}=\Gamma(S(A^*)\otimes S(B))$, so that the deformations of $A$ as a Dirac structure is controlled by its $L_\infty$-subalgebra $\bar{\mathfrak{a}}$ in Corollary \ref{cor:coiso_deform_perfect}, i.e. the graph of $\phi:A\to B$ is a Dirac structure if and only if $\phi\in\bar{\mathfrak{a}}_0$ is a Maurer-Cartan element of $\bar{\mathfrak{a}}$. Here $\bar{\mathfrak{a}}$ is the graded vector space defined by $\bar{\mathfrak{a}}_0=\Gamma(A^*\otimes B)$ and $\bar{\mathfrak{a}}_k=\Gamma(S(A^*)\otimes S(B))_k$ when $k\ne 0$.
\end{theorem}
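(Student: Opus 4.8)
\emph{Proof strategy.} The statement is a direct application of the machinery of Section \ref{sec:deform} to the data furnished by Lemma \ref{lem:courant_q} and Lemma \ref{lem:dirac_q}, so the plan is to assemble these pieces and take care of the degree bookkeeping. First I would set the stage: by Lemma \ref{lem:courant_q} the Courant algebroid structure $(\langle\cdot,\cdot\rangle,\circ,\rho)$ on $E$ is the blended homological vector field $Q=\pi+X_\Theta$ on the graded manifold $\mathcal{E}$ of degree $2$, with $\mathcal{E}_{-1}=E^*[1]$ and $\mathcal{E}_{-2}=T[2]M$; in particular $k_0=-1$ and the base is the whole of $M$, so $S=M$ and $NS=0$. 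By Lemma \ref{lem:dirac_q}, the half-rank subbundle $A$, identified via the pseudo-inner product with a subbundle of $E^*$ and shifted into degree $-1$, yields a subbundle $\mathcal{A}\subset\mathcal{E}_{-1}$ that is of coisotropic type with respect to $Q$ precisely when $A$ is a Dirac structure; the chosen complement $B$ of $A$ in $E$ supplies the complement $\mathcal{B}$ of $\mathcal{A}$ in $\mathcal{E}_{-1}$ required by Theorem \ref{thm:l_infty_of_coiso}.

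Next I would identify the $L_\infty$-algebra. Applying Theorem \ref{thm:l_infty_of_coiso} with $S=M$ (hence $NS=0$) gives $\mathfrak{a}=\Gamma\big(S(\mathcal{A}^*)\otimes S(\mathcal{B}[-1])[1]\big)$; unwinding the degree conventions — $\mathcal{A}$ is concentrated in degree $-1$, so $\mathcal{A}^*$ becomes the ungraded bundle $A^*$, and the shifts $[-1]$ and $[1]$ on the $\mathcal{B}$-factor cancel — this is exactly $\mathfrak{a}=\Gamma(S(A^*)\otimes S(B))$, with degree-$0$ part $\mathfrak{a}'_0=\bar{\mathfrak{a}}_0=\Gamma(A^*\otimes B)$. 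Since $A$ is Dirac, $\mathcal{A}$ is of coisotropic type, so by the second half of Theorem \ref{thm:l_infty_of_coiso} the $L_\infty$-algebra $\mathfrak{a}_Q^P$ is flat, and hence so is its $L_\infty$-subalgebra $\bar{\mathfrak{a}}_Q^P$ of Corollary \ref{cor:coiso_deform_perfect} (where $\bar{\mathfrak{a}}$ is the subspace with $\bar{\mathfrak{a}}_0=\mathfrak{a}'_0$ and $\bar{\mathfrak{a}}_k=\mathfrak{a}_k$ otherwise).

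Then I would invoke the deformation correspondence. Because $S=M$ the normal bundle $NS$ is trivial, so the analyticity hypothesis of Corollary \ref{cor:coiso_deform_perfect} is automatic (cf.\ the remark following Theorem \ref{thm:coiso_deform}), and that corollary gives a genuine $1$--$1$ correspondence between deformations of $\mathcal{A}$ as a submanifold of coisotropic type and Maurer-Cartan elements of $\bar{\mathfrak{a}}_Q^P$. With $NS=0$, a deformation $(\sigma,\psi)$ has $\sigma=0$, so it is simply an element $\phi\in\Gamma(A^*\otimes B)=\bar{\mathfrak{a}}_0$, and its graph is the subbundle $\mathcal{A}_{0,\phi}\subset\mathcal{E}_{-1}$ corresponding via the pseudo-inner product to the graph of $\phi\colon A\to B$ in $E$. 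Applying Lemma \ref{lem:dirac_q} to this subbundle (with the same $Q=\pi+X_\Theta$) identifies ``$\mathcal{A}_{0,\phi}$ is of coisotropic type'' with ``$\mathrm{graph}(\phi)$ is a Dirac structure''. Combining the two correspondences yields the claim: $\mathrm{graph}(\phi)$ is a Dirac structure if and only if $\phi$ is a Maurer-Cartan element of $\bar{\mathfrak{a}}$.

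The main obstacle is not conceptual but the careful translation across the dictionary: verifying that the abstract $\mathfrak{a}$ and $\bar{\mathfrak{a}}_0$ of Section \ref{sec:deform}, with their suspensions and the genuine $k_0=-1$ subtlety that forced the passage to $\bar{\mathfrak{a}}$ in the first place, reduce exactly to $\Gamma(S(A^*)\otimes S(B))$ and $\Gamma(A^*\otimes B)$; and checking that the ``graph of $\psi$ over $S_\sigma$'' of Theorem \ref{thm:coiso_deform}, in the special case $S_\sigma=M$, is literally the pseudo-inner-product image of $\mathrm{graph}(\phi\colon A\to B)$, so that Lemma \ref{lem:dirac_q} can be reused verbatim on the deformed subbundle. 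One also notes in passing that $\mathcal{E}_{-1}=E^*[1]$ is a vector bundle over $M$ of constant even rank, hence locally trivial over $\mathcal{A}$, which is what makes the global-frame computation of $N^\bot\mathcal{A}_{0,\phi}$ underlying Lemma \ref{lem:maclaurin_expansion} and Theorem \ref{thm:coiso_deform} legitimate here.
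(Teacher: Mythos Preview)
Your proposal is correct and follows exactly the route the paper takes: the theorem is stated immediately after Lemma \ref{lem:dirac_q} with the one-line justification ``This allows us to apply Theorem \ref{thm:l_infty_of_coiso} and Corollary \ref{cor:coiso_deform_perfect}'', and your write-up is a faithful unpacking of precisely that application, including the observation that $S=M$ kills both $NS$ and the analyticity hypothesis. The only cosmetic point is the sign: Theorem \ref{thm:coiso_deform} and Corollary \ref{cor:coiso_deform_perfect} phrase the correspondence in terms of $-(\sigma,\phi)$, whereas the present statement drops the sign on $\phi$; you may want to note this harmless convention change when you spell things out.
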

\par\noindent Since $\pi+X_\Theta$ is a blended homological vector field, Theorem \ref{thm:cs} is not applicable to control the simultaneous deformations of a Courant algebroid $E$ and its Dirac structure $A$. Besides, a Dirac structure $A$ is over the whole base manifold, the Maurer-Cartan equation of $\bar{\mathfrak{a}}$ is convergent without requiring any analyticity condition.

\end{document}